\tikzstyle{max}=[shape=rectangle,draw,inner sep=0pt,minimum size=6mm,thick]
\tikzstyle{ran}=[shape=circle,draw,inner sep=0pt,minimum size=6mm,thick]
\tikzstyle{tran}=[thick,draw,->,>=stealth]
\newcommand{\Zset}{\mathbb{Z}}
\newcommand{\C}{\mathcal{C}}
\newcommand{\calO}{\mathcal{O}}
\newcommand{\Nset}{\mathbb{N}}
\newcommand{\eps}{\varepsilon}
\renewcommand{\rho}{\varrho}
\newcommand{\minec}{\textit{min-ec}}
\newcommand{\ratio}{\mathit{rat}}
\newcommand{\minpath}{\mathit{MinPath}}
\newcommand{\avoid}{\mathit{Avoid}}
\newcommand{\reach}{\mathit{Reach}}
\newcommand{\minpathr}{\mathit{MinPathReach}}
\newcommand{\cmax}{\cost_{\max}}
\newcommand{\size}[1]{||#1||}
\newcommand{\B}{\mathcal{B}}
\newcommand{\Ca}{\mathit{cap}}
\newcommand{\Val}{\mathit{Val}}
\newcommand{\cost}{\mathit{c}}
\newcommand{\finmem}{\mu}
\newcommand{\counting}{\kappa}
\newcommand{\run}[2]{\mathit{run}(#1,#2)}
\newcommand{\Run}{\mathit{Run}}
\newcommand{\mcs}{\mathit{MCS}}
\newcommand{\mcf}{\mathit{MCF}}
\newcommand{\T}{\textit{T}}
\newcommand{\genTran}[2]{%
    {}\mathchoice%
    {\stackrel{#1}{#2}}
    {\mathop {\smash{#2}}\limits^{\vrule width 0pt height 0pt depth 4pt\smash{#1}}}
    {\stackrel{#1}{#2}}
    {\stackrel{#1}{#2}}
{}}
\newcommand{\tran}[1]{\genTran{#1}{\rightarrow}}
\newcommand{\bbtran}[1]{\genTran{#1}{\mapsto}}
\newcommand{\len}[1]{len(#1)}
\newcommand{\NP}{\textbf{NP}}
\newcommand{\coNP}{\textbf{co-NP}}
\newcommand{\PTIME}{\textbf{P}}
\newcommand{\ifApp}[2]%
{\ifthenelse{\isundefined{\showappendix}}{#2}{#1}}
\newcommand{\MC}{\mathit{MC}}
\newcommand{\MP}{\mathit{MP}}
\newcommand{\Mem}{\mathit{Mem}}
\newcommand{\CAct}{\mathit{Act}}
\newcommand{\dec}{\mathit{dec}}
\newcommand{\reset}{\mathit{reset}}
\newcommand{\noc}{\mathit{noc}}
\newcommand{\cpath}{\mathit{CPath}}
\newcommand{\ctuple}{\mathit{CTuple}}
\newcommand{\rts}{$r$-$t$-segment }
\newcommand{\dc}{\mathit{dc}}
\newcommand{\rank}{\mathit{rank}}
\newcommand{\last}{\mathit{last}}
\newcommand{\minsafe}{\mathit{MS}}
\newcommand{\ec}{\mathit{ec}}
\newcommand{\plusca}{\oplus_{\Ca}}
\newcommand{\theoremlike}[2]{\par\medskip\penalty-250%
{{\bfseries\noindent
#2 \ref{#1}.}}\it}
\newcommand{\thmhelperpre}[2]{\theoremlike{#1}{#2}}
\newcommand{\thmhelperpost}{\par\medskip}
\newenvironment{reflemma}[1]{\thmhelperpre{#1}{Lemma}}{\thmhelperpost}
\title{Minimizing Running Costs in Consumption Systems}
 \author{
 Tom\'a\v{s} Br\'{a}zdil  \and
 David Kla\v{s}ka  \and
 Anton\'in Ku\v{c}era \and
 Petr Novotn\'y
 }
 \authorrunning{
 Br\'{a}zdil \and
 Kla\v{s}ka \and
 Ku\v{c}era \and
 Novotn\'{y}
 }
 \institute{
 Faculty of Informatics, Masaryk University, Brno, Czech Republic\\
 }
\begin{document}

\maketitle

\begin{abstract}
A standard approach to optimizing long-run running costs of discrete 
systems is based on minimizing the \emph{mean-payoff}, i.e., the long-run
average amount of resources (``energy'') consumed per transition. However, 
this approach inherently assumes that the energy source has an 
unbounded capacity, which is not always realistic. For example,
an autonomous robotic device has a battery of finite capacity that
has to be recharged periodically, and the total amount of energy consumed
between two successive charging cycles is bounded by the capacity.
Hence, a controller minimizing the mean-payoff must obey this 
restriction.
In this paper we study the controller synthesis problem for
\emph{consumption systems} with a finite battery capacity, where the
task of the controller is to minimize the mean-payoff while preserving
the functionality of the system encoded by a given linear-time
property. We show that an optimal controller always exists, and it may
either need only finite memory or require infinite memory (it
is decidable in polynomial time which of the two cases
holds). Further, we show how to compute an effective description of an
optimal controller in polynomial time.  Finally, we consider the limit
values achievable by larger and larger battery capacity, show that
these values are computable in polynomial time, and we also analyze
the corresponding rate of convergence.  To the best of our knowledge,
these are the first results about optimizing the long-run running
costs in systems with bounded energy stores.
\end{abstract}

\section{Introduction}
\label{sec-intro}

A standard tool for modelling and analyzing the long-run average
running costs in discrete systems is \emph{mean-payoff}, i.e., the
average amount of resources (or ``energy'') consumed per
transition. More precisely, a system is modeled as a finite directed
graph~$\C$, where the set of states $S$ corresponds to configurations,
and transitions model the discrete computational steps. Each
transition is labeled by a non-negative integer specifying the amount
of energy consumed by a given transition. 
Then, to every run $\varrho$ in~$\C$, one can assign the associated 
\emph{mean-payoff}, which is the limit of average energy 
consumption per transition computed for longer and longer prefixes 
of~$\varrho$. A basic algorithmic task is to find a suitable 
\emph{controller} for a given system which minimizes the mean-payoff.
Recently, the problem has been generalized by requiring that the
controller should also achieve a given \emph{linear time property}
$\varphi$, i.e., the run produced by a controller should satisfy
$\varphi$ while minimizing the mean-payoff (see, e.g., 
\cite{CHJ:MP-parity-games}).
This is motivated by the fact that the system is usually required to
achieve some functionality, and not just ``run'' with minimal 
average costs. 

Note that in the above approach, it is inherently assumed that all
transitions are always enabled, i.e., the amount of energy consumed by
a transition is always available. In this paper, we study the long-run
average running costs in systems where the energy stores 
(``tanks'' or  ``batteries'') 
have a \emph{finite} capacity~$\Ca \in \Nset$. As before, the energy 
stored in the battery is consumed by performing transitions, but if the
amount of energy currently stored in the battery is smaller than the
amount of energy required by a given transition, then the transition
is disabled. From time to time, the battery must be reloaded,
which is possible only in certain situations (e.g., when visiting 
a petrol station).
These restrictions are directly reflected in our model, where 
some states of $\C$ are declared as \emph{reload states}, and the
run produced by a controller must be \mbox{\emph{$\Ca$-bounded}}, i.e.,
the total amount of energy consumed between two successive visits to 
reload states does not exceed~$\Ca$.  
\smallskip
 
\noindent
\textbf{The main results} of this paper can be summarized as follows.
Let $\C$ be a system (with a given subset of reload states) and 
$\varphi$ a linear-time property encoded as a non-deterministic 
B\"{u}chi automaton.
\begin{itemize}
\item[(A)] 
  We show that for a given capacity $\Ca \in \Nset$ and a given state $s$ of 
  $\C$, there exists a  controller $\mu$ \emph{optimal} for $s$ which 
  produces a $\Ca$-bounded run satisfying 
  $\varphi$ while minimizing the mean payoff. Further, we prove that
  there is a dichotomy in the structural complexity of~$\mu$, i.e., one 
  of the following possibilities holds:
  \begin{itemize}
  \item The controller $\mu$ can be constructed so that it has 
     finitely many memory elements and
     can be compactly represented as a \emph{counting controller}
     $\kappa$ which is computable in time polynomial in the size of
     $\C$ and $\Ca$ (all integer constants are encoded in \emph{binary}).
  \item The controller $\mu$ \emph{requires} infinite memory (i.e., 
     every optimal controller has infinite memory) and there exists
     an optimal \emph{advancing controller} $\pi$ which admits a finite
     description computable in time
     polynomial in the size of $\C$ and $\Ca$.
  \end{itemize}
  Further, we show that it is decidable in polynomial time which of
  the two possibilities holds.
\item[(B)] For every state $s$ of $\C$, we consider its \emph{limit value},
  which is the \emph{inf} of all mean-payoffs achievable by controllers   
  for larger and larger battery capacity. We show that the limit value is
  computable in polynomial time. Further, we show that the problem 
  whether the limit value is achievable by some \emph{fixed} finite battery 
  capacity is decidable in polynomial time. If it is the case,
  we give an explicit upper bound for~$\Ca$; and if not, we give an upper bound
  for the difference between the limit value and the best mean-payoff
  achievable for a given capacity~$\Ca$. 
\end{itemize}

\noindent
Technically, the most difficult part is~(A), where we need to analyze
the structure of optimal controllers and invent some tricks that
allow for compact representation and computation of optimal controllers.
Note that all constants are encoded in binary, and hence we cannot afford
to construct any ``unfoldings'' of $\C$ where the current
battery status (i.e., an integer between $0$ and $\Ca$) is 
explicitly represented, because such an unfolding is exponentially
larger than the problem instance. This is overcome by non-trivial
insights into the structure of optimal controllers.
\smallskip

\noindent
\textbf{Previous and related work.} A combination of mean-payoff and
linear-time (parity) objectives has been first studied in 
\cite{CHJ:MP-parity-games} for
2-player games. It has been shown that optimal strategies exist
in such games, but they may require infinite memory. Further,
the values can be computed in time which is pseudo-polynomial in the size
of the game and exponential in the number of priorities. 
Another closely related formalisms are \emph{energy games}
and \emph{one-counter games}, where each transition can both
increase and decrease the amount of energy, and the basic task
of the controller is to avoid the situation when the battery 
is empty. Energy games with parity objectives have been considered
in \cite{CHD:energy-games}. In these games,
the controller also needs to satisfy a given parity condition
apart of avoiding zero. Polynomial-time algorithms for certain subclasses of
``pure'' energy games (with zero avoidance objective only) have recently
been designed in \cite{CHKN:energy-games-polynomial}. Energy games with capacity constraints were studied in \cite{FJLS:multi-energy-games}. Here it was shown, that deciding whether a given one-player energy game admits a run along which the accumulated reward stays between 0 and a given positive capacity is already an NP-hard problem. 
\emph{One-counter Markov decision processes} and \emph{one-counter
  stochastic games}, where the counter may change at most by one in
each transition, have been studied in \cite{BBEKW:OC-MDP,BBE:OC-games}
for the objective of \emph{zero reachability}, which is dual to zero
avoidance.  It has been shown that for one-counter MDPs (both
maximizing and minimizing), the existence of a controller that reaches
zero with probability one is in \PTIME. If such a controller exists,
it is computable in polynomial time. For one-counter stochastic games,
it was shown that the same problem is in $\NP\cap\coNP$. In
\cite{BKNW:OC-MDP-term-time}, it was shown how to compute an 
\mbox{$\varepsilon$-optimal} controller minimizing the expected number 
of transitions needed to visit zero in one-counter MDPs. 
Another related model with only one counter are \emph{energy Markov decision
  processes} \cite{CHD:energy-MDPs}, where the counter updates are
arbitrary integers encoded in binary, and the controller aims at
maximizing the probability of all runs that avoid visiting zero and
satisfy a given parity condition. The main result of
\cite{CHD:energy-MDPs} says that the existence of a controller such
that the probability of all runs satisfying the above condition is
equal to one for a sufficiently large initial counter value is in
$\NP\cap\coNP$. Yet another related model are \emph{solvency games}
\cite{BKSV:Solvency-games}, which can be seen as rather special
one-counter Markov decision processes (with counter updates encoded in
binary). The questions studied in \cite{BKSV:Solvency-games} concern
the structure of an optimal controller for maximizing the probability
of all runs that avoid visiting negative values, which is closely
related to zero avoidance. 

There are also results about systems with more than one counter 
(resource). Examples include games over vector addition systems with
states \cite{BJK:eVASS-games}, \emph{multiweighted energy games}
\cite{FJLS:multi-energy-games,BFLMS:weighted-automata-inf-runs}, 
\emph{generalized energy games} 
\cite{CHDHR:energy-mean-payoff}, \emph{consumption
games} \cite{BCKN:consumption-games}, etc. We refer to 
\cite{Kucera:multicounter-games} for a more detailed overview.

\section{Preliminaries}
\label{sec-defs}
The sets of all integers, positive integers, and non-negative integers
are denoted by $\Zset$, $\Nset$, and $\Nset_0$, respectively.  Given a
set $A$, we use $|A|$ to denote the cardinality of $A$.  The encoding
size of a given object $B$ is denoted by $\size B$.  In particular,
all integer numbers are encoded in \emph{binary}, unless otherwise
stated. The $i$-th component of a vector (or tuple) $v$ is denoted 
by $v[i]$.

A \emph{labelled graph} is a tuple $G = (V,\tran{},L,\ell)$ where $V$ is
a non-empty finite set of \emph{vertices}, ${\tran{}} \subseteq V \times V$ 
is a set of \emph{edges}, $L$ is a non-empty finite set of \emph{labels},
and $\ell$ is a function which to every edge assigns a label of~$L$.
We write $s\tran{a} t$ if $s\tran{} t$ and $a$ is the label of~$(s,t)$. 

A \emph{finite path} in $G$ of \emph{length}~$n \in \Nset_0$ is a
finite sequence $\alpha \equiv v_0 \ldots v_n$ of vertices such that
$v_i \tran{} v_{i+1}$ for all $0 \leq i < n$. The length of $\alpha$
is denoted by $\len{\alpha}$, and the label of $v_i \tran{} v_{i+1}$
is denoted by $a_i$.
An \emph{infinite path} (or \emph{run}) in $G$ is an infinite sequence
of vertices $\varrho$ such that every finite prefix of $\varrho$ is a
finite path in~$G$. Finite paths and runs in $G$ are also written as
sequences of the form $v_0 \tran{a_0} v_1 \tran{a_1} v_2 \tran{a_2}
\cdots$.  Given a finite or infinite path $\varrho \equiv v_0\, v_1
\ldots$ and $i\in \Nset_0$, we use $\varrho(i)$ to denote the $i$-th
vertex $v_i$ of $\varrho$, and $\varrho_{\leq i}$ to denote the prefix $v_0
\ldots v_i$ of $\varrho$ of length~$i$.

A finite path $\alpha \equiv v_0 \ldots v_n$ in $G$ is a \emph{cycle}
if $n\geq 1$ and $v_0 = v_n$, and a \emph{simple cycle} if it is a cycle and $v_{i}
\neq v_j$ for all $0\leq i < j < n$. Given a finite path $\alpha
\equiv v_0 \ldots v_n$ and a finite or infinite path $\varrho \equiv
u_0\,u_1\ldots$ such that $v_n = u_0$, we use $\alpha\cdot\varrho$ to
denote the \emph{concatenation} of $\alpha$ and $\rho$, i.e., the path
$v_0 \ldots v_n\,u_1 \, u_2\ldots$ Further, if $\alpha$ is a cycle, we
denote by $\alpha^{\omega}$ the infinite path
$\alpha\cdot\alpha\cdot\alpha\cdots$.

In our next definition, we introduce consumption systems that have
been informally described in Section~\ref{sec-intro}. Recall that
an optimal controller for a consumption system should minimize
the mean-payoff of a $\Ca$-bounded run and satisfy a given linear-time
property~$\varphi$ (encoded by a non-deterministic B\"{u}chi 
automaton $\B$). For technical convenience, we assume that
$\B$ has already been multiplied with the considered consumption 
system (i.e., the synchronized product has already been constructed\footnote{It will become clear later that $\B$ being non-deterministic is not an obstacle here, since we work in a non-stochastic one-player setting.}).
Technically, we declare some states in consumption systems as 
accepting and require that a \mbox{$\Ca$-bounded} run visits an accepting
state infinitely often. 

\begin{definition}
\label{def-CS}
A \emph{consumption system} is a tuple $\C = (S,\tran{},c,R,F)$ where $S$
is a finite non-empty set of \emph{states}, $\tran{} \subseteq S
\times S$ is a \emph{transition relation}, $c$ is a function
assigning a non-negative integer \emph{cost} to every transition,
$R\subseteq S$ is a set of \emph{reload states}, and $F\subseteq S$
a non-empty set of \emph{accepting states}.  
We assume that $\tran{}$ is \emph{total}, i.e.,
for every $s \in S$ there is some $t\in S$ such that $s \tran{} t$.
\end{definition}
The encoding size of $\C$ is denoted by $\size \C$ (transition costs
are encoded in binary). All notions defined for labelled graphs
naturally extend to consumption systems.

The \emph{total cost} of a given finite path $\alpha \equiv s_0\tran{c_0}
s_1\tran{c_1} \cdots \tran{c_n} s_{n+1}$ is defined as
$\cost(\alpha)=\sum_{i=0}^n c_i$, and the \emph{mean
  cost} of $\alpha$ as $\MC(\alpha)=c(\alpha)/(n{+}1)$.
Further, we define the \emph{end cost} of $\alpha$ as the total cost 
of the longest suffix $s_{i}\tran{c_{i}} \cdots \tran{c_n} s_{n+1}$ of 
$\alpha$ such that $s_{i+1},\ldots,s_{n+1}\not\in R$ (intuitively, the end cost
of $\alpha$ is the total amount of resources consumed since the last 
reload).  

Let $\Ca \in \Nset$. We say that a finite or
infinite path $\varrho \equiv s_0\tran{c_0} s_1\tran{c_1} s_2\tran{c_2}\cdots$
is \mbox{\emph{$\Ca$-bounded}} if the end cost of every finite prefix of
$\varrho$ is bounded by $\Ca$ (intuitively, this means that the total  amount
of resources consumed between two consecutive visits to reload states
in $\varrho$ is bounded by $\Ca$). Further, we say a run $\varrho$ in $\C$
is \emph{accepting} if $\varrho(i) \in F$ for infinitely many $i \in \Nset$.
For every run $\varrho$ in $\C$
we define
\[
   \Val^{\Ca}_{\C}(\varrho) = 
     \begin{cases}
       \limsup_{i \rightarrow \infty} \MC(\varrho_{\leq i})%
                & \mbox{if $\rho$ is $\Ca$-bounded and accepting};\\
       \infty   & \mbox{otherwise.}
     \end{cases}
\]
The \emph{$\Ca$-value} of a given state $s \in S$ is defined by
\[
\Val^{\Ca}_{\C}(s) = \inf_{\varrho\in \Run(s)} \Val^{\Ca}_{\C}(\varrho)
\]
where $\Run(s)$ is the set of all runs in $\C$ initiated in~$s$.
Intuitively, $\Val^{\Ca}_{\C}(s)$ is the minimal mean cost of a
$\Ca$-bounded accepting run initiated in~$s$. The \emph{limit value}
of~$s$ is defined by $\Val_{\C}(s)=\lim_{\Ca\rightarrow \infty} \Val^{\Ca}_{\C}(s)$.

\begin{definition}
Let $\C = (S,\tran{},c,R,F)$ be a consumption system. 
A \emph{controller} for $\C$ is a tuple $\finmem=(M,\sigma_n,\sigma_u,m_0)$
where $M$ is a set of \emph{memory elements}, $\sigma_n : S
\times M\rightarrow S$ is a \emph{next function} satisfying $s\tran{}
\sigma_n(s,m)$ for every $(s,m)\in S\times M$, $\sigma_u: S \times
M \rightarrow M$ is an \emph{update function}, and $m_0$ is an 
\emph{initial memory element}. If $M$ is finite, we say that
$\finmem$ is a \emph{finite-memory} controller (FMC).
\end{definition}

\noindent
For every finite path $\alpha=s_0\dots s_n$ in $\C$, we use
$\hat{\sigma}_u(\alpha)$ to denote the unique memory element
``entered'' by $\mu$ after reading~$\alpha$. Formally,
$\hat{\sigma}_u(\alpha)$ is defined inductively by
$\hat{\sigma}_u(s_0)=\sigma_u(s_0,m_0)$, and $\hat{\sigma}_u(s_0
\ldots s_{n+1}) = \sigma_u(s_{n+1},\hat{\sigma}_u(s_0 \ldots
s_n))$. Observe that for every $s_0 \in S$, the controller $\mu$
determines a unique run $\run{\finmem}{s_0}$ defined as follows: the
initial state of $\run{\finmem}{s_0}$ is $s_0$, and if $s_0 \ldots
s_n$ is a prefix of $\run{\finmem}{s_0}$, then the next state is
$\sigma_n(s_n,\hat{\sigma}_u(s_0 \ldots s_n))$.  
The size of a given FMC $\mu$ is denoted by $\size{\mu}$ (in particular,
note that $\size{\mu} \geq |M|$).

\begin{definition}
Let $\C$ be a consumption system, $\mu$ a controller for $\C$, and
$\Ca \in \Nset$. We say that $\mu$ is \emph{$\Ca$-optimal} for 
a given state $s$ of $\C$ if 
$\Val^{\Ca}_{\C}(\run{\finmem}{s})  =   \Val^{\Ca}_{\C}(s)$. 
\end{definition}

\noindent
As we shall see, an optimal controller for~$s$ always exists, but it may
require infinite memory. Further, even if there is a FMC for~$s$, it
may require exponentially many memory elements. To see this, consider
the simple consumption system of Fig.~\ref{fig-exp-mem}. An optimal controller
for $s$ has to (repeatedly) perform $\Ca - 10$ visits to $t$ and then one 
visit to the only reload state $u$, which requires $\Ca - 10$ memory 
elements (recall that $\Ca$ is encoded in binary). Further examples of a non-trivial optimal behaviour can be found in Appendix~\ref{app-example}.

To overcome these difficulties, we introduce a special type 
of finite-memory controllers called \emph{counting controllers},
and a special type of infinite memory controllers called
\emph{advancing controllers}.

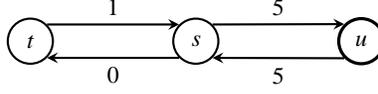
\begin{figure}[t]
\begin{center}
\begin{tikzpicture}[x=2.2cm,y=0cm,font=\small]
\node (t) at (0,0) [ran] {$t$};
\node (s) at (1,0) [ran] {$s$};
\node (u) at (2,0) [ran, very thick] {$u$};
\draw [tran] (t.45) -- node[above] {$1$} (s.135);
\draw [tran] (s.45) -- node[above] {$5$} (u.135);
\draw [tran] (s.225) -- node[below] {$0$} (t.315);
\draw [tran] (u.225) -- node[below] {$5$} (s.315);
\end{tikzpicture}
\end{center}
\caption{An optimal controller may require memory of exponential size. Here $R=\{u\}$ and $F=S$.}
\label{fig-exp-mem}
\end{figure}

Intuitively, memory elements of a counting controller are pairs
of the form $(r,d)$ where $r$ ranges over a finite set $\Mem$
and $d$ is a non-negative integer of a bounded size. The next
and update functions depend only on $r$ and the information whether
$d$ is zero or positive. The update function may change $(r,d)$ to some
$(r',d')$ where $d'$ is obtained from $d$ by performing a \emph{counter
action}, i.e., an instruction of the form $\dec$ (decrement), 
$\noc$ (no change), or $\reset(n)$ where $n \in \Nset$ (reset the 
value to $n$). Hence, counting controllers admit a compact representation
which utilizes the special structure of memory elements and the mentioned
restrictions.

\begin{definition}
Let $\C = (S,\tran{},c,R,F)$ be a consumption system. A 
\emph{counting controller} for $\C$ is a tuple 
$\counting=(\Mem,\sigma^+_n,\sigma^0_n,\CAct,\sigma^+_u,\sigma^0_u,r_0)$
where
\begin{itemize}
\item $\Mem$ is a finite set of \emph{basic memory elements},
\item $\sigma^+_n,\sigma^0_n: S \times \Mem \rightarrow S$ are 
   \emph{positive} and \emph{zero next functions} satisfying 
   $s \tran{} \sigma^+_n(s,r)$ and $s \tran{} \sigma^0_n(s,r)$ for 
   every $(s,r) \in S \times \Mem$, respectively,
\item $\CAct$ is a finite set of \emph{counter actions} (note that
   $\CAct$ may contain instructions of the form $\reset(n)$
   for different constants~$n$);
 \item $\sigma^{+}_u : S \times \Mem \rightarrow
   \Mem \times \CAct$ is a \emph{positive update function},
 \item $\sigma^{0}_u:S\times \mathit{Mem}\rightarrow
   \mathit{Mem}\times (\CAct\smallsetminus\{\dec\})$ is a \emph{zero update
     function},
\item $r_0\in \Mem$ is an initial basic memory element.
\end{itemize}
\end{definition}
\noindent
The encoding size of a counting controller $\kappa$ is denoted by 
$\size\kappa$, where all constants used in counter actions are 
encoded in binary.

The functionality of a counting controller 
$\kappa= (\mathit{Mem},\sigma^+_n,\sigma^0_n,\CAct,\sigma^+_u,\sigma^0_u,r_0)$
is determined by its associated  finite-memory controller 
$\mu_\kappa = (M,\sigma_n,\sigma_u,m_0)$ where
\begin{itemize}
\item $M=\Mem \times \{0,\ldots,k_{max}\}$ where 
  $k_{max}$ is the largest $n$ such that $\reset(n)\in \CAct$ (or $0$ if 
  no such $n$ exists); 
\item $\sigma_n(s,(r,d)) = \sigma_n^\odot(s,r)$, where $\odot$ is either
  $+$ or $0$ depending on whether $d>0$ or $d=0$, respectively;
\item $\sigma_u(s,(r,d)) = (r',d')$, where 
  $r'$ is the first component of $\sigma_u^\odot(s,r)$, and
  $d'$ is either $d$, $d-1$, or $n$, depending on whether the
  counter action in the second component of $\sigma_u^\odot(s,r)$
  is $\noc$, $\dec$, or $\reset(n)$, respectively (again,
  $\odot$ is either $+$ or $0$ depending on whether $d>0$ or $d=0$);
\item $m_0=(r_0,0)$.
\end{itemize}
Observe that $\size\kappa$ can be exponentially smaller than
$\size{\mu_\kappa}$. Slightly abusing our notation, we write
$\run{\counting}{s_0}$ instead of $\run{\mu_{\counting}}{s_0}$.

A counting controller $\kappa$ can be seen as a program for a
computational device with $\calO(\size{\Mem})$ control states and
$\log(k_{max})$ bits of memory needed to represent the bounded
counter. This device ``implements'' the functionality of
$\mu_\kappa$.

\begin{definition}
  Let $\C = (S,\tran{},c,R,F)$ be a consumption system and $s \in S$. 
  An \emph{advancing controller} for $\C$ and $s$ is a controller 
  $\pi$ for $\C$ such that $\run{\pi}{s}$ takes the form
  $\alpha \cdot \beta \cdot \gamma \cdot \beta^2
   \cdot \gamma \cdot \beta^4 \cdots \gamma \cdot \beta^{2^i} \cdots$ 
  where $\beta(0) \neq \beta(i)$ for all $0 < i < \len{\beta}$.
\end{definition}
The encoding size of an advancing controller $\pi$, denoted by 
$\size{\pi}$, is given by the total encoding size 
of $\alpha$, $\beta$, and $\gamma$. 
Typically, $\alpha$ and $\gamma$ will be of polynomial length,
but the length of $\beta$ is sometimes exponential and 
in this case we use a counting controller to represent 
$\beta$ compactly. Formally, we say that \emph{$\size{\pi}$ is polynomial}
in $\size{\C}$ and $\size{\Ca}$ if $\alpha$ and $\gamma$
are of polynomial length and there exists a counting controller
$\kappa[\beta]$ such that $\run{\kappa[\beta]}{\beta(0)} = \beta^\omega$
and $\size{\kappa}$ is polynomial in $\size{\C}$ and $\size{\Ca}$.

An advancing controller $\pi$ can be seen as a program for a
computational device equipped with two unbounded counters 
(the first counter maintains the current $i$ and the other counter is used
to count from $2^i$ to zero; if the device cannot implement the `$2^x$'
function directly, an auxiliary third counter may be needed). Also
note that the device can use the program of $\kappa[\beta]$ as a subroutine
to produce the finite path $\beta$ (and hence also finite paths of the form
$\beta^{2^i}$). Since $\beta(0) \neq \beta(i)$ for all $0 < i < \len{\beta}$,
the device simply simulates $\kappa[\beta]$ until revisiting 
$\beta(0)$.

\section{The Results}
\label{sec:results}

\noindent
In this section, we present the main results of our paper. Our first
theorem concerns the existence and computability of values and 
optimal controllers in consumption systems.

\vspace{-0.2cm}
\begin{theorem}
\label{thm:alg}
Let $\C$ be a consumption system, $\Ca \in \Nset$, and $s$ a state of~$\C$.
Then $\Val_\C^{\Ca}(s)$ is computable in polynomial time (i.e., 
in time polynomial in $\size\C$ and $\size{\Ca}$, where $\Ca$ is 
encoded in binary). Further, there exists an optimal controller 
for~$s$. The existence of an optimal \emph{finite memory} controller
for~$s$ is decidable in polynomial time. If there exists an optimal
FMC for $s$, then there also exists an optimal \emph{counting} controller 
for~$s$ computable in polynomial time. Otherwise, there exists an
optimal \emph{advancing} controller for~$s$ computable in polynomial time.
\end{theorem}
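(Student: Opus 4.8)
The plan is to reduce the whole problem to a combinatorial analysis of a polynomial-size \emph{reload graph} and to read optimal controllers off its structure, being careful that the capacity $\Ca$ enters only through thresholds and floor divisions and never as the size of a constructed object (this is what keeps everything polynomial in $\size\Ca$). First I would compute, for every pair of states $u,v$, the value $\dc(u,v)$, the least total cost of a path from $u$ to $v$ whose internal vertices avoid $R$; this is a shortest-path computation, hence polynomial and oblivious to the magnitude of $\Ca$. I then build the directed graph $H$ on the vertices $R\cup\{s\}$ with an edge $q\to q'$ iff $\dc(q,q')\le\Ca$, and I flag a vertex $q$ iff some reload-free closed walk from $q$ of cost $\le\Ca$ visits $F$. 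A state is \emph{$\Ca$-safe}, i.e.\ admits a $\Ca$-bounded accepting run, iff inside $H$ it reaches a flagged bottom strongly connected component, or it reaches a state lying on a zero-cost cycle through $F$; and such a run exists from $s$ iff $s$ reaches this structure in $H$. All of this is decidable in polynomial time, which already yields the existence of an optimal controller once the optimal value is pinned down. From here on I work inside the sub-system of states that are simultaneously reachable from $s$ within budget and $\Ca$-safe.

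The heart of the argument is a normal-form theorem: $\Val^\Ca_\C(s)$ equals the minimum mean cost $v_0$ of a \emph{reload cycle}, by which I mean a cycle in $H$ together with, for each of its edges, a reload-free realizing path carrying an optional distinguished simple sub-cycle $L$, where the iteration count of $L$ on an edge is capped at $\lfloor(\Ca-b)/c_L\rfloor$ ($b$ the base cost of that segment, $c_L$ the cost of $L$) and the mean cost is taken in the limit of maximal iteration (plus the degenerate value $0$ when a zero-cost cycle through $F$ is reachable). Computing $v_0$ is a variant of minimum-mean-cycle search on $H$, solvable in polynomial time. The upper bound is immediate: follow an optimal reload cycle. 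The lower bound is the crux — starting from an arbitrary $\Ca$-bounded accepting run, chop it at successive visits to $R$ and show by an averaging and cut-and-paste argument that its $\limsup$ mean cost is at least $v_0$; the delicate part is that surgery must preserve $\Ca$-boundedness, so insertions and deletions of cycles have to be confined inside single reload-free segments and may never overshoot the end-cost budget, and one must show that each segment can be normalized to the ``path $+$ one maximally iterated sub-cycle $+$ path'' shape without raising the mean.

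The dichotomy then falls out. Either $v_0$ is attained by a reload cycle whose realizing walk visits $F$ — equivalently an eventually periodic $\Ca$-bounded accepting run of mean cost $v_0$ exists — and this is the same minimum-mean-cycle computation on $H$ with the extra requirement that the walk meet $F$, hence polynomial-time checkable. In that case, iterating such a cycle and using a bounded counter ranging over $\{0,\dots,O(\Ca)\}$ to implement the capacity-capped sub-cycle iteration yields an optimal \emph{counting} controller of size polynomial in $\size\C$ and $\size\Ca$. Otherwise every optimal reload cycle is non-accepting, and no finite-memory controller is optimal: an FMC produces a run that is eventually periodic, so its mean cost is that of a single cycle, which is either non-accepting or of mean $\ge v_1>v_0$ for $v_1$ the best accepting reload cycle. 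Yet $v_0$ is still attained by an \emph{advancing} controller $\alpha\cdot\beta\cdot\gamma\cdot\beta^2\cdot\gamma\cdot\beta^4\cdots$: take $\beta$ to be a $\Ca$-bounded, $\omega$-iterable realization of an optimal non-accepting reload cycle, represented compactly by a counting controller $\kappa[\beta]$; take $\gamma$ a fixed polynomial-length reload-to-reload detour through $F$ (which exists precisely because the recurrent region is $\Ca$-safe); and $\alpha$ a polynomial-length $\Ca$-bounded path from $s$ into $\beta$, routed through reload states. The geometrically growing $\beta$-blocks drive the running mean cost to $\MC(\beta)=v_0$ while the infinitely many occurrences of $\gamma$ secure acceptance, so this controller is optimal and has polynomial encoding size.

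I expect the main obstacle to be the lower bound of the normal-form theorem: performing the cut-and-paste surgery on an arbitrary $\Ca$-bounded accepting run in a way that never violates the end-cost budget, while simultaneously establishing that only polynomially many ``shapes'' of recurrent behaviour survive — in particular that at most one distinguished sub-cycle per reload-free segment is ever needed and that its iteration count may be pushed to the capacity-imposed maximum without increasing the $\limsup$ mean cost. A secondary, more mechanical obstacle is arranging every shortest-path and minimum-mean-cycle subroutine so that the binary magnitude of $\Ca$ is consumed only as a threshold and inside floor divisions, so that all constructions stay polynomial in $\size\Ca$.
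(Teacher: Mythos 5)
Your overall strategy is the paper's own: pass to a reload-to-reload graph, prove a normal form in which each reload-free segment has the shape ``short path $+$ one maximally iterated short cycle $+$ short path'' (the paper's \emph{compact segments}, classified by polynomially many \emph{characteristics}), find an optimal cycle by a minimum-cycle-ratio search on a polynomial-size derived graph, and settle the FMC/advancing dichotomy according to whether the optimum is attained by an accepting cycle (the paper's $\mcf=\mcs$ versus $\mcs<\mcf$). You also correctly isolate the crux — the cut-and-paste surgery showing that one distinguished sub-cycle per segment suffices and that its iteration count can be pushed to the capacity cap without raising the mean — which the paper carries out via a rank-minimization argument on segment decompositions. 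Since you explicitly leave that step open, the technical heart of the theorem is not yet established in your write-up, but you have identified the right lemma.

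Beyond that acknowledged obstacle there is one concrete gap: zero-cost cycles. You treat only ``a zero-cost cycle through $F$'' as a degenerate case, but the paper shows (Lemma~\ref{prop:opt-runs-zero char}) that $\Val^\Ca_\C(s)=0$ as soon as \emph{any} zero-cost cycle through an admissible state exists, even when no zero-cost cycle meets $F$; in that case every optimal controller needs infinite memory and the optimal advancing controller iterates a zero-cost cycle that need not touch $R$ at all, so it is invisible in your graph $H$. Your framework also breaks formally there: the cap $\lfloor(\Ca-b)/c_L\rfloor$ is undefined for $c_L=0$, and your $v_0$ would come out strictly positive while the true value is $0$. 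The paper avoids this by disposing of zero-cost cycles in a separate preliminary step and only afterwards assuming every cycle has positive cost; your proposal needs the same carve-out. A secondary underspecification: safety and the ratio search cannot be read off $H$ as you define it — the $F$-flag must live on edges rather than vertices (a segment from $q$ to $q'$ through $F$ may cost more than $\dc(q,q')$), the relevant target is any reachable cycle containing a flagged edge rather than a flagged bottom SCC, and the cycle-ratio computation needs each edge annotated with the cost \emph{and length} of a segment optimal for each of the polynomially many characteristics, which is precisely what the paper's graphs $G_{\hat{q}}$ supply.
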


\noindent
Our second theorem concerns the limit values, achievability of limit values,
and the rate of convergence to limit values.

\vspace{-0.2cm}
\begin{theorem}
\label{thm:limit}
Let $\C$ be a consumption system, $\Ca \in \Nset$, and $s$ a state of~$\C$.
Then $\Val_\C(s)$ can be computed in polynomial time (i.e.,
in time polynomial in $\size\C$).

Further, the problem whether $\Val_{\C}(s)=\Val_{\C}^{\Ca}(s)$ for 
some sufficiently large $\Ca\in \Nset$ is decidable in polynomial time.
If the answer is positive, then 
$\Val_{\C}(s) = \Val_{\C}^{\Ca}(s)$ for every $\Ca\geq 3\cdot|S|\cdot\cmax$, where $\cmax$ is the maximal cost of a transition in $\C$. 
Otherwise, for every $ \Ca> 4\cdot|S|\cdot\cmax$ 
we have that
$
\Val_{\C}^{\Ca}(s)-\Val_{\C}(s)\leq ({3\cdot|S|\cdot\cmax})/({\Ca-4\cdot|S|\cdot \cmax}).
$

\end{theorem}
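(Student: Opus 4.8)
The plan is to reduce the analysis of limit values to the study of cycles in~$\C$, since the mean cost of a long run is governed by the cycles it traverses. First I would define, for each strongly connected component (SCC) that contains an accepting state, the minimal mean cost of a cycle that is ``eventually $\Ca$-traversable'' for large~$\Ca$; intuitively a cycle $\delta$ can be repeated forever within some capacity bound iff every reload-free infix of $\delta^\omega$ has bounded cost, which is automatic once $\delta$ passes through a reload state, and otherwise forces the whole SCC reachable from $\delta$ without reloading to have bounded cost. The candidate formula for $\Val_\C(s)$ is then the minimum, over accepting-state-containing SCCs reachable from $s$, of the least mean cost of a simple cycle in that SCC subject to the appropriate reload constraint; this minimum is computable in polynomial time by a minimum-mean-cycle computation (Karp's algorithm) applied to suitable subgraphs, and notably it does not depend on~$\Ca$, which is why $\Val_\C(s)$ is computable in time polynomial in $\size\C$ alone.

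Next I would prove the two inequalities that pin down the limit value. For the upper bound ($\Val_\C^\Ca(s) \le$ candidate value $+\,$ error term), I would exhibit an explicit $\Ca$-bounded accepting run: reach the optimal SCC via a simple path (cost $\le |S|\cdot\cmax$), insert reload visits as needed, then pump the optimal simple cycle, periodically taking a short detour to an accepting state (a detour of length $\le |S|$, hence cost $\le |S|\cdot\cmax$, taken once per many repetitions of the cycle). By taking the cycle-to-detour ratio to grow with~$\Ca$, the detour's contribution to the mean cost vanishes at rate $\calO(1/\Ca)$, yielding the stated bound $\Val_\C^\Ca(s)-\Val_\C(s)\le (3|S|\cmax)/(\Ca-4|S|\cmax)$ once $\Ca>4|S|\cmax$; the constants $3$ and $4$ come from bounding the lengths and costs of the prefix, the detour, and the reload segments. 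For the lower bound ($\Val_\C^\Ca(s)\ge$ candidate value for all~$\Ca$), I would argue that any $\Ca$-bounded accepting run, regardless of~$\Ca$, must visit an accepting state infinitely often and hence stay eventually inside one accepting SCC, and within that SCC its limsup mean cost is at least the minimum mean cycle value subject to the reload constraint — the $\Ca$-boundedness restriction can only raise the achievable mean cost, never lower it below the unrestricted minimum-mean-cycle value of the constrained subgraph.

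For the achievability dichotomy I would separate two cases according to whether the optimal cycle for some reachable accepting SCC passes through a reload state (or lies in a reload-free part whose total cost is small). If yes, the optimal run constructed above becomes genuinely $\Ca$-bounded for $\Ca\ge 3|S|\cmax$ — the prefix, the cycle, the accepting detour, and each reload segment all have cost at most $|S|\cmax$, and one checks $3|S|\cmax$ suffices — so $\Val_\C(s)=\Val_\C^\Ca(s)$ from that threshold on; this case is detected in polynomial time by checking, for each candidate SCC, whether the minimum-mean-cycle-through-a-reload (or a small reload-free witness) already realizes the limit value. If no — meaning the infimum over capacities is approached only by cycles that are reload-free and of unboundedly large total cost, so no fixed $\Ca$ attains it — then the second inequality above gives the convergence rate. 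The main obstacle I anticipate is the careful case analysis of reload-free segments: when a cycle avoids reload states entirely, its repeatability within capacity~$\Ca$ depends on the total reload-free cost accumulated along the whole portion of the run between consecutive reloads, not just on the cycle itself, so I must be precise about which reload-free subgraphs have their minimum-mean-cycle value count toward $\Val_\C(s)$ and argue that the relevant quantity stabilizes exactly at the claimed linear-in-$|S|\cmax$ thresholds.
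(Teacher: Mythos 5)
Your overall strategy coincides with the paper's: the paper characterizes $\Val_\C(s)$ as the minimum mean cost over \emph{safe} cycles (short cycles that either have zero cost and contain an accepting state, or whose initial state is inter-reachable with both a reload state and an accepting state), computes this via SCC decomposition and a minimum-mean-cycle algorithm, proves the upper bound by the same pump-and-detour construction $\alpha\cdot(\gamma\cdot\beta^k)^\omega$ with $k$ growing linearly in $\Ca$, and proves the lower bound by decomposing runs into simple cycles. Your candidate formula, your run construction, and your convergence-rate calculation all match this.

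There is, however, a genuine gap in your treatment of the achievability dichotomy, and it is precisely the hardest step of the paper's proof. You need the implication: \emph{if some finite capacity $\Ca$ attains $\Val_\C^\Ca(s)=\Val_\C(s)$, then the limit value is realized by a ``strongly safe'' cycle} (one that either has zero cost, or itself contains a reload state). You assert the contrapositive with the phrase ``so no fixed $\Ca$ attains it,'' but nothing in your argument rules out that some $\Ca$-bounded run with a more complicated structure -- not of the pump-a-single-simple-cycle form -- attains the limit exactly even though every reload-containing simple cycle has strictly larger mean cost. The paper closes this by first invoking the structural characterization of $\Ca$-optimal runs (they have the form $\alpha\cdot\beta\cdot\gamma\cdot\beta^2\cdots$ with $\beta$ either zero-cost or anchored at a reload state), then decomposing $\beta$ into simple cycles, selecting a sub-cycle $\delta$ that contains a reload state, and running an exchange argument: if $\MC(\delta)>\MC(\beta)$, then replacing $\beta$ by $\xi\cdot\xi'\cdot\xi\cdot\delta\cdot\xi'$ (where $\beta=\xi\cdot\delta\cdot\xi'$) strictly lowers the mean cost at the price of a larger capacity $\Ca'$, forcing $\Val_\C^{\Ca'}(s)<\Val_\C^\Ca(s)$ and hence $\Val_\C^\Ca(s)>\Val_\C(s)$, a contradiction. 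Without this (or an equivalent) argument your ``no'' branch is unjustified. A second, smaller imprecision: your condition ``lies in a reload-free part whose total cost is small'' must be ``has total cost exactly zero'' -- a reload-free cycle of any positive cost, however small, accumulates unbounded consumption under iteration and so cannot witness achievability for any finite~$\Ca$.
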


\noindent
The next subsections are devoted to the proofs of 
Theorems~\ref{thm:alg}~and~\ref{thm:limit}. Due to space constrains,
some proofs and algorithms have been shifted to Technical Appendix. 

\vspace{-0.3cm}
\subsection{A Proof of Theorem~\ref{thm:alg}}
\label{sec:main-thm-proof}

For the rest of this section, we fix a consumption system
$\C=(S,\rightarrow,c,R,F)$, a capacity $\Ca \in \Nset$, and an initial
state $s \in S$. 

An \emph{admissibility witness} for a state $q \in S$ is
a cycle $\gamma$ initiated in $q$ such that $\gamma$ contains an accepting
state and there is a $\Ca$-bounded run initiated in $s$ of the 
form $\alpha \cdot \gamma^\omega$. We say that $q \in S$
is \emph{admissible} if there is at least one admissibility witness
for~$q$.

Observe that if $\gamma$ is an admissibility witness for 
a reload state~$q$, then $\gamma$ can be freely ``inserted'' into 
any $\Ca$-bounded run of the
form $\xi \cdot \delta$ where $\delta(0) = q$ so that the run $\xi
\cdot \gamma \cdot \delta$ is again $\Ca$-bounded. Such
simple observations about admissibility witnesses are frequently 
used in our proof of Theorem~\ref{thm:alg}, which is obtained 
in several steps:

\begin{itemize}
\item[(1)] We show how to compute all states $t \in S$ such that
  $\Val_\C^\Ca(t)=\infty$.
  Note that if  $\Val_\C^\Ca(t)=\infty$, then \emph{every} 
  controller is optimal in~$t$. Hence, if $\Val_\C^\Ca(s)=\infty$,
  we are done. Otherwise, we remove all states with infinite value
  from $\C$ together with their adjacent transitions.
\item[(2)] We compute and remove all states $t \in S$
  that are not reachable from $s$ via a \mbox{$\Ca$-bounded} finite path.
  This ``cleaning'' procedure simplifies our considerations and 
  it can be performed in polynomial time.
\item[(3)] We show that  $\Val_\C^\Ca(s)=0$ iff 
  $\C$ contains a \emph{simple} cycle with zero total cost initiated
  in an admissible state 
  (such a cycle is called a \emph{zero-cost} cycle).
  Next, we show that if there is a zero-cost cycle $\beta$ containing 
  an accepting state, then there is an optimal FMC $\mu$ for $s$ of  
  \emph{polynomial} size such that $\run{\mu}{s} = \alpha\cdot\beta^\omega$. 
  Otherwise, \emph{every} optimal controller for~$s$ has infinite memory, 
  and we show how to compute finite paths $\alpha,\gamma$ of polynomial 
  length such that the ($\Ca$-bounded) run 
  \mbox{$\varrho \equiv \alpha\cdot\beta\cdot\gamma\cdot\beta^2\cdot\gamma
     \cdot\beta^4\cdots \gamma\cdot\beta^{2^i}\cdots$} initiated in $s$
  satisfies $\Val_\C^\Ca(\varrho)=0$. Thus, the finite paths
  $\alpha$, $\beta$, and $\gamma$ represent an optimal advancing
  controller of polynomial size. 

  The existence of a zero-cost cycle (and the existence of a 
  zero-cost cycle that contains an accepting state) is decidable
  in polynomial time. If a zero-cost cycle exists, we are done.
  Otherwise, we proceed to the next step.
\item[(4)] Now we assume that $\C$ does not contain a zero-cost cycle. We 
  show that there exist 
  \begin{itemize}
  \item a $\Ca$-bounded cycle $\beta$ initiated in an admissible state
    such that $\MC(\beta) \leq
    \MC(\delta)$ for every $\Ca$-bounded cycle~$\delta$ initiated
    in an admissible state, and $\beta(0)
    \neq \beta(i)$ for all $0 < i < \len{\beta}$;
   \item a $\Ca$-bounded cycle $\hat{\beta}$ containing an accepting
     state such that $\MC(\hat{\beta}) \leq \MC(\hat{\delta})$ for
     every $\Ca$-bounded cycle $\hat{\delta}$ containing an accepting
     state.
  \end{itemize}
  We prove that $\Val_\C^\Ca(s) = \MC(\beta)$. Further, we show the
  following:
  \begin{itemize}
  \item If $\MC(\beta) = \MC(\hat{\beta})$, then there exists an optimal
     FMC $\mu$ for $s$ such that $\run{\mu}{s} = \alpha\cdot\hat{\beta}^\omega$,
     where $\alpha$ is a finite path of polynomial length.
     In general, $\len{\hat{\beta}}$ (and hence also $\size{\mu}$) 
     is \emph{exponential} in $\size{\C}$ and $\size{\Ca}$.
     We show how to compute a \emph{counting} controller $\kappa[\hat{\beta}]$ 
     of \emph{polynomial} size such that 
     $\run{\kappa[\hat{\beta}]}{\hat{\beta}(0)} = \hat{\beta}^\omega$. 
     Since $\alpha$ is a finite path of polynomial length,
     we also obtain a counting controller $\kappa$ of polynomial 
     size such that $\run{\kappa}{s} = \run{\mu}{s}$.
  \item If $\MC(\beta) < \MC(\hat{\beta})$, then \emph{every} optimal 
     controller for $s$ has infinite memory, and we show how to 
     efficiently compute finite paths $\alpha,\gamma$ of polynomial 
     length and a counting controller $\kappa[\beta]$ of polynomial size
     such that $\run{\kappa[\beta]}{\beta(0)} = \beta^\omega$ and
     the run 
     $\varrho \equiv \alpha\cdot\beta\cdot\gamma\cdot\beta^2\cdot\gamma
     \cdot\beta^4\cdots \gamma\cdot\beta^{2^i}\cdots$ initiated in $s$
     satisfies $\Val_\C^\Ca(\varrho)=\Val_\C^\Ca(s)$. Thus, we obtain
     an optimal advancing controller $\pi$ for~$s$ of polynomial size.
  \end{itemize}
\end{itemize}

\noindent
We start with step~(1).
\begin{lemma}
\label{lem:infinite-value}
  Let $t \in S$. The problem whether $\Val_\C^\Ca(t)=\infty$ is
  decidable in polynomial time.
\end{lemma}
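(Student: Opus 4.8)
The plan is to characterize when $\Val_\C^\Ca(t) = \infty$ purely in terms of reachability of a suitable ``good'' cycle, and then show that this characterization can be tested in polynomial time by a fixpoint/graph computation that never explicitly unfolds the battery counter. By definition, $\Val_\C^\Ca(t) < \infty$ iff there exists some $\Ca$-bounded accepting run from $t$; and since the state space is finite, such a run exists iff there is a $\Ca$-bounded finite path from $t$ to some state $q$ together with an admissibility witness for $q$ — i.e., a $\Ca$-bounded cycle through $q$ that contains an accepting state and that can be iterated forever while staying $\Ca$-bounded. The key simplification is that, because the cost accumulated since the last reload is reset to $0$ at every reload state, a cycle $\gamma$ through $q$ can be repeated forever in a $\Ca$-bounded fashion precisely when $\gamma$ passes through a reload state and every maximal reload-free segment of $\gamma$ has total cost at most $\Ca$ (if $\gamma$ contains no reload state, repeating it forces unbounded end cost unless its total cost is $0$, but a cost-$0$ cycle is trivially fine; one must handle that boundary case).

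First I would make precise the ``reachable with bounded end cost'' relation: for each pair $(p, x)$ where $p \in S$ and $x \in \{0, 1, \dots, \Ca\}$ is a possible current end cost, whether $(p,x)$ is reachable from $(t,0)$. Naively this is the exponential unfolding we are forbidden to build, so instead I would track, for each state $p$, the \emph{minimal} end cost $e(p)$ with which $p$ is reachable from $t$ along a $\Ca$-bounded path; this is a shortest-path-like computation (a single-source shortest path in the graph where reload states reset the accumulated weight to $0$, capped at $\Ca$) and is computable in polynomial time by a Bellman–Ford-style relaxation, since $e(p)$ only ever takes values that are sums of at most $|S|$ transition costs or are ``$\infty$''. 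Then $p$ is $\Ca$-reachable from $t$ iff $e(p) \le \Ca$; this realizes step (2)'s cleaning as a byproduct.

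Next I would reduce admissibility of a state $q$ to a question about the subgraph $\C_q$ obtained by keeping only states $\Ca$-reachable from $q$ with end cost $0$ (equivalently, reachable through a reload state, after a reset). Inside that subgraph, $q$ is admissible iff there is a bottom strongly connected component (or more simply, a cycle) that (i) contains an accepting state, (ii) contains a reload state or consists of a zero-cost cycle, and (iii) every reload-free segment along it has cost $\le \Ca$ — and all of these are polynomial-time checkable by restricting $\to$ to transitions that can appear on a $\Ca$-bounded path and computing SCCs. Finally, $\Val_\C^\Ca(t) < \infty$ iff some admissible state $q$ is $\Ca$-reachable from $t$; negate to get the lemma. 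The whole procedure is a constant number of shortest-path and SCC computations on graphs of size $\calO(\size\C)$, hence polynomial even with binary-encoded costs and $\Ca$.

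The main obstacle I expect is arguing correctness of the ``minimal end cost'' abstraction: one has to show that whether an infinite $\Ca$-bounded accepting continuation exists from a state $p$ depends only on $p$ (and not on the precise current battery level), so that reaching $p$ with the \emph{smallest possible} end cost loses no generality — intuitively true because a smaller end cost only gives more freedom downstream, but it needs a clean monotonicity lemma. A secondary subtlety is the no-reload-state cycle case: a recurrent behavior that never touches a reload state is $\Ca$-bounded only if its total accumulated cost stays bounded, which for a run forced to loop forever in finite state space means it eventually traverses only zero-cost transitions; this has to be folded into condition (ii) above so the characterization is exactly right. Once these two points are nailed down, the polynomial-time claim is immediate from the standard graph algorithms invoked.
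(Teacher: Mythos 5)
Your overall route is the same as the paper's: reduce $\Val_\C^\Ca(t)<\infty$ to ``some admissible state is $\Ca$-reachable from $t$'', compute $\Ca$-reachability by tracking the \emph{minimal end cost} with which each state can be reached (a Bellman--Ford-style relaxation with reset-at-reload and cap-at-$\Ca$, justified by the monotonicity observation that a smaller end cost only helps downstream), and then test admissibility of each reachable state via the existence of a suitable cycle. This matches the paper's chain of Lemmas on $\Ca$-bounded reachability, admissibility witnesses, and zero-cost cycles almost exactly, including the two-case split (recurrent behaviour either eventually uses only zero-cost transitions, yielding a zero-cost accepting cycle, or revisits a reload state infinitely often, yielding a $\Ca$-bounded accepting cycle on a reload state). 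One phrasing to fix: the existence of a $\Ca$-bounded accepting continuation from $p$ does \emph{not} depend only on $p$ --- it depends on the current end cost; what you actually need (and state a sentence later) is monotonicity in the end cost, which is why the minimal-end-cost abstraction is lossless.

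The one step that would fail as literally written is the admissibility check: you claim conditions (i)--(iii) are ``polynomial-time checkable by restricting $\to$ to transitions that can appear on a $\Ca$-bounded path and computing SCCs.'' Condition (iii) --- every reload-free segment of the cycle has cost at most $\Ca$ --- is a global constraint on segments, not a local constraint on transitions, so SCC membership in a transition-restricted graph does not certify it: an SCC may contain $q$, a reload state, and an accepting state, with every single transition individually usable, yet admit no single $\Ca$-bounded cycle through $q$ visiting both, because every way of stitching the legs together accumulates more than $\Ca$ between consecutive reloads. The correct repair is exactly the tool you already have: run the minimal-end-cost computation from each relevant source and chain three legs (reload state to target, target to reload state within the residual budget $\Ca$ minus the minimal end cost of the first leg, reload state back to the start), which is what the paper does in its auxiliary lemma on $\Ca$-bounded cycles through a reload state. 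With that substitution your argument goes through and coincides with the paper's proof.
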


\noindent
The next lemma implements step~(2).

\begin{lemma}
\label{lem-cap-bounded-path}
  Let $t \in S$. The existence of a $\Ca$-bounded path from
  $s$~to~$t$ is decidable in polynomial time. Further, an example
  of a $\Ca$-bounded path from $s$~to~$t$ (if it exists) is computable
  in polynomial time.
\end{lemma}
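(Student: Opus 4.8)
\noindent
The plan is to reduce the question to a shortest‑path computation in which the ``distance'' of a state is the least amount of resources that could have been consumed since the last reload on the way to it. The crucial point is that this quantity never exceeds $\Ca$, and hence is representable in $O(\log\Ca)$ bits, so we need not -- and, since $\Ca$ is written in binary, cannot afford to -- unfold $\C$ into a graph of size $\Theta(|S|\cdot\Ca)$ that records the exact battery level. This binary‑encoding obstacle is the only genuine difficulty; the rest is a routine specialization of standard shortest‑path machinery.

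\noindent
Formally, for $q\in S$ let $m(q)\in\{0,\dots,\Ca\}\cup\{\infty\}$ be the infimum of the end cost over all $\Ca$‑bounded finite paths from $s$ to $q$ (with $m(q)=\infty$ if there is none). Then a $\Ca$‑bounded path from $s$ to $t$ exists iff $m(t)<\infty$, so it suffices to compute $m$. One shows that $m$ is the least solution of the following monotone constraint system over the finite linear order $\{0,\dots,\Ca\}\cup\{\infty\}$: $m(s)\le 0$; for every transition $q\tran{c}q'$ with $q'\notin R$, the constraint $m(q')\le m(q)+c$ provided $m(q)+c\le\Ca$; and for every transition $q\tran{c}q'$ with $q'\in R$ and $m(q)<\infty$, the constraint $m(q')\le 0$. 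That $m$ is a solution follows directly from the definition of end cost, which resets precisely at reload states and does not charge the transition that enters one. That $m$ is the \emph{least} solution follows by reconstruction: for any $q$ with finite value, repeatedly moving to the transition that last lowered that value yields an acyclic chain (of length at most $|S|$) back to $s$, and a short induction along it shows that the end cost at every visited vertex equals its value, so the reconstructed path is genuinely $\Ca$‑bounded.

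\noindent
This least solution is computed by a Dijkstra‑style sweep: maintain tentative values (initially $0$ for $s$ and $\infty$ elsewhere), repeatedly extract a non‑finalized state of minimal tentative value, finalize it, and relax its outgoing transitions via the two rules above, recording a predecessor pointer whenever a value drops. The only deviation from textbook Dijkstra is that relaxing a transition into a reload state proposes the value $0$ rather than $m(q)+c$; since $0$ is the globally minimal value, this can never finalize a state too early, so the usual correctness argument for non‑negative edge weights still applies. There is one relaxation per transition, hence $O(|S|^2)$ in total, each an addition and a comparison on $O(\log\Ca)$‑bit integers, so the procedure runs in time polynomial in $\size\C$ and $\size\Ca$. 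Finally, if $m(t)<\infty$, tracing the predecessor pointers from $t$ back to $s$ outputs -- by the previous paragraph -- a concrete $\Ca$‑bounded path from $s$ to $t$ of length at most $|S|$, again in polynomial time.

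\noindent
The hard part, as noted, is purely the binary encoding of $\Ca$: the algorithm must never materialize the battery level. The remaining subtlety is to confirm that the ``reset'' at reload states does not interfere with Dijkstra's greedy order, which it does not, since that reset can only push a value down to the global minimum $0$.
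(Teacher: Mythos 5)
Your fixed-point characterization of the minimal end cost $m$ is essentially the paper's recurrence (with one minor modelling difference: the paper's $\plusca$ also charges the cost of the transition \emph{entering} a reload state against $\Ca$, whereas you do not), but the Dijkstra-style computation of its least solution is wrong, and this is a genuine gap. Dijkstra's greedy finalization is sound only because the labels of the prefixes of an optimal path are non-decreasing along that path; here the label is the end cost, which \emph{resets to $0$} at every reload state, so it is not monotone along a path. Concretely, take $S=\{s,a,r,t\}$, $R=\{r\}$, and transitions $s\tran{\Ca}a$, $a\tran{0}r$, $r\tran{0}a$, $a\tran{\Ca}t$. Any $\Ca$-bounded path from $s$ to $t$ must revisit $a$ after reloading at $r$ (e.g., $s\,a\,r\,a\,t$). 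Your sweep finalizes $a$ at tentative value $\Ca$ (the only finite label available at that moment), only afterwards discovers $r$ with label $0$, cannot lower the already-finalized $a$ to $0$, and therefore never successfully relaxes $a\tran{\Ca}t$; it reports $t$ as not $\Ca$-reachable. Your defence of the greedy order (``proposing $0$ can never finalize a state too early'') addresses the wrong danger: the problem is not that a reload state is finalized too early, but that an ordinary state is finalized at a suboptimal end cost before a cheaper route through a not-yet-discovered reload state appears.

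The same example undermines your reconstruction step: the witnessing path must revisit $a$, so witnesses are in general not simple and an ``acyclic predecessor chain of length at most $|S|$'' need not exist. The paper instead proves a length bound of $|S|^2$ on an end-cost-minimizing witness (Lemma~\ref{lem:cap-bounded-length}: cut out reload-free cycles and cycles between repeated reload states) and then computes $\minec^i(t)$ for all $i\le|S|^2$ by a length-indexed, Bellman--Ford-style recurrence, which is insensitive to the non-monotonicity of the end cost. Replacing your priority-queue sweep by such a round-based value iteration (with predecessor pointers indexed by the round in which a value was set) repairs the argument and is exactly what the paper does; the greedy order is the one ingredient of your proof that cannot be salvaged.
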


\noindent
We also need the following lemma which says that for every admissible
state, there is an efficiently computable admissibility witness.

\begin{lemma}
\label{lem:algorithm-admissible}
The problem whether a given $q \in S$ is admissible is decidable
in polynomial time. Further, if $q$ is admissible, then 
there are finite paths $\alpha,\gamma$ computable in polynomial time
such that $\alpha \cdot \gamma^\omega$ is a $\Ca$-bounded run initiated 
in $s$ and $\gamma$ is an admissibility witness for $q$ of length
at most $6\cdot|S|^2$.
\end{lemma}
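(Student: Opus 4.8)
The plan is to reduce admissibility of $q$ to polynomially many reachability queries, each answerable in polynomial time via Lemma~\ref{lem-cap-bounded-path} (and the procedure behind it) together with elementary shortest-path and SCC computations, and then to extract a short witness by cycle cutting. The starting point is a dichotomy on the witness cycle $\gamma$. If $\gamma$ contains no reload state, then along $\gamma^\omega$ the end cost is never reset, so $\Ca$-boundedness forces $c(\gamma)=0$; hence $\gamma$ is a zero-cost cycle through an accepting state. Since prepending a finite path and appending a zero-cost cycle both preserve $\Ca$-boundedness, $q$ admits such a witness iff $q$ is reachable from $s$ by an $\Ca$-bounded path and $q$ lies on a zero-cost cycle containing an accepting state; the former is decidable in polynomial time by Lemma~\ref{lem-cap-bounded-path}, the latter by computing the strongly connected components of the subgraph of zero-cost transitions. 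In the positive case one takes $\gamma$ to be a simple zero-cost path from $q$ to an accepting state followed by a simple zero-cost path back, giving $\len{\gamma}\le 2|S|$, and $\alpha$ to be any $\Ca$-bounded path from $s$ to $q$.

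If $\gamma$ does contain a reload state, I would split $\gamma^\omega$ at its visits to reload states. Then $\alpha\cdot\gamma^\omega$ is $\Ca$-bounded exactly when every maximal segment of $\gamma^\omega$ between consecutive reload visits — in particular the ``wrap-around'' segment running from $\gamma$'s last reload state through $q$ to $\gamma$'s first reload state — has cost at most $\Ca$, and, in addition, $e+b\le\Ca$, where $b$ is the cost of the segment of $\gamma$ from $q$ up to its first reload state and $e$ is the end cost of $\alpha$ (the only quantity through which $\alpha$ influences $\Ca$-boundedness). Writing $e$ as the cost of a reload-free path from the last reload state $r_0$ of $\alpha$ to $q$ (the case in which $\alpha$ has no reload state being handled separately), this makes admissibility via such a witness equivalent to the existence of reload states $r_0,r_1,r_m$ such that: $r_0$ is reachable from $s$ by an $\Ca$-bounded path; there are reload-free paths from $r_0$ to $q$, from $q$ to $r_1$, and from $r_m$ to $q$, of respective costs $a,b,b'$ with $a+b\le\Ca$ and $b'+b\le\Ca$; there is an $\Ca$-bounded path from $r_1$ to $r_m$; and at least one of these finitely many constituent paths visits an accepting state. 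Each ingredient is polynomial-time checkable — minimum-cost reload-free paths (also through a prescribed accepting state) by a shortest-path computation in the subgraph obtained by deleting the reload states other than the chosen endpoints; $\Ca$-bounded reachability by Lemma~\ref{lem-cap-bounded-path}; and $\Ca$-bounded reachability through an accepting state by the same procedure augmented with one bit recording whether $F$ has been visited — and there are only polynomially many triples $(r_0,r_1,r_m)$ and placements of the accepting state to inspect. Combined with the zero-cost case, this yields the claimed polynomial-time decision procedure.

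To bound $\len{\gamma}$ in the positive reload case, I would use that any $\Ca$-bounded path may be shortened, without increasing any cost and without destroying a prescribed visit to an accepting state or to the reload states that delimit its relevant segments, so that it decomposes into $\calO(|S|)$ segments, each running between two reload states (or between a designated endpoint and a reload state) and each simple, hence of length at most $|S|$. Applied to the $\Ca$-bounded middle path from $r_1$ to $r_m$ (routed through the accepting state when needed) this gives length at most $|S|^2+|S|$, and the two reload-free legs contribute at most $2|S|$, so $\gamma$ can be taken with $\len{\gamma}\le 6|S|^2$; the approach $\alpha$ is assembled from an $\Ca$-bounded path from $s$ to $r_0$ (Lemma~\ref{lem-cap-bounded-path}) followed by the computed reload-free leg from $r_0$ to $q$, and no length bound on $\alpha$ is needed.

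The main obstacle is the interaction of $\alpha$ and $\gamma$ at the common vertex $q$ when $q\notin R$: a naive concatenation of an $\Ca$-bounded approach with an $\Ca$-bounded periodic tail need not be $\Ca$-bounded, and it is the reload-delimited segment decomposition, the resulting junction inequalities, and their faithful reformulation in terms of reload-free paths and $\Ca$-bounded reachability (including the flexible placement of the accepting state) that form the technical core. Once this decomposition is in place, both the polynomial-time claim and the bound $\len{\gamma}\le 6|S|^2$ follow by routine cycle-cutting.
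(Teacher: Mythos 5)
Your argument is correct, and it follows the paper's overall strategy --- split on whether the witness cycle is forced to have zero cost, then reduce the reload case to polynomially many $\Ca$-bounded-reachability queries with minimal end cost --- but the decomposition in the reload case is genuinely different. The paper anchors everything at a single reload state $r$: it characterizes admissibility (beyond the zero-cost case) by the existence of two $\Ca$-bounded cycles $\theta,\delta$ initiated in $r$, with $\theta$ passing through $q$ and $\delta$ through an accepting state, found via an auxiliary lemma on $\Ca$-bounded cycles through a prescribed target set; the witness is assembled as $\theta_2\cdot\delta\cdot\theta_1$, and the junction problem you single out as the technical core is dissolved rather than solved, because the approach path can be rerouted through $r$ so that every gluing point is a reload state. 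You instead anchor at $q$, enumerate the triple $(r_0,r_1,r_m)$ of the last reload of $\alpha$ and the first and last reloads of $\gamma$, and carry the junction inequalities $a+b\leq\Ca$ and $b'+b\leq\Ca$ explicitly; this is more verbose (and the independent minimization of $a,b,b'$ needs the observation that $b$ occurs positively in both constraints, which you use implicitly), but it makes the $\alpha$--$\gamma$ interaction fully explicit where the paper's ``if'' direction treats it as immediate. Both routes rest on the same primitive and both give $\len{\gamma}\leq 6|S|^2$ with room to spare. One point to tighten: the accepting state must lie on one of the three legs of $\gamma$ (from $q$ to $r_1$, from $r_1$ to $r_m$, or from $r_m$ to $q$), not on the legs from $s$ to $r_0$ or from $r_0$ to $q$; your phrase ``at least one of these constituent paths'' should be restricted accordingly, since an accepting state visited only within $\alpha$ does not make the run accepting.
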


\noindent
As we already indicated in the description of step~(2), from now
on we assume that all states of $\C$ have a finite value and are
reachable from $s$ via a $\Ca$-bounded finite path. 
Recall that a \emph{zero-cost} cycle is a cycle in $\C$ initiated
in an admissible state with zero total cost.
Now we proceed to step~(3).

\begin{lemma}
\label{prop:opt-runs-zero char}
We have that $\Val_\C^\Ca(s)=0$ iff there exists a zero-cost cycle.
Further, the following holds:
\begin{enumerate}
\item If there is a zero-cost cycle $\beta$ containing an accepting
  state, then the run $\varrho \equiv \alpha \cdot \beta^\omega$,
  where $\alpha$ is a \mbox{$\Ca$-bounded} finite path
  from $s$ to $\beta(0)$, satisfies $\Val_\C^\Ca(\varrho) = \Val_\C^\Ca(s)$.
  Hence, there is a FMC $\mu$ optimal for $s$ where $\size{\mu}$
  is polynomial in $\size{\C}$ and $\size{\Ca}$.
\item If there is a zero-cost cycle $\beta$ but no zero-cost cycle
  contains an accepting state, then every \mbox{$\Ca$-optimal}
  controller for $s$ has infinite memory. Further, for a given
  zero-cost cycle $\beta$ there exist
  finite paths $\alpha$ and $\gamma$ computable in polynomial time
  such that the run $\varrho \equiv 
  \alpha\cdot\beta\cdot\gamma\cdot\beta^2\cdots \gamma\cdot\beta^{2^i}\cdots$
  satisfies $\Val_\C^\Ca(\varrho) = \Val_\C^\Ca(s)$. 
  Hence, there exist an advancing controller $\pi$ optimal for $s$ 
  where $\size{\pi}$ is polynomial in $\size{\C}$ and $\size{\Ca}$.
\end{enumerate}
\end{lemma}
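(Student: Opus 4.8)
The plan is to prove the characterization $\Val_\C^\Ca(s)=0$ iff a zero-cost cycle exists, and then to handle the two sub-cases separately.

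\medskip
\noindent
\textbf{The characterization.} For the ``if'' direction, suppose $\beta$ is a zero-cost cycle initiated in an admissible state $q$. By Lemma~\ref{lem:algorithm-admissible} there is a $\Ca$-bounded run $\alpha' \cdot \gamma^\omega$ with $\gamma$ an admissibility witness for $q$; in particular $\gamma$ visits an accepting state and there is a $\Ca$-bounded path $\alpha$ from $s$ to $q$. Since $\beta$ has total cost $0$, no transition of $\beta$ consumes anything, so inserting arbitrarily many copies of $\beta$ into a $\Ca$-bounded run at occurrences of $q$ keeps it $\Ca$-bounded (the end cost never grows along $\beta$). Consider the run $\varrho \equiv \alpha \cdot \beta \cdot \gamma \cdot \beta^2 \cdot \gamma \cdot \beta^4 \cdots \gamma \cdot \beta^{2^i}\cdots$: it is $\Ca$-bounded by the insertion observation, it visits $\gamma$ (hence an accepting state) infinitely often, and the fraction of the prefix spent inside the zero-cost blocks $\beta^{2^i}$ tends to $1$, so $\limsup \MC(\varrho_{\leq n}) = 0$. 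Hence $\Val_\C^\Ca(s)=0$. For the ``only if'' direction, suppose $\Val_\C^\Ca(s)=0$, so there is a $\Ca$-bounded accepting run $\varrho$ with $\limsup \MC(\varrho_{\leq n})=0$. Since all costs are non-negative integers, $\MC(\varrho_{\leq n})\to 0$ forces the set of indices $i$ with $c(\varrho(i)\tran{}\varrho(i+1))>0$ to have density $0$; in particular $\varrho$ contains arbitrarily long cost-$0$ infixes, and since $S$ is finite, one such infix must contain a repeated state, yielding a cycle of total cost $0$, which can be shortened to a simple one. Its initial state $q$ lies on $\varrho$ and is therefore admissible (take the suffix of $\varrho$ from an occurrence of $q$ and fold it, or more carefully use that $\varrho$ itself after some prefix returns to $q$ and is accepting $\Ca$-bounded); so we have a zero-cost cycle. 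The main subtlety here is arguing admissibility of $q$ cleanly — one has to exhibit an admissibility witness, for which the natural move is to use the recurrence structure of $\varrho$ together with Lemma~\ref{lem:algorithm-admissible}.

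\medskip
\noindent
\textbf{Part 1.} Assume a zero-cost cycle $\beta$ contains an accepting state. Its initial state $\beta(0)$ is admissible, so by Lemma~\ref{lem-cap-bounded-path} there is a $\Ca$-bounded finite path $\alpha$ from $s$ to $\beta(0)$, computable in polynomial time. Then $\varrho \equiv \alpha\cdot\beta^\omega$ is $\Ca$-bounded (again: $\beta$ consumes nothing, so the end cost stays at its value after $\alpha$ forever — here we additionally need that value $\le \Ca$, which holds since $\alpha$ is $\Ca$-bounded, and if $\beta(0)$ happens to be a reload state the end cost even resets), it is accepting since $\beta$ visits $F$, and $\MC(\varrho_{\leq n})\to 0$ because all of $\beta$'s cost is $0$ and $\alpha$ contributes a vanishing fraction. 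Thus $\Val_\C^\Ca(\varrho)=0=\Val_\C^\Ca(s)$. The induced controller has memory elements tracking the position within $\alpha\cdot\beta$ (resetting to the start of $\beta$ at the end), so $\size\mu$ is polynomial in $\size\C$ and $\size\Ca$.

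\medskip
\noindent
\textbf{Part 2.} Now assume some zero-cost cycle $\beta$ exists but none contains an accepting state. First the infinite-memory claim: let $\mu$ be any $\Ca$-optimal controller for $s$, so $\run{\mu}{s}$ is accepting and has mean cost $0$. If $\mu$ had finite memory, $\run{\mu}{s}$ would be eventually periodic, say with period cycle $\delta$ traversed infinitely often through the finite-memory states; accepting forces $\delta$ to visit $F$, and mean cost $0$ forces $c(\delta)=0$, so $\delta$ (after shortening) is a zero-cost cycle containing an accepting state — contradiction. Hence every $\Ca$-optimal controller has infinite memory. For the concrete optimal advancing controller: $\beta(0)$ is admissible, so by Lemma~\ref{lem:algorithm-admissible} there are polynomial-time-computable finite paths $\alpha, \gamma$ with $\alpha \cdot \gamma^\omega$ a $\Ca$-bounded run from $s$ and $\gamma$ an admissibility witness for $\beta(0)$, so $\gamma$ is a cycle at $\beta(0)$ visiting $F$ with $\alpha \cdot \gamma \cdot \gamma \cdots$ $\Ca$-bounded — in particular the end cost just before re-entering $\gamma$ at $\beta(0)$ is some value $\le \Ca$, and $\beta$ contributes nothing, so we may freely interleave $\beta$-blocks with $\gamma$'s. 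Concretely set $\varrho \equiv \alpha \cdot \beta \cdot \gamma \cdot \beta^2 \cdot \gamma \cdot \beta^4 \cdots \gamma \cdot \beta^{2^i}\cdots$. This is $\Ca$-bounded (each $\beta^{2^i}$ block inserted at $\beta(0)$ preserves $\Ca$-boundedness as argued), accepting (each $\gamma$ hits $F$ and $\gamma$ occurs infinitely often), and has $\MC$ tending to $0$: after the $i$-th $\gamma$ the prefix has length $\Theta(|\alpha| + i|\gamma| + 2^{i+1}|\beta|)$ with total cost bounded by $\cost(\alpha) + i\cdot\cost(\gamma)$, a ratio that $\to 0$. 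Hence $\Val_\C^\Ca(\varrho)=0=\Val_\C^\Ca(s)$. Since $\beta$ is simple (length $\le |S|$), $\alpha$ and $\gamma$ have polynomial length, this is an advancing controller of polynomial size (no counting controller is even needed for $\beta$ here). The main obstacle throughout is the bookkeeping for $\Ca$-boundedness of the interleaved run and, in the converse direction, producing an admissibility witness for the state on the discovered zero-cost cycle; both reduce to the ``zero-cost blocks consume nothing, so they can be inserted anywhere'' observation plus Lemma~\ref{lem:algorithm-admissible}.
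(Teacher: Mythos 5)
Your proposal follows essentially the same route as the paper's proof: extract a zero-cost cycle from a low-mean-cost run for the ``only if'' direction, use $\alpha\cdot\beta^\omega$ for Part~1, the doubling run $\alpha\cdot\beta\cdot\gamma\cdot\beta^2\cdots$ built around an admissibility witness $\gamma$ for Part~2, and the eventually-periodic-run argument to rule out finite memory. The insertion argument for $\Ca$-boundedness of zero-cost blocks, the mean-cost computations, and the finite-memory contradiction are all sound and match the paper. There are, however, two places where a step as written does not follow.

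First, in the ``only if'' direction you write: ``suppose $\Val_\C^\Ca(s)=0$, so there is a $\Ca$-bounded accepting run $\varrho$ with $\limsup_n \MC(\varrho_{\leq n})=0$.'' The value is an \emph{infimum} over runs, and at this point you have no right to assume it is attained --- that the infimum is attained is essentially part of what the lemma is establishing, so this step is circular. The paper instead picks a run with $\Val_\C^\Ca(\varrho) < 1/|S|$ and argues that some window of $|S|$ consecutive transitions must be all of cost zero (otherwise every length-$|S|$ window in the tail contributes cost at least $1$, forcing the limsup to be at least $1/|S|$); such a window visits $|S|+1$ states and hence contains a zero-total-cost simple cycle. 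Your density argument survives intact once restated for a run of value $< 1/|S|$. Second, ``its initial state $q$ lies on $\varrho$ and is therefore admissible'' is not correct as stated: a state merely visited by a $\Ca$-bounded accepting run need not be admissible. You must extract the zero-cost window from a suffix $\varrho'$ of $\varrho$ in which every occurring state occurs infinitely often (possible, since all-zero windows occur arbitrarily late), so that $q$ recurs together with an accepting state and an admissibility witness can be folded out of $\varrho'$. You do gesture at this, and the paper is also terse here, but the quantification (``some occurrence of $q$'' versus ``an occurrence in the recurrent suffix'') is exactly where the argument would otherwise break. Also note that Lemma~\ref{lem:algorithm-admissible} gives an algorithm for deciding admissibility and computing witnesses; it is not itself a proof that $q$ is admissible, so it cannot carry that burden for you.
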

\begin{proof}
If $\Val_\C^\Ca(s)=0$, there is an accepting run $\varrho$ initiated in
$s$ such that $\Val_\C^\Ca(\varrho)<1/|S|$. Let $\varrho'$ be an infinite
suffix of $\varrho$ such that all states that appear in $\varrho'$
appear infinitely often in $\varrho'$. This means that all states 
that appear in $\varrho'$ are admissible. Obviously, there is
$k \in \Nset$ such that the cost of every transition 
$\varrho'(k{+}i) \tran{} \varrho'(k{+}i{+}1)$, where $0\leq i \leq |S|-1$, is
zero (otherwise, we would have $\Val_\C^\Ca(\varrho)=\Val_\C^\Ca(\varrho')\geq 1/|S|$), and hence there exists a zero-cost cycle. 

Now assume that $\C$ contains a zero-cost cycle $\beta$ containing an 
accepting state. Since there is a $\Ca$-bounded finite path $\alpha$
form $s$ to $\beta(0)$ (see step~(2) and Lemma~\ref{lem-cap-bounded-path}),
the run $\varrho \equiv \alpha \cdot \beta^\omega$ is $\Ca$-bounded and 
satisfies $\Val_\C^\Ca(\varrho)=0$. Since the length of $\alpha$ and $\beta$
is polynomial in  $\size{\C}$ and $\size{\Ca}$ (see 
Lemma~\ref{lem-cap-bounded-path}), we obtain Claim~1.

Finally, assume that $\C$ contains a zero-cost cycle $\beta$ but 
no zero-cost cycle in $\C$ contains an accepting state. Since $\beta(0)$ 
is admissible, there is a $\Ca$-bounded run $\alpha \cdot \gamma^\omega$
initiated in $s$ where $\gamma$ is an admissibility witness for 
$\beta(0)$. Note that the length of $\alpha$ and $\gamma$ is polynomial
in $\size{\C}$ and $\size{\Ca}$ by Lemma~\ref{lem:algorithm-admissible},
and the run $\varrho \equiv 
  \alpha\cdot\beta\cdot\gamma\cdot\beta^2\cdots \gamma\cdot\beta^{2^i}\cdots$
is accepting and $\Ca$-bounded. Further, a simple computation shows that
$\Val_\C^\Ca(\varrho)=0$. Hence, there exists an advancing controller 
$\pi$ optimal for $s$ such that $\size{\pi}$ is polynomial in $\size{\C}$ 
and $\size{\Ca}$. It remains to show that there is no optimal 
finite memory controller for~$s$. However, it suffices to realize that
if $\mu$ is a finite memory controller, then $\run{\mu}{s}$ takes the
form $\hat{\alpha}\cdot\hat{\beta}^\omega$, where $\hat{\beta}$ contains an
accepting state. By our assumption, $c(\hat{\beta}) \neq 0$, which means
that $\Val_\C^\Ca(\hat{\alpha}\cdot\hat{\beta}^\omega) \neq 0$. 
\qed
\end{proof}

\noindent
In the next lemma we show how to decide the existence of a zero-cost
cycle efficiently, and how to construct an example of a zero-cost cycle
if it exists. The same is achieved for zero-cost cycles containing
an accepting state. Thus, we finish step~(3).

\begin{lemma}
\label{lem-cycle-compute}
  The existence of a zero-cost cycle is decidable in polynomial  
  time, and an example of a zero-cost cycle $\beta$ (if it exists)
  is computable in polynomial time. The same holds for zero-cost cycles
  containing an accepting state.
\end{lemma}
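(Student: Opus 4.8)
The plan is to reduce both questions to reachability and zero-weight-cycle detection in suitable graphs whose size is polynomial in $\size\C$ (crucially \emph{not} depending on $\Ca$), exploiting the cleaning performed in steps~(1)--(2): every remaining state has finite value and is reachable from $s$ via a $\Ca$-bounded path. First I would observe that a zero-cost cycle, by definition, is a cycle $\gamma$ with $c(\gamma)=0$ initiated in an admissible state. Since all transition costs are non-negative, $c(\gamma)=0$ means every transition along $\gamma$ has cost zero; hence the existence of a zero-cost cycle is equivalent to the existence of a cycle in the subgraph $\C_0$ obtained by keeping only the zero-cost transitions, such that at least one vertex of that cycle is admissible. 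Detecting a cycle in $\C_0$ and extracting a simple one is a standard linear-time graph task (e.g.\ strongly connected component decomposition), and admissibility of a state is decidable in polynomial time by Lemma~\ref{lem:algorithm-admissible}. So the algorithm is: compute the admissible states; compute the nontrivial SCCs of $\C_0$; a zero-cost cycle exists iff some nontrivial SCC of $\C_0$ contains an admissible state, and in that case a simple cycle inside that SCC through (or reachable-within-the-SCC from) an admissible vertex yields the witness $\beta$ in polynomial time.

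Next I would handle the accepting variant. Here I want a zero-cost cycle $\beta$ that additionally \emph{contains} an accepting state. The natural approach is again to work inside $\C_0$: a zero-cost cycle containing an accepting state exists iff there is a nontrivial SCC $K$ of $\C_0$ that contains an admissible state \emph{and} contains an accepting state. The subtle point is that "admissible" and "accepting" may be witnessed by \emph{different} vertices of $K$; but since $K$ is strongly connected and all its internal transitions have cost zero, any cycle that closes a walk visiting both an admissible vertex and an accepting vertex inside $K$ is itself a zero-cost cycle (it is initiated in an admissible state and passes through an accepting one). Such a cycle can be built in polynomial time by concatenating three zero-cost paths inside $K$: from the admissible vertex $q$ to an accepting vertex $f$, from $f$ back to $q$, which closes a cycle at $q$ through $f$. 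Its length is bounded by $2|S|$, which is polynomial.

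The one point that needs care---and which I expect to be the main obstacle---is ensuring that the cycle $\beta$ returned is genuinely a \emph{zero-cost cycle in the sense required for the rest of the proof}, i.e.\ not just $c(\beta)=0$ but also that it is initiated in an \emph{admissible} state, meaning there is a $\Ca$-bounded run $\alpha\cdot\gamma^\omega$ in $\C$ with $\gamma$ an admissibility witness for $\beta(0)$ (and $\gamma$ through an accepting state). The zero-cost cycle $\beta$ itself, being of cost zero, is trivially $\Ca$-bounded once we can reach its initial vertex by a $\Ca$-bounded path; and reachability of $\beta(0)$ from $s$ by a $\Ca$-bounded path is guaranteed by step~(2) and Lemma~\ref{lem-cap-bounded-path}, which also yields such a path in polynomial time. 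Admissibility of $\beta(0)$ is exactly what we checked when selecting the SCC, and Lemma~\ref{lem:algorithm-admissible} supplies the needed witness $\gamma$ and prefix $\alpha$ in polynomial time. Assembling these ingredients gives the polynomial-time decision procedure and the polynomial-size construction of $\beta$ in both cases, which completes step~(3).
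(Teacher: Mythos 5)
Your proposal is correct and follows essentially the same route as the paper: both restrict attention to the subgraph of zero-cost transitions, use Lemma~\ref{lem:algorithm-admissible} to identify admissible states, and handle the fact that the admissible and accepting witnesses may be different vertices joined by zero-cost paths; the paper merely packages the search as a per-admissible-state reachability test (Lemma~\ref{lem:cycle-of-zero-cost}) where you use a single SCC decomposition of the zero-cost subgraph, an equivalent polynomial-time formulation. The only point to tighten is that in the plain (non-accepting) case you should root the extracted simple cycle at the admissible vertex $q$ itself (always possible once $q$ lies in a nontrivial SCC) rather than at a vertex merely reachable from $q$ within the SCC, so that $\beta(0)$ is admissible without an extra argument.
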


\noindent
It remains to complete step~(4), which is the most technical part 
of our proof. From now on we assume that $\C$ does not contain any
zero-cost cycles. 

We say that a cycle $\beta$ in $\C$ is \emph{reload-simple}, if every 
reload state appears at most once in $\beta$, i.e., for every 
$t \in R$ there is at most one $0 \leq i < \len{\beta}$ satisfying
$\beta(i) = t$.
Further, we say that a cycle $\beta$ is \emph{\T-visiting}, where
$T\subseteq S$, if $\beta$ is a $\Ca$-bounded reload-simple cycle
initiated in an admissible reload state such that $\beta$ contains 
a state of~$T$. We say that $\beta$ is an \emph{optimal \T-visiting cycle}
if $\MC(\beta) \leq \MC(\delta)$ for every \T-visiting cycle~$\delta$.
Note that every state of a \T-visiting cycle $\beta$ is admissible.

\begin{lemma}
\label{lem:optimal-cycles-ex}
If $\C$ does not contain any zero-cost cycle, then it contains an
optimal \mbox{$F$-visiting} cycle and an optimal \mbox{$S$-visiting} cycle.
\end{lemma}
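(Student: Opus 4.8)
The plan is to establish existence of the two optimal cycles by a compactness-style argument combined with a ``short witness'' normalization. The statement is non-trivial because the class of $\Ca$-bounded cycles initiated in admissible reload states is infinite and a priori there need be no shortest one, so I must argue that the infimum of $\MC$ over this class is attained. First I would observe that since $\C$ contains no zero-cost cycle, every cycle in $\C$ has total cost at least $1$, and every $\Ca$-bounded reload-simple cycle has length at most $|S| \cdot (\Ca+1)$ (each maximal reload-free segment has length at most $\Ca$ since transition costs are positive integers, and there are at most $|S|$ reload states on a reload-simple cycle, hence at most $|S|$ segments). Thus the mean cost $\MC(\beta) = c(\beta)/\len{\beta}$ of such a cycle takes values in a set of rationals whose denominators are bounded by $|S|\cdot(\Ca+1)$; in particular the set of achievable values $\{\MC(\beta) : \beta \text{ is } T\text{-visiting}\}$ for $T \in \{F, S\}$ is a set of rationals with bounded denominators, and since it is also bounded above (any single $T$-visiting cycle gives an upper bound, once we know one exists), it is a finite set and hence has a minimum. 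Therefore, provided there exists at least one $F$-visiting cycle and at least one $S$-visiting cycle, an optimal one of each kind exists.

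It then remains to show these classes are nonempty. For $S$-visiting cycles: after the cleaning of steps (1)--(2), every remaining state has finite $\Ca$-value and is reachable from $s$ by a $\Ca$-bounded path; in particular there is an accepting $\Ca$-bounded run $\varrho$ from $s$ with finite value. Taking an infinite suffix $\varrho'$ of $\varrho$ in which every occurring state occurs infinitely often (as in the proof of Lemma~\ref{prop:opt-runs-zero char}), all states of $\varrho'$ are admissible, and since $\varrho$ is accepting, $\varrho'$ visits a reload state (a run with only finitely many reload visits has an infinite reload-free suffix whose end cost grows unboundedly, contradicting $\Ca$-boundedness, unless that suffix is eventually constant-state with zero-cost self-loop — but that would be a zero-cost cycle, excluded by hypothesis). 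Pick such a reload state $q$ on $\varrho'$; the segment of $\varrho'$ from one occurrence of $q$ to the next is an $\Ca$-bounded cycle initiated in the admissible reload state $q$, and by shortcutting repeated reload states inside it (each such shortcut keeps it $\Ca$-bounded and still initiated in $q$) we obtain a reload-simple such cycle — an $S$-visiting cycle. The same argument applied with the observation that $\varrho'$ also visits an accepting state infinitely often gives an $\Ca$-bounded cycle through a reload state that contains an accepting state; reload-simplification preserves $\Ca$-boundedness and the presence of the accepting state (we only delete proper infixes between duplicate reload occurrences, and we may choose to cut so as to retain the accepting-state occurrence, or note that if every reload-simplification destroyed the accepting state, the accepting state lies strictly between two occurrences of some reload state and we rotate/recut accordingly), yielding an $F$-visiting cycle.

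The main obstacle I expect is the last bookkeeping point: showing that reload-simplification of a cycle can be carried out while \emph{simultaneously} preserving $\Ca$-boundedness, the admissible-reload-state starting point, and membership of the required special state (accepting state for the $F$-visiting case). The $\Ca$-boundedness is the delicate part — deleting an infix between two occurrences of a reload state $t$ merges the reload-free segment before the first occurrence of $t$ with the one after the second occurrence, but both those segments already had end cost at most $\Ca$ and $t$ is a reload state so the merge happens \emph{at} a reload, meaning no reload-free segment grows; this needs to be stated carefully. Retaining the accepting state is then a matter of choosing, among all occurrences of each duplicated reload state, which occurrence to keep so that the accepting state is not cut out, which is always possible since the accepting state lies in exactly one maximal reload-free segment. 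Once these normalization lemmas are in place, the finiteness-of-values argument closes the proof; I would present the value-set finiteness first, then the two nonemptiness arguments, and fold the normalization into a short auxiliary claim.
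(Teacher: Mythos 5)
Your proposal follows essentially the same route as the paper: first extract at least one $T$-visiting cycle from a $\Ca$-bounded accepting run by excising material between duplicate reload occurrences while retaining an accepting state, then observe that all $T$-visiting cycles have bounded length (equivalently, that the set of achievable mean costs is finite), so the minimum is attained. The one step whose justification is wrong is the claim that each reload-free segment has length at most $\Ca$ ``since transition costs are positive integers'': costs in a consumption system are \emph{non-negative} and may be zero, so a reload-free segment can a priori contain arbitrarily many zero-cost transitions. The correct argument, for which you already state the key ingredient, is that every state on a $T$-visiting cycle is admissible, so by the no-zero-cost-cycle hypothesis every simple cycle occurring as a subpath has cost at least $1$; hence a reload-free segment of length exceeding $(|S|+1)\cdot(\Ca+1)$ would decompose into more than $\Ca$ simple cycles of positive cost and violate $\Ca$-boundedness. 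With that constant substituted your finiteness argument goes through; for the same reason, the opening assertion that ``every cycle in $\C$ has total cost at least $1$'' should be restricted to cycles through admissible states, which is all that is needed here.
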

\begin{proof}
  We give an explicit proof just for $F$-visiting cycles (the argument
  for $S$-visiting cycles is very similar). First, we show that there is at
  least one $F$-visiting cycle, and then we prove that every
  $F$-visiting cycle has a bounded length. Thus, the set of all
  $F$-visiting cycles is finite, which implies the existence of an
  optimal one.

  Since $\Val_\C^\Ca(s)<\infty$, there is a $\Ca$-bounded accepting run 
  $\varrho$ initiated in $s$. Note that if $\varrho$ contained only finitely
  many occurrences of reload states, it would have to contain 
  zero-cost cycle, which contradicts our assumption. Hence, $\varrho$
  contains infinitely many occurrences of a reload state and infinitely
  many occurrences of an accepting state. Let $\varrho'$ be a suffix
  of $\varrho$ such that every state that appears in $\varrho'$
  appears infinitely often in $\varrho'$ (hence, all states
  that appear in $\varrho'$ are admissible). We say that  
  a subpath $\varrho'(i) \ldots \varrho'(j)$ of $\varrho'$ is 
  \emph{useless} if
  $\varrho'(i) = \varrho'(j) \in R$ and no accepting state is visited
  along this subpath. Let $\hat{\varrho}$ be a run obtained from
  $\varrho'$ by removing all useless subpaths (observe that $\hat{\varrho}$
  is still a $\Ca$-bounded accepting run). Then, there must 
  be a subpath $\hat{\varrho}(i) \ldots \hat{\varrho}(j)$ of $\hat{\varrho}$
  such that the length of this subpath is positive, 
  $\hat{\varrho}(i) = \hat{\varrho}(j) \in R$, the subpath
  visits an accepting state, and no reload state is visited more than
  once along $\hat{\varrho}(i) \ldots \hat{\varrho}(j{-}1)$. 
  Hence, this subpath is an $F$-visiting cycle.

  Now let $\beta$ be an $F$-visiting cycle. Then every state on
  $\beta$ is admissible, which means that every
  simple cycle $\delta$ that is a subpath of $\beta$ has positive
  cost, otherwise $\delta$ would be a zero-cost cycle. This implies 
  that a maximal length of a subpath of $\beta$ which
  does not contain any reload state is $(|S|+1)\cdot(\Ca+1)$ (because
  $\beta$ is $\Ca$-bounded). From the reload-simplicity of $\beta$ we get
  that $\len{\beta}\leq |R|\cdot(|S|+1)\cdot(\Ca+1)$.  
\qed
\end{proof}

\noindent
We use $\mcf$ and $\mcs$ to denote the mean cost of an optimal
$F$-visiting cycle and the mean cost of an optimal $S$-visiting
cycle, respectively. Now we prove the following:

\begin{lemma}
\label{prop:opt-runs-char}
Suppose that $\C$ does not contain any zero-cost cycle. Then 
$\Val_{\C}^{\Ca}(s) = \mcs \leq \mcf$. Moreover, the following holds:
\begin{enumerate}
\item If $\mcf=\mcs$, then for every optimal $F$-visiting cycle
  $\beta$ and every $\Ca$-bounded path $\alpha$ from $s$ to $\beta(0)$
  we have that the run $\varrho \equiv \alpha \cdot \beta^\omega$ satisfies
  $\Val_\C^\Ca(\varrho) = \Val_\C^\Ca(s)$.  Hence, there exists an
  optimal FMC  for $s$.
\item If $\mcs<\mcf$, then every $\Ca$-optimal controller for~$s$ 
  has infinite memory. Further, for a given optimal $S$-visiting cycle 
  $\beta$ there exist finite paths $\alpha$ and $\gamma$ computable 
  in polynomial time such that the run
  $\varrho \equiv 
  \alpha\cdot\beta\cdot\gamma\cdot\beta^2\cdots \gamma\cdot\beta^{2^i}\cdots$
  satisfies $\Val_\C^\Ca(\varrho) = \Val_\C^\Ca(s)$. 
  Hence, there exists an optimal advancing controller for $s$.
\end{enumerate}
\end{lemma}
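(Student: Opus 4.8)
The plan is to first establish the two inequalities $\Val_{\C}^{\Ca}(s) = \mcs \leq \mcf$ and then treat the two cases separately. The inequality $\mcs \leq \mcf$ is immediate, since every $F$-visiting cycle is in particular an $S$-visiting cycle (it contains an accepting state, hence a state of~$S$), so the minimum defining $\mcs$ is taken over a larger set. For $\Val_{\C}^{\Ca}(s) = \mcs$, I would argue both directions. For $\Val_{\C}^{\Ca}(s) \geq \mcs$: given any $\Ca$-bounded accepting run $\varrho$ from~$s$, it must visit some reload state infinitely often (otherwise, as in Lemma~\ref{lem:optimal-cycles-ex}, its tail would contain a zero-cost cycle, contradicting the standing assumption), so by a standard cycle-decomposition of the portion of $\varrho$ between consecutive visits to a fixed recurrent reload state, $\limsup_i \MC(\varrho_{\leq i})$ is at least the minimum mean cost over $\Ca$-bounded reload-simple cycles through admissible reload states, which is $\mcs$ (reload-simplicity can be imposed because any repeated reload state in such a segment lets us excise a sub-cycle without increasing the mean cost, by the usual ``the mean of a concatenation lies between the means'' argument). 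For $\Val_{\C}^{\Ca}(s) \leq \mcs$: take an optimal $S$-visiting cycle $\beta$; its initial state $\beta(0)$ is an admissible reload state, and by Lemma~\ref{lem-cap-bounded-path} there is a $\Ca$-bounded path $\alpha$ from $s$ to $\beta(0)$. One checks that $\alpha\cdot\beta^\omega$ is $\Ca$-bounded (the cost resets at each visit of $\beta(0)\in R$) but it need not be accepting, so I instead splice in an admissibility witness for $\beta(0)$ periodically: using Lemma~\ref{lem:algorithm-admissible} there is an admissibility witness $\gamma_0$ for $\beta(0)$, and interleaving copies of $\beta$ with occasional copies of $\gamma_0$ in the pattern $\alpha\cdot\beta\cdot\gamma_0\cdot\beta^2\cdot\gamma_0\cdot\beta^4\cdots$ yields a $\Ca$-bounded accepting run whose mean cost tends to $\MC(\beta)=\mcs$ (the relative frequency of the $\gamma_0$-blocks tends to zero).

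For Claim~1, assume $\mcf = \mcs$. Let $\beta$ be an optimal $F$-visiting cycle and $\alpha$ any $\Ca$-bounded path from $s$ to $\beta(0)$. Then $\varrho \equiv \alpha\cdot\beta^\omega$ is $\Ca$-bounded (cost resets at each $\beta(0)\in R$), it is accepting since $\beta$ contains an accepting state, and $\MC(\varrho_{\leq i}) \to \MC(\beta) = \mcf = \mcs = \Val_{\C}^{\Ca}(s)$; the contribution of the fixed prefix $\alpha$ vanishes in the limit. Hence $\Val_\C^\Ca(\varrho) = \Val_\C^\Ca(s)$, and the finite-memory controller that follows $\alpha$ and then cycles along $\beta$ forever is optimal for~$s$, with memory bounded by $\len{\alpha}+\len{\beta}$ (finite, though possibly exponential in $\size\C,\size\Ca$ by the bound in Lemma~\ref{lem:optimal-cycles-ex}).

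For Claim~2, assume $\mcs < \mcf$. First, optimality requires infinite memory: if $\mu$ is a finite-memory controller, then $\run{\mu}{s}$ is eventually periodic, of the form $\hat\alpha\cdot\hat\beta^\omega$ for some cycle $\hat\beta$; for this run to be accepting, $\hat\beta$ must contain an accepting state, and for it to have finite value $\hat\beta$ must be $\Ca$-bounded. By excising repeated reload occurrences as above we may assume $\hat\beta$ is reload-simple (this does not increase its mean cost and preserves the presence of an accepting state and $\Ca$-boundedness), so $\hat\beta$ is an $F$-visiting cycle and $\Val_\C^\Ca(\run{\mu}{s}) = \MC(\hat\beta) \geq \mcf > \mcs = \Val_\C^\Ca(s)$, so $\mu$ is not optimal. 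For the construction: let $\beta$ be the given optimal $S$-visiting cycle; $\beta(0)$ is an admissible reload state, so by Lemma~\ref{lem:algorithm-admissible} there are polynomial-length finite paths $\alpha,\gamma$ with $\alpha\cdot\gamma^\omega$ a $\Ca$-bounded run from~$s$ and $\gamma$ an admissibility witness for $\beta(0)$ (in particular $\gamma$ starts and ends at $\beta(0)$, contains an accepting state, and is $\Ca$-bounded). Then $\varrho \equiv \alpha\cdot\beta\cdot\gamma\cdot\beta^2\cdot\gamma\cdot\beta^4\cdots\gamma\cdot\beta^{2^i}\cdots$ is $\Ca$-bounded (every segment $\beta$ and $\gamma$ begins and ends at $\beta(0)\in R$, and is individually $\Ca$-bounded), is accepting (each $\gamma$-block contributes an accepting state), and its mean cost converges to $\MC(\beta) = \mcs$ because after the $k$-th $\gamma$-block the total number of $\beta$-copies so far is $2^{k+1}-1$ while the number of $\gamma$-copies is only $k$, so the $\gamma$-blocks and the prefix $\alpha$ occupy a vanishing fraction of the run. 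Hence $\Val_\C^\Ca(\varrho) = \mcs = \Val_\C^\Ca(s)$, and $\alpha$, $\beta$, $\gamma$ describe an optimal advancing controller for~$s$.

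The main obstacle is the lower-bound direction $\Val_{\C}^{\Ca}(s) \geq \mcs$: one must argue carefully that an \emph{arbitrary} $\Ca$-bounded accepting run cannot beat $\mcs$ in the $\limsup$, which requires (i) showing such a run visits a reload state infinitely often — leaning on the no-zero-cost-cycle assumption exactly as in the proof of Lemma~\ref{lem:optimal-cycles-ex} — and (ii) a decomposition-into-cycles argument establishing that the $\limsup$ of the running mean cost of a concatenation of reload-simple $\Ca$-bounded admissible cycles is at least the minimum of their mean costs. Both (i) and the reload-simplicity reduction reuse the excision/averaging technique already employed in Lemma~\ref{lem:optimal-cycles-ex}, so the proof is mostly a matter of assembling these pieces rather than inventing new machinery.
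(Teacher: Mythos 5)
Your overall strategy coincides with the paper's: the inequality $\mcs\leq\mcf$ by inclusion of cycle families, the lower bound $\Val_\C^\Ca(s)\geq\mcs$ via infinitely many reload visits plus a decomposition into reload-simple $S$-visiting cycles, the run $\alpha\cdot\beta^\omega$ for Claim~1, and the run $\alpha\cdot\beta\cdot\gamma\cdot\beta^2\cdots$ built from an admissibility witness for Claim~2. All of that is sound and matches the paper.

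There is, however, one flawed step in your infinite-memory argument for Claim~2. You claim that by excising repeated reload occurrences you ``may assume $\hat\beta$ is reload-simple'' in a way that \emph{does not increase its mean cost and preserves the presence of an accepting state}, and you then conclude $\MC(\hat\beta)\geq\mcf$. Both the step and the conclusion are wrong in general. When you cut $\hat\beta=\xi\cdot\delta\cdot\xi'$ at a repeated reload state, $\MC(\hat\beta)$ lies \emph{between} $\MC(\delta)$ and $\MC(\xi\cdot\xi')$; the piece whose mean cost does not exceed $\MC(\hat\beta)$ need not be the piece containing the accepting state. Concretely, if $\hat\beta$ decomposes into an $F$-visiting piece of mean cost $10$ and a non-accepting $S$-visiting piece of mean cost $2$, with $\mcs=2$ and $\mcf=10$, then $\MC(\hat\beta)$ can be, say, $6<\mcf$, so no $F$-visiting cycle of mean cost at most $\MC(\hat\beta)$ can be extracted and the bound $\MC(\hat\beta)\geq\mcf$ fails. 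The repair is the paper's argument: decompose $\hat\beta$ into reload-simple cycles on reload states, note that \emph{every} piece is $S$-visiting and hence has mean cost at least $\mcs$, that at least one piece contains an accepting state and hence has mean cost at least $\mcf>\mcs$, and that $\MC(\hat\beta)$ is a length-weighted average of the pieces' mean costs; therefore $\MC(\hat\beta)>\mcs=\Val_\C^\Ca(s)$, which is all that is needed to rule out optimal finite-memory controllers. With this substitution your proof is complete.
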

\begin{proof}
Clearly, $\mcs\leq \mcf$, because every $F$-visiting cycle is also
$S$-visiting. Now  we show that for every run $\varrho$ we have that
$\Val_{\C}^\Ca(\varrho) \geq \mcs$. This clearly holds for all
non-accepting runs. Every accepting run $\varrho$ must contain
infinitely many occurrences of a reload state, otherwise it would
contain a zero-cost cycle as a subpath,  which contradicts our
assumption. Let $\varrho'$ be a suffix of $\varrho$ initiated in a reload 
state such that every state which appears in $\varrho'$ appears 
infinitely often in $\varrho'$. Then $\varrho'$ takes the form
$\beta_0\cdot\beta_1\cdot\beta_2\cdots$, where
for every $i \geq 0$, the subpath $\beta_i$ is a cycle initiated in a
reload state. Every $\beta_i$ can be decomposed into reload-simple cycles
$\beta_{i,1},\beta_{i,2},\dots,\beta_{i,i_m}$ that are initiated in
reload states (here the decomposition is meant in a graph-theoretical
sense, i.e., a transition appears $b$ times on $\beta_i$ if and only
if $b=b_1 + \cdots + b_m$, where $b_j$ is a number of occurrences of
this transition on $\beta_{i,j}$). Each of these cycles is an
$S$-visiting cycle (since every state on $\rho'$ is admissible) and
clearly $\MC(\rho)=\MC(\rho')\geq \min_{i\geq 1} MC(\beta_i) \geq
\min_{i\geq 0,1\leq j \leq i_m} \MC(\beta_{i,j}) \geq \mcs$.

The rest of the proof closely follows the proof of
Lemma~\ref{prop:opt-runs-zero char}. First we consider the case
when $\mcf=\mcs$, i.e., for every optimal $F$-visiting cycle $\beta$
we have that $\MC(\beta) = \mcs$. If $\alpha$ is a
$\Ca$-bounded path from $s$ to $\beta(0)$, then
we have that the run $\varrho \equiv \alpha \cdot \beta^\omega$ satisfies
$\Val_\C^\Ca(\alpha \cdot \beta^\omega) = \mcs = \Val_\C^\Ca(s)$, and
hence there exists an optimal FMC for~$s$. 

If $\mcs < \mcf$, consider an optimal $S$-visiting cycle $\beta$.
Since $\beta(0)$ is admissible, there is a $\Ca$-bounded run $\alpha
\cdot \gamma^\omega$ initiated in $s$ where $\gamma$ is an
admissibility witness for $\beta(0)$ and $\alpha$ and $\gamma$ are
computable in polynomial time (see
Lemma~\ref{lem:algorithm-admissible}. Further, the run $\varrho \equiv
\alpha\cdot\beta\cdot\gamma\cdot\beta^2\cdots
\gamma\cdot\beta^{2^i}\cdots$ is accepting and $\Ca$-bounded, and one
can easily show that $\Val_\C^\Ca(\varrho)= \mcs =
\Val_\C^\Ca(s)$. Hence, there exists an optimal advancing controller
for $s$. Since every finite memory controller $\mu$ satisfies 
$\run{\mu}{s} \equiv \hat{\alpha}\cdot\hat{\beta}^\omega$  
and $\size{\Ca}$. It remains to show that there is no optimal 
finite memory controller for~$s$. For every FMC $\mu$ we have that
$\run{\mu}{s} \equiv \hat{\alpha}\cdot\hat{\beta}^\omega$, where 
$\hat{\beta}$ is a cycle on a reload state containing an accepting 
state. Further, $\Val_\C^\Ca(\mu) = \MC(\hat{\beta})$. The cycle
$\hat{\beta}$ can be 
decomposed into reload-simple cycles on reloading 
states whose mean cost is at least $\mcs$. Since at least one of these
cycles is accepting and $\mcf > \mcs$, we obtain 
$\MC(\hat{\beta}) > \mcs$.
\qed
\end{proof}

\noindent
Note that Lemma~\ref{prop:opt-runs-char} does not specify any bound
on the length of $\beta$ and in general, this length can be exponential.
Now we show that an optimal $F$-visiting cycle and an optimal
$S$-visiting cycle can be represented by a counting controller 
constructible in polynomial time. This is the technical core of
our construction which completes the proof of Theorem~\ref{thm:alg}. 

\begin{lemma}
\label{prop:algorithm}
Suppose that $\C$ does not contain any zero-cost cycle, and
let $\T$ be either $S$ or $R$. Then there exist a counting controller 
$\kappa$ and a reload state $r$ computable in polynomial time such 
that $\run{\kappa}{r}=\beta^\omega$ where $\beta$ is
an optimal \T-visiting cycle.
\end{lemma}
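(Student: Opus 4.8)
The plan is to reduce the search for an optimal $\T$-visiting cycle to a shortest-path / minimum-mean-cycle computation in a polynomially-sized auxiliary graph, and then to read off a counting controller directly from the structure of that computation. Recall that a $\T$-visiting cycle $\beta$ is a $\Ca$-bounded reload-simple cycle initiated in an admissible reload state which contains a state of $\T$; since $\beta$ is reload-simple, it decomposes into maximal "reload-free" segments, each starting and ending at a reload state (or at the initial/final point) and passing through no reload state in between. The key structural observation is that each such segment is an $s$-$t$ path in the subgraph with reload states removed, has total cost at most $\Ca$, and contributes its own length and cost additively; moreover $\C$ contains no zero-cost cycle, so every reload-free segment visiting $\geq |S|+1$ non-reload states would contain a positive-cost simple cycle — this is exactly the bound used in Lemma~\ref{lem:optimal-cycles-ex}. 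Thus an optimal $\T$-visiting cycle, viewed at the granularity of reload states, is a cycle in a "contracted" multigraph whose vertices are the admissible reload states and whose edges record, for each pair $(r,r')$ of reload states and each relevant pair $(\text{cost},\text{length})$, the existence of a $\Ca$-bounded reload-free path from $r$ to $r'$ (with one bit recording whether a $\T$-state is seen).

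First I would build this contracted graph. For a fixed source reload state $r$, the reachable pairs $(\text{cost}\leq\Ca,\text{length})$ at each non-reload vertex, together with the "$\T$-visited" bit, are computed by a shortest-path-style fixed-point; the subtlety is that costs and $\Ca$ are binary-encoded, so I cannot enumerate all cost values. I would instead observe that we only ever care about the \emph{minimum} length achieving a given cost (or the minimum cost achieving a given length), and that lengths are polynomially bounded along any segment we would use in an \emph{optimal} cycle — so the relevant Pareto frontier of (cost, length) pairs per vertex has polynomial size, and can be maintained in polynomial time (this is essentially a bicriteria shortest-path computation where one criterion has a polynomial range). This yields, for each ordered pair of admissible reload states, a polynomial-size set of candidate segment descriptors. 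Then the optimal $\T$-visiting cycle is a minimum-mean cycle through at least one "$\T$-flagged" edge in this polynomial contracted graph, computable by a standard minimum-mean-cycle algorithm (Karp's algorithm), suitably adapted to force the use of a $\T$-edge — e.g. by splitting each vertex into a "before-$\T$" and "after-$\T$" copy.

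Having found the optimal cycle in the contracted graph, I would reconstruct $\beta$ as a counting controller. The contracted cycle is a bounded-length sequence of reload states $r_0, r_1, \ldots, r_k = r_0$, each edge realized by a concrete reload-free finite path $\eta_i$ from $r_i$ to $r_{i+1}$. Each $\eta_i$ itself may be exponentially long (it may need to pump a cheap simple cycle many times to, e.g., match a prescribed length), but it has the shape "prefix $\cdot$ (simple cycle)$^{m}$ $\cdot$ suffix" with polynomial-size prefix, cycle, and suffix and a binary-encoded repetition count $m$ — precisely what a single basic-memory phase with one $\reset(m)$/$\dec$ counter can produce. So the counting controller has one basic memory element per segment-phase (a polynomial number of them), cycles through them in order, and within each uses the bounded counter to pump the designated simple cycle the required number of times; the positive/zero next and update functions are set so that while the counter is positive we traverse the cycle and decrement, and when it hits zero we emit the suffix and move to the next basic memory element, resetting the counter for that phase. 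The main obstacle I expect is the binary-encoded bicriteria bookkeeping in the second step — arguing rigorously that the Pareto frontier of (cost, length) pairs needed for optimal segments stays polynomial, rather than blowing up with $\Ca$ — since this is exactly the place where the "no explicit unfolding" difficulty highlighted in the introduction bites; the reconstruction of the counting controller in the third step is then largely bookkeeping once the segments are in the prefix-cycle-suffix normal form.
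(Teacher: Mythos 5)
Your high-level architecture matches the paper's: decompose a reload-simple $\T$-visiting cycle into reload-free segments, contract to a polynomial graph over reload states, find a minimum cost-to-length \emph{ratio} cycle forced through a $\T$-flagged edge, and read off a counting controller from segments in prefix$\cdot$cycle$^k\cdot$suffix form. However, the step you yourself flag as the main obstacle is where the argument breaks. Your plan to tame the binary-encoded costs via a ``Pareto frontier of (cost, length) pairs'' of polynomial size rests on the claim that ``lengths are polynomially bounded along any segment we would use in an optimal cycle,'' and that claim is false: because the objective is mean cost (total cost divided by total length), an optimal segment may need to pump a cheap short cycle up to $\Theta(\Ca)$ times to drive the mean down, so optimal segments genuinely have length exponential in $\size{\Ca}$ (cf.\ the bound $|R|\cdot(|S|{+}1)\cdot(\Ca{+}1)$ in Lemma~\ref{lem:optimal-cycles-ex} and the remark after Lemma~\ref{prop:opt-runs-char} that $\len{\hat\beta}$ is in general exponential). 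With both cost (range up to $\Ca$) and length (range up to about $|S|\cdot(\Ca{+}1)$) exponential, the Pareto frontier at a vertex can itself have exponentially many points, so the bicriteria shortest-path computation you describe is not polynomial.

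The paper circumvents this not by Pareto dominance on path prefixes but by a normalization argument on the cycle itself: it defines \emph{characteristics} $(r,q,t,m,n,b)$ with $m\leq 5|S|^3$ and $n\leq|S|$ --- of which there are only polynomially many --- and proves (Lemma~\ref{lemma:segment-candidates}, via a rank-minimizing exchange argument over $\eta$-decompositions) that some optimal $\T$-visiting cycle uses only compact segments $\gamma\cdot\delta^k\cdot\gamma'$ that are \emph{optimal for} some characteristic, where $k$ is not a free parameter but the maximal repetition count compatible with $\Ca$-boundedness. This is exactly the prefix$\cdot$cycle$^k\cdot$suffix normal form you assume for your reconstructed segments $\eta_i$, but in your proposal it appears as an unproven assertion rather than a consequence of the construction; establishing it is the technical core of the lemma (the paper calls it the most intricate step). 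So the gap is twofold and concrete: the polynomial-Pareto-frontier claim is wrong, and the normal form of optimal segments that your controller construction relies on is exactly what still needs to be proved.
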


\vspace{-0.3cm}
\subsection{A Proof of Lemma~\ref{prop:algorithm}}
\label{sec:nobuchi}

We start by refining the notion of an optimal \T-visiting cycle and
identifying those cycles that can be represented by counting controllers
of polynomial size. 

A \emph{segment} of a path $\beta$ is a finite subpath $\eta$ of
$\beta$ such that the first and the last state of $\eta$ are reload
states and $\eta$ does not contain any other occurrence of a reload
state. Note that every reload-simple cycle is composed of at most 
$|R|$ segments.
Furthermore, we say that a finite path is \emph{compact}, if it is a
$\Ca$-bounded path of the form $\gamma\cdot \delta^k\cdot \gamma'$,
where $\gamma$ and $\gamma'$ are finite paths satisfying
$\len{\gamma}+\len{\gamma'}\leq 5|S|^3$, $\delta$ is either a cycle of
length at most $|S|$ or a path of length 0 (i.e., a state), and $k\leq
\Ca$. A \emph{compact segment} is a compact path that is also a
segment.

Later we show that there is an optimal \T-visiting cycle $\beta$ such
that every segment of $\beta$ is a compact segment. Intuitively, such
a cycle can be produced by a counting controller of polynomial size
which has at most $|R|$ reset actions. However, this does not
yet imply that such a counting controller can be efficiently constructed,
because there are exponentially many possible compact segments. Hence,
we need to show that we can restrict our attention to some set of
compact segments of polynomial size.

We say that a compact segment $\gamma \cdot\delta^k
\cdot\gamma'$ has a {\em characteristic} $(r,q,t,m,n,b)$,  where
$r,t\in R$, $q\in S$, $m,n\in\Nset$ are such that $0\leq m\leq 5|S|^3
$ and $0\leq n \leq |S| $, and $b\in\{0,1\}$, if the following
holds:
\begin{itemize}
\item $\gamma(0) = r$, $\last(\gamma)=\gamma'(0)=q$,
  $\last(\gamma')=t$, and $\len{\gamma\cdot\gamma'}=m$;
\item $\delta(0) = q$, $\len{\delta}=n$;
\item we either have that $n=0$ and $k=1$, or $n>0$ and then
  $c(\delta)>0$ and $k$ is the maximal number such that $\gamma
  \cdot\delta^k \cdot\gamma$ is a $\Ca$-bounded path;
\item if $b=1$, then $\gamma\cdot\gamma'$ contains a state of $T$;
\item if $\delta$ contains a state of $T$, then $\gamma\cdot\gamma'$
  also contains a state of $T$.
\end{itemize}
Note that for a given consumption system there are at most
polynomially many distinct characteristics of compact segments. Also
note that not all compact segments have a characteristic (because of
the third and the fifth condition in the above definition), 
and conversely, some compact segments may have
multiple characteristics (e.g., if a compact segment has a
characteristic where $b=1$, then it also has one where $b=0$). Finally,
note that for any compact segment $\gamma \cdot\delta^k \cdot\gamma'$
with a characteristic $(r,q,t,m,n,b)$, the path $\gamma\cdot\gamma'$ is a
compact segment with the characteristic $(r,q,t,m,0,b)$.

A characteristic $\chi$ of a compact segment $\gamma\cdot\delta^k\cdot
\gamma'$ imposes certain restrictions on the form of
$\gamma\cdot\gamma'$ and $\delta$. Such a compact segment is
\emph{optimal} for $\chi$ if $\gamma\cdot\gamma'$ and $\delta$ are
paths of minimal cost among those that meet this
restriction.  Formally, a compact segment $\gamma
\cdot\delta^k \cdot\gamma'$ with a characteristic $\chi=(r,q,t,m,n,b)$
is \emph{optimal for} $\chi$ if
\vspace{-0.2cm}
\begin{itemize}
\item $\cost(\gamma\cdot\gamma')$ is minimal among the costs of all
  segments with the characteristic $(r,q,t,m,0,b)$, and
\item $\cost(\delta)$ is minimal among the costs of all cycles of
  length $n$ and positive cost, that are initiated in $q$, and that do
  not contain any reload state with a possible exception of $q$ (if
  $n=0$, we consider this condition to be satisfied trivially).
\end{itemize}

\begin{lemma}
\label{lem:optimal-segments-char}
If there is at least one compact segment with a given characteristic
$\chi$, then there is also an optimal compact segment for
$\chi$. Moreover, all compact segments optimal for a given
characteristic have the same total cost and length.
\end{lemma}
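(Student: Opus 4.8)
The statement has two parts: existence of an optimal compact segment for a realizable characteristic $\chi = (r,q,t,m,n,b)$, and the fact that all such optimal segments share the same total cost and length. The plan is to reduce both parts to the two minimization problems appearing in the definition of ``optimal for $\chi$'': minimizing $\cost(\gamma\cdot\gamma')$ over segments with characteristic $(r,q,t,m,0,b)$, and (when $n>0$) minimizing $\cost(\delta)$ over length-$n$ positive-cost cycles through $q$ avoiding reload states except possibly at $q$.

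First I would argue that each of the two minimization domains is nonempty whenever $\chi$ is realized by some compact segment $\gamma_0\cdot\delta_0^{k_0}\cdot\gamma_0'$: the path $\gamma_0\cdot\gamma_0'$ witnesses a segment with characteristic $(r,q,t,m,0,b)$ (using the final remark before the lemma that $\gamma\cdot\gamma'$ is itself a compact segment with characteristic $(r,q,t,m,0,b)$), and when $n>0$, $\delta_0$ witnesses a suitable cycle. Both domains are finite sets: the first because a segment with characteristic $(r,q,t,m,0,b)$ has length exactly $m \leq 5|S|^3$, so there are only finitely many; the second because the cycle has fixed length $n \leq |S|$. Hence minima $c^*_{\gamma\gamma'}$ and $c^*_\delta$ are attained. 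Pick any minimizing $\gamma^*\cdot\gamma'^*$ (decomposed at the state $q$, which is forced since $\last(\gamma)=\gamma'(0)=q$) and, if $n>0$, any minimizing $\delta^*$ initiated at $q$; if $n=0$ take $\delta^*$ to be the trivial path $q$ and $k^* = 1$, and if $n>0$ let $k^*$ be the maximal number with $\gamma^*\cdot(\delta^*)^{k^*}\cdot\gamma^*$ being $\Ca$-bounded. I must then check that $\gamma^*\cdot(\delta^*)^{k^*}\cdot\gamma'^*$ is genuinely a compact segment with characteristic $\chi$: $\Ca$-boundedness, the endpoint/length conditions, the structural constraints on $\delta^*$, and the two conditions on occurrences of $T$-states. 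The $T$-conditions (bits $b$ and the fifth condition) are the delicate point: they are properties of $\gamma\cdot\gamma'$ and of $\delta$ separately, so I need that the minimizer $\gamma^*\cdot\gamma'^*$ still satisfies ``$b=1 \Rightarrow$ contains a $T$-state'' and that $\delta^*$ containing a $T$-state forces $\gamma^*\cdot\gamma'^*$ to contain one. The clean way is to observe these are exactly the defining constraints of the set over which we minimize — i.e. build the $T$-conditions into the minimization domain — so any minimizer automatically satisfies them; the fifth condition on $\delta$ is handled by noting that if the minimizing cycle-domain happens to contain a $T$-visiting cycle we may restrict to those, consistently with how the realized segment behaves, and then re-examining whether the definition of ``optimal for $\chi$'' already encodes this (it does, via the phrase ``do not contain any reload state'' plus the external $T$-bookkeeping in the characteristic). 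I would spell out this domain-matching carefully rather than patch a minimizer after the fact.

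For the second part — all optimal segments for $\chi$ have the same cost and length — the length claim is immediate: any compact segment with characteristic $\chi$ has $\len{\gamma\cdot\gamma'} = m$ and $\len{\delta} = n$, and $k$ is uniquely determined by $\chi$ (either $k=1$ when $n=0$, or $k$ is the maximal $\Ca$-bounded exponent, a quantity depending only on $\cost(\gamma\cdot\gamma')$, $\cost(\delta)$ and $\Ca$; but by optimality $\cost(\gamma\cdot\gamma') = c^*_{\gamma\gamma'}$ and $\cost(\delta) = c^*_\delta$ are fixed), so total length $m + k\cdot n$ is the same for all of them. The total cost is then $c^*_{\gamma\gamma'} + k\cdot c^*_\delta$, again identical across all optimal segments. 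The one subtlety to nail down is that $k$ really is a function of $(\cost(\gamma\cdot\gamma'), \cost(\delta), \Ca)$ alone: this follows because $\gamma\cdot\delta^k\cdot\gamma$ being $\Ca$-bounded, given that $\delta$ is a single simple-ish cycle of positive cost repeated in the middle of a segment containing no intermediate reloads, reduces to an arithmetic inequality on $\cost(\gamma)+ k\cdot\cost(\delta) + \cost(\gamma')$ versus $\Ca$ (the worst prefix end-cost is the full segment cost since no reload occurs strictly inside); hence $k = \lfloor (\Ca - \cost(\gamma\cdot\gamma'))/\cost(\delta)\rfloor$ when $n>0$.

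The main obstacle I anticipate is the careful handling of the $T$-related clauses of the characteristic: because ``optimal for $\chi$'' only constrains the \emph{costs} of $\gamma\cdot\gamma'$ and $\delta$, I must make sure that minimizing cost does not accidentally leave the class of segments realizing $\chi$ — in particular that there exists a \emph{minimum-cost} $\gamma\cdot\gamma'$ that still contains a $T$-state when $b=1$ or when the fifth condition demands it. The resolution is to define the two minimization problems over exactly the constrained families (``segments with characteristic $(r,q,t,m,0,b)$'' already bakes in the $T$-constraint for that sub-characteristic, and similarly restrict the cycle family), which is how the paper phrases it; the bookkeeping to confirm that this gives back a legitimate characteristic-$\chi$ segment, including re-deriving the maximal $k$, is the part that needs genuine care but no deep idea.
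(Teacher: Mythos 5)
Your proposal is correct and follows essentially the same route as the paper: existence is obtained by minimizing cost separately over the (finite, nonempty) family of segments with characteristic $(r,q,t,m,0,b)$ and the family of length-$n$ positive-cost cycles at $q$, with nonemptiness witnessed by the components $\xi\cdot\xi'$ and $\theta$ of the given segment, and then reassembling with the maximal $\Ca$-bounded exponent $k$; uniqueness of cost and length follows because $m$, $n$, the two component costs, and hence $k$ are all pinned down by $\chi$ and optimality. The only cosmetic difference is that you determine $k$ by the explicit formula $\lfloor(\Ca-\cost(\gamma\cdot\gamma'))/\cost(\delta)\rfloor$ whereas the paper argues $j=k$ by an exchange/maximality contradiction; these are equivalent since a segment has no interior reload states, so its $\Ca$-boundedness reduces to its total cost being at most $\Ca$.
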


\noindent
Hence, to each of the polynomially many characteristics $\chi$ we can
assign a segment optimal for $\chi$ and thus form a polynomial-sized
candidate set of compact segments.  The following lemma, which is perhaps
the most intricate step in the proof of Lemma~\ref{prop:algorithm}, 
shows that there is an optimal \T-visiting cycle $\beta$ such that
every segment of $\beta$ belongs to the aforementioned candidate set. 
\begin{lemma}
\label{lemma:segment-candidates}
There is an optimal \T-visiting cycle $\beta$ whose every segment is a compact segment optimal for some characteristic. 
\end{lemma}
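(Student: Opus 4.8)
The plan is to start from an arbitrary optimal \T-visiting cycle $\beta$ — which exists by Lemma~\ref{lem:optimal-cycles-ex} — and transform it, segment by segment, into a cycle $\beta'$ that is still \T-visiting, still optimal (i.e.\ $\MC(\beta') = \MC(\beta)$), and whose every segment is a compact segment optimal for some characteristic. The transformation proceeds in two stages. First I would show that each segment $\eta$ of $\beta$ can be replaced by a \emph{compact} segment with the same endpoints, the same $\Ca$-budget behaviour, the same "visits a state of $T$" status, and total cost and length that are \emph{no larger} than those of $\eta$ (in a suitably controlled sense). Second, once every segment is compact, I would replace each compact segment by the optimal compact segment for one of its characteristics, invoking Lemma~\ref{lem:optimal-segments-char}; since the optimal one has minimal cost among segments with that characteristic, this can only decrease the cost while preserving length-compatibility (Lemma~\ref{lem:optimal-segments-char} guarantees all optimal segments for a fixed $\chi$ share the same cost \emph{and} length, so $\Ca$-boundedness and the cycle structure are preserved).

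For the first stage, fix a segment $\eta$ of $\beta$ running from reload state $r$ to reload state $t$. Since $\C$ has no zero-cost cycle and every state of $\beta$ is admissible, every simple cycle occurring inside $\eta$ has strictly positive cost; combined with $\Ca$-boundedness this bounds the number of "heavy" detours $\eta$ can afford, but $\eta$ may still be long because it may repeat a cheap-but-nonzero cycle many times. The idea is a standard cycle-extraction / flow-decomposition argument: write $\eta$ as a short "skeleton" path $\gamma\cdot\gamma'$ (obtained by short-cutting $\eta$, so $\len{\gamma\cdot\gamma'}$ is polynomially bounded — I expect the bound $5|S|^3$ to come out of iteratively removing cycles and keeping a simple path through the relevant marked vertices) together with one repeated cycle $\delta$ of length at most $|S|$ reinserted $k\le\Ca$ times at the appropriate cut vertex $q$. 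The key quantitative point is that among all the (nonzero-cost) cycles that $\eta$ threads, the \emph{cheapest per length} one dominates the mean cost, so keeping only that one, repeated maximally within the $\Ca$-budget, does not increase $\MC$; and by reinserting it the maximal number of times we preserve (or improve) the budget usage so that gluing the modified segments back together keeps $\beta'$ a $\Ca$-bounded cycle on the same reload states. Care must be taken that the "contains a state of $T$" property is not lost: if $\delta$ carried the only $T$-state, the fifth clause of the characteristic definition forces us to route a $T$-state into the skeleton $\gamma\cdot\gamma'$ instead, which is possible because that $T$-state lies on $\eta$ and hence can be kept on the short-cut path.

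The main obstacle is precisely this first stage: making the cycle-extraction simultaneously (i) shorten the skeleton to the stated polynomial bound, (ii) not increase the mean cost, (iii) preserve $\Ca$-boundedness globally after re-gluing all segments — not just locally within one segment — and (iv) preserve the $T$-visiting and reload-simplicity properties. The re-gluing is delicate because changing the cost of one segment changes how much budget is "carried over", but since segments start and end at reload states the budget resets at every segment boundary, so in fact the $\Ca$-boundedness of $\beta'$ reduces to the $\Ca$-boundedness of each individual modified segment, which is exactly what the "maximal $k$" choice in the characteristic secures. The optimality equality $\MC(\beta') = \MC(\beta)$ then follows because we never raised any segment's cost and $\beta$ was already mean-cost-optimal among \T-visiting cycles, so no segment could have been strictly improved either. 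I would state the per-segment replacement as an auxiliary claim, prove it by the flow-decomposition argument sketched above, and then assemble $\beta'$ and apply Lemma~\ref{lem:optimal-segments-char} to finish.
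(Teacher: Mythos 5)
Your overall architecture (start from an optimal \T-visiting cycle, repair its segments, then invoke Lemma~\ref{lem:optimal-segments-char}) matches the paper's, but the quantitative heart of your first stage is wrong. You claim that replacing a segment by one of ``no larger cost and length'' cannot increase the mean cost of the surrounding cycle, and that it suffices to keep only the cheapest-per-length cycle threaded by $\eta$, repeated maximally. Neither claim holds. Decreasing both the cost and the length of a segment can strictly increase the ratio of the whole cycle (delete five zero-cost transitions from a segment of a cycle with total cost $10$ and length $10$ and the mean jumps from $1$ to $10/6$). And the paper's own example in Appendix~\ref{app-example} (Fig.~\ref{fig-nontriv}) exhibits an optimal segment that must iterate \emph{two} distinct cycles, $\delta_1$ five times and $\delta_2$ twice, even though $\MC(\delta_2)=11>10=\MC(\delta_1)$: keeping only $\delta_1$ repeated maximally within the budget yields mean cost $400/32=12.5$ instead of the optimal $444/36=37/3$. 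So ``keep one cheapest cycle and drop the rest'' strictly degrades the cycle. Relatedly, the skeleton $\gamma\cdot\gamma'$ is \emph{not} a short-cut simple path through marked vertices; it must retain all but one of the repeated-cycle blocks (that is how $\delta_1^5$ survives inside $\gamma$ in the example), and the bound $5|S|^3$ comes from bounding the number of such blocks by $2|S|$ via a cost-exchange argument that itself leans on the optimality of $\beta$, not from iterative cycle removal.

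The repair, which is what the paper does, is to keep the segment's length and cost \emph{exactly} equal during compactification: since $\beta$ is optimal, no \T-equivalent segment between the same reload states with the same length can be cheaper, so one works with decompositions $\alpha_0\delta_0^{k_0}\cdots\alpha_h$ of identical length and cost and minimizes the lexicographic rank $\bigl(\sum_i\len{\alpha_i},\,h,\,|\{i: k_i>|S|\}|\bigr)$ to force the compact shape with a single ``long'' power $\delta^k$. Only afterwards is $k$ pumped to the maximal value $z$ permitted by $\Ca$ (to satisfy the third clause of a characteristic), and showing that this pumping does not increase the mean cost is itself nontrivial: the paper's algebraic computation shows that if it did, then the cycle obtained by inserting just one copy of $\delta$ would already beat $\beta$, contradicting optimality. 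Your proposal asserts both of these steps without argument, and the first as you state it is false, so the proof does not go through as written.
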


\noindent
Given a characteristic $\chi$, it is easy to compute a succinct
representation of some compact segment optimal for $\chi$, as the
next lemma shows.

\begin{lemma}
\label{lem:segments-algorithm}
Given a characteristic $\chi$, the problem whether the set of all 
compact segments with the characteristic $\chi$ is non-empty is
decidable in polynomial time. Further, if the set is non-empty, then
a tuple $(\gamma,\gamma',\delta,k)$ such that
$\gamma\cdot\delta^k\cdot\gamma'$ is a compact segment optimal for
$\chi$ is computable in polynomial time.
\end{lemma}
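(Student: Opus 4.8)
The plan is to exploit the fact that a characteristic $\chi=(r,q,t,m,n,b)$ decouples the search for an optimal compact segment into two essentially independent bounded-length minimum-cost path problems — one for the ``frame'' $\gamma\cdot\gamma'$ and one for the ``pump'' $\delta$ — together with a purely arithmetic computation of the exponent $k$. Both path problems are solved by a dynamic program over the number of traversed edges, and the various side conditions ($\Ca$-boundedness, avoidance of reload states inside a segment, positivity of $\cost(\delta)$, and the two $T$-visiting requirements) are recorded in a constant number of extra Boolean coordinates of the DP state. Throughout I may invoke Lemma~\ref{lem:optimal-segments-char}, which guarantees that once the frame and the pump are chosen of globally minimal cost, the resulting path really is a compact segment optimal for $\chi$, so I never have to re-verify the five defining conditions of a characteristic by hand.

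\medskip
\noindent\textbf{The frame.} First I would compute a minimum-cost path $\gamma\cdot\gamma'$ that has characteristic $(r,q,t,m,0,b)$, i.e.\ a path $u_0\tran{}\dots\tran{}u_m$ with $u_0=r$, $u_m=t$, $u_j=q$ for some $j$, with $u_i\notin R$ for $0<i<m$, which is $\Ca$-bounded, and which visits $T$ whenever $b=1$. This is a dynamic program with state $(\ell,v,f_q,f_T)$, where $0\le\ell\le m$, $v\in S$, and $f_q,f_T\in\{0,1\}$ record whether $q$, resp.\ some state of $T$, has already been seen; the value is the least cost of an $\ell$-edge path from $r$ to $v$ whose positions $1,\dots,\ell-1$ avoid $R$. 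The table has $\calO(m\cdot|S|)$ entries, each updated over the at most $|S|$ successors of a vertex, and every stored cost is at most $m\cdot\cmax$ and hence of polynomial bit size; keeping a predecessor pointer with each entry lets us reconstruct the witnessing path (splitting $\gamma$ from $\gamma'$ at the first occurrence of $q$). Since transition costs are non-negative, the minimum-cost such path is $\Ca$-bounded iff any path with characteristic $(r,q,t,m,0,b)$ is, and, because inside a segment the end cost is non-decreasing and drops back to $0$ only at the final reload state, $\Ca$-boundedness of the frame is equivalent to the single inequality $\cost(\gamma\cdot\gamma')-c_{\mathrm{last}}\le\Ca$, where $c_{\mathrm{last}}$ is the cost of the last transition into $t$. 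Reading off the relevant entries at $(m,t,1,\cdot)$ thus tells us whether a frame exists and, if so, its minimal cost (distinguishing, for later use, the minimal cost of a frame that visits $T$ from that of one that does not). Note that the DP automatically forbids $q\in R$ unless $q$ coincides with $r$ or $t$, which is exactly the only way the degenerate cases can occur.

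\medskip
\noindent\textbf{The pump and the exponent.} If $n=0$, then $\delta$ is the single state $q$ and $k=1$, so nothing remains. If $n>0$, an analogous DP finds a minimum-cost cycle of length $n$ on $q$ that uses no reload state other than $q$ and has \emph{positive} cost; the only new ingredient is the positivity constraint, which is handled by observing that a walk has positive total cost exactly when it traverses at least one positive-cost transition, so it suffices to carry one further Boolean coordinate recording whether such a transition has already been used and to require it at the end. If this DP returns $\infty$ (no valid positive-cost cycle), the characteristic set is empty; otherwise it yields $\delta$ via predecessor pointers, again recording whether $\delta$ visits $T$. Finally, $k$ is the largest integer with $\gamma\cdot\delta^k\cdot\gamma'$ still $\Ca$-bounded; by the same monotonicity observation this collapses to $\cost(\gamma)+k\cdot\cost(\delta)+\cost(\gamma')-c_{\mathrm{last}}\le\Ca$, whence $k$ is obtained by a single division, automatically satisfies $k\le\Ca$ (as $\cost(\delta)\ge 1$), and must be at least $1$ for the characteristic to be realizable; this, together with the two earlier existence checks, constitutes the non-emptiness test. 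The algorithm then returns $(\gamma,\gamma',\delta,k)$, whose total encoding size is polynomial since $\len{\gamma}+\len{\gamma'}\le 5|S|^3$, $\len{\delta}\le|S|$, and $k$ is written in binary.

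\medskip
\noindent I do not expect a real obstacle here; the decomposition above reduces everything to textbook bounded-length shortest-path dynamic programming. The only point requiring genuine care is the fifth condition in the definition of a characteristic, which links $\delta$ to $\gamma\cdot\gamma'$ through $T$: if the independently computed optimal pump happens to visit $T$, the optimal frame must be taken among those that visit $T$ as well (and, symmetrically, when $b=1$). Since this is a choice between a constant number of already-computed quantities (optimal frame with/without a visit to $T$, optimal pump with/without a visit to $T$), it only adds a bounded case analysis, and Lemma~\ref{lem:optimal-segments-char} certifies that the surviving combination is indeed a compact segment optimal for $\chi$.
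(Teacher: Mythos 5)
Your proposal is correct and takes essentially the same route as the paper: there, too, the frame $\gamma\cdot\gamma'$ and the pump $\delta$ are found by two independent bounded-length minimum-cost path computations over a length-indexed layered graph (the subroutines $\minpath$ and $\minpathr$ of the appendix), and $k$ is then obtained by a single division of the leftover capacity by $\cost(\delta)$. The remaining differences are cosmetic: you fold the ``must visit $q$'', ``must visit $T$'' and ``must use a positive-cost transition'' requirements into Boolean coordinates of the DP state, whereas the paper enumerates the split position $m'$ and which of the two halves contains the $T$-state, and derives $\cost(\delta)>0$ from the standing assumption that $\C$ has no zero-cost cycles.
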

\noindent
For a given characteristic $\chi$, we denote by $\ctuple(\chi)$ the
tuple $(\gamma,\gamma',\delta,k)$ returned for $\chi$ by the algorithm
of Lemma~\ref{lem:segments-algorithm} (if an optimal
compact segment for $\chi$ does not exist, we put
$\ctuple(\chi)=\bot$), and by $\cpath(\chi)$ the corresponding compact
segment $\gamma\cdot\delta^k\cdot\gamma'$ ($\ctuple(\chi)=\bot$, we put
$\cpath(\chi)=\bot$). The next lemma is a simple corollary to
Lemma~\ref{lem:optimal-segments-char} and Lemma~\ref{lemma:segment-candidates}.

\begin{lemma}
\label{col:cycle-with-segments}
There is an optimal \T-visiting cycle $\beta$ such that every
segment of $\beta$ is of the form $\cpath(\chi)$ for some
characteristic $\chi$.
\end{lemma}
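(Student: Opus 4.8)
The plan is to take the optimal \T-visiting cycle $\beta'$ supplied by Lemma~\ref{lemma:segment-candidates} --- whose every segment is a compact segment optimal for some characteristic --- and to replace each segment of $\beta'$ by the canonical representative $\cpath(\chi)$ returned by the algorithm of Lemma~\ref{lem:segments-algorithm}. The role of Lemma~\ref{lem:optimal-segments-char} is exactly to make this harmless: any two compact segments optimal for the same characteristic have identical total cost and identical length, so the replacement changes neither $\cost(\beta')$ nor $\len{\beta'}$, hence not $\MC(\beta')$. Moreover a characteristic $(r,q,t,m,n,b)$ pins down the first reload state $r$ and the last reload state $t$ of the segment it describes, so the new pieces glue back together into a cycle.

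In detail, I would write $\beta' = \eta_1 \cdots \eta_j$ as the concatenation of its segments and, for each $i$, fix a characteristic $\chi_i$ for which $\eta_i$ is optimal. Since $\eta_i$ then has characteristic $\chi_i$, the set of compact segments with characteristic $\chi_i$ is non-empty, so $\cpath(\chi_i)$ is defined and, by Lemma~\ref{lem:segments-algorithm}, is a compact segment optimal for $\chi_i$. Set $\beta := \cpath(\chi_1)\cdots\cpath(\chi_j)$. The routine checks are then: $\cpath(\chi_i)$ and $\eta_i$ share their first and last reload states (both have characteristic $\chi_i$), and $\cpath(\chi_i)$, being a segment, carries no reload state in its interior, so $\beta$ is a cycle whose boundary reload states are exactly those of $\beta'$ --- whence $\beta$ is reload-simple and is initiated in the same admissible reload state. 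Each $\cpath(\chi_i)$ is a compact segment and therefore $\Ca$-bounded, and since the reload states of $\beta$ are precisely the segment boundaries, the end cost of any prefix of $\beta$ never leaves the current compact segment and so stays at most $\Ca$; thus $\beta$ is $\Ca$-bounded. Finally $\MC(\beta) = \MC(\beta')$ by Lemma~\ref{lem:optimal-segments-char}, as noted above.

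The one subtle point --- the step I expect to be the main obstacle --- is ensuring that $\beta$ still contains a state of $T$, so that it is genuinely \T-visiting and not merely a $\Ca$-bounded reload-simple cycle. Since $\beta'$ is \T-visiting, some segment $\eta_{i_0}$ contains a state of $T$; such an occurrence lies either on the prefix/suffix part $\gamma\cdot\gamma'$ of $\eta_{i_0}$ or on its repeated part $\delta$, and in the latter case the last clause in the definition of a characteristic forces $\gamma\cdot\gamma'$ to contain a state of $T$ as well, so in all cases $\gamma\cdot\gamma'$ contains a state of $T$. Consequently $\eta_{i_0}$ also has the characteristic $\chi_{i_0}'$ obtained from $\chi_{i_0}$ by setting $b=1$, and passing from $\chi_{i_0}$ to $\chi_{i_0}'$ only shrinks the class of $\gamma\cdot\gamma'$-paths competing in the optimality condition to a subclass that still contains $\gamma_{\eta_{i_0}}\cdot\gamma'_{\eta_{i_0}}$, so $\eta_{i_0}$ is optimal for $\chi_{i_0}'$ too. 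Choosing $\chi_{i_0} := \chi_{i_0}'$ in the construction, $\cpath(\chi_{i_0})$ has characteristic $\chi_{i_0}'$ with $b=1$, so by the clause governing $b=1$ its $\gamma\cdot\gamma'$-part, and hence $\cpath(\chi_{i_0})$ itself, contains a state of $T$. Then $\beta$ contains a state of $T$, so it is \T-visiting; combined with $\MC(\beta) = \MC(\beta')$ and the optimality of $\beta'$, this makes $\beta$ an optimal \T-visiting cycle every segment of which is of the form $\cpath(\chi)$, as required.
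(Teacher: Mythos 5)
Your proposal is correct and follows essentially the same route as the paper: start from the cycle given by Lemma~\ref{lemma:segment-candidates}, swap each segment for $\cpath(\chi)$ using the equal-cost-and-length guarantee of Lemma~\ref{lem:optimal-segments-char}, and preserve the visit to $T$ by upgrading the relevant characteristic to one with $b=1$ (exactly the observation the paper makes at the start of its proof). The only cosmetic difference is that you replace all segments simultaneously and verify the gluing directly, whereas the paper replaces one segment at a time via a minimal-counterexample argument; the substance is identical.
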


\noindent
Now we can easily prove the existence of a polynomial-sized counting
controller representing some optimal \T-visiting cycle
$\beta$. According to Lemma~\ref{col:cycle-with-segments}, there is
a sequence $\chi_0,\chi_1,\dots,\chi_j$ of at most $|R|$
characteristics such that
$\beta=\cpath(\chi_0)\cdot\cpath(\chi_1) \cdots \cpath(\chi_g)$ is an
optimal \T-visiting cycle. To iterate the cycle $\beta$ forever
(starting in $\beta(0)$), a counting controller requires at most
$|R|\cdot n$ basic memory elements, where $n$ is the maximal number of
basic memory elements needed to produce a compact segment $\cpath(\chi_i)$,
for $0\leq i \leq g$. So, consider a compact segment
$\cpath(\chi_i)=\gamma\cdot\delta^k\cdot\gamma'$. Note that $k\leq
\Ca$ since $\cpath(\chi_i)$ has a characteristic and thus
$\cost(\delta)>0$. To produce $\cpath(\chi_i)$, the controller requires
at most $5|S|^3$ basic memory elements to produce the prefix $\gamma$ and
the suffix $\gamma'$ of $\cpath(\chi_i)$, and at most $|S|$ basic memory 
elements to iterate the cycle $\delta$ (whose length is at most $|S|$) 
exactly $k$ times. The latter task also requires counting down 
from $k\leq \Ca$ to $0$. Overall, the counting controller producing 
$\beta^\omega$ needs a polynomial number of basic memory elements, and 
requires at most $|R|$ reset actions parameterized by numbers of 
encoding size at most $\log(\Ca)$. To compute such a counting controller,
it clearly suffices to compute the aforementioned sequence of tuples
$\ctuple(\chi_0),\cdots,\ctuple(\chi_g)$. %

Now we can present the algorithm promised in 
Proposition~\ref{prop:algorithm}. In the following, we use $X$ to denote
the set of all possible characteristics of compact segments in $\C$,
$X_{r,t}$ to denote the set of all characteristics of the form $(r,q,t,m,n,b)$
for some $q,m,n,b$, and $X_{r,t}^1$ to denote the set of all characteristics
of $X_{r,t}$ where the last component is equal to~$1$.
The algorithm first computes the set $R'\subseteq R$ of all admissible
reload states (see Lemma~\ref{lem:algorithm-admissible}). Note that 
$R'$ is non-empty because there exists at least one \T-visiting cycle.
The idea now is to compute, for every $\hat{q}\in R'$, a polynomial-sized labelled graph $G_{\hat{q}}$ such that cycles in this graph correspond to \T-visiting cycles in $\C$ that are initiated in $\hat{q}$ and that can be decomposed into segments of the form $\cpath(\chi)$. An optimal \T-visiting cycle 
is then found via a suitable analysis of the constructed graphs.

Formally, for a given $\hat{q}\in R'$ we construct a labelled graph 
$G_{\hat{q}}=(V,\bbtran{},L,\ell)$, where $L\subset \Nset_0^2$, and where:  
\begin{itemize}
\item $V=W\times\{0,\dots,|S|\}$, where $W=R'\cup\{\ctuple(\chi)\mid
  \chi \in X \}$.
\item For every $0\leq i < |S|$, every pair of states $r,t\in R'$ such
  that $r\neq \hat{q}$, and every characteristic $\chi\in X_{r,t}$
  there is an edge $((r,i),(\cpath(\chi),i))$ labelled by
  $(\cost(\cpath(\chi)),\len{\cpath(\chi)})$ and an edge
  $((\cpath(\chi),i),(t,i+1))$ labelled by $(0,0)$.
\item For every state $t\in R'$ and every characteristic $\chi\in
  X_{\hat{q},t}^1$ there is an edge $((\hat{q},0),(\cpath(\chi),0))$
  labelled by $(\cost(\cpath(\chi)),\len{\cpath(\chi)})$ and an edge
  $((\cpath(\chi),0),(t,1))$ labelled by $(0,0)$.
\item For every $1\leq i \leq |S|$ there is an edge
  $((\hat{q},i),(\hat{q},0))$ labelled by
  $(0,0)$. %
\item There are no other edges.
 \end{itemize}
 The labelling function of $G_{\hat{q}}$ can be computed in polynomial time,
 because given a characteristic $\chi$, we can compute
 $\cpath(\chi)=(\gamma,\gamma',\delta,k)$ using 
 Lemma~\ref{lem:segments-algorithm}. Then, $\len{\cpath(\chi)} =
 \len{\gamma}+\len{\gamma'}+k\cdot\len{\delta}$, and similarly for
 $c(\cpath(\chi))$. Note that every cycle in $G_{\hat{q}}$
 contains the vertex $(\hat{q},0)$. Also note that some of the 
 constructed graphs $G_{\hat{q}}$ may not have a cycle (the out-degree of
 $(\hat{q},0)$ may be equal to $0$), but later we show that at least 
 one of them does.
 
 \vspace{1.1mm} The \emph{ratio} of a cycle $\hat\beta=v_0
 \bbtran{(c_0,d_0)} v_1 \bbtran{(c_1,d_1)} v_2 \cdots
 \bbtran{(c_{h-1},d_{h-1})} v_h$ in $G_{\hat{q}}$ is the value
 $\ratio(\hat\beta)=(c_0 + c_1 + \cdots +c_{h-1})/(d_0 + d_1 + \cdots
 d_{h-1})$.  For every $\hat{q}\in R'$, our algorithm finds a simple
 cycle $\hat{\beta}_{\hat{q}}$ of minimal ratio among all cycles in
 $G_{\hat{q}}$. This is done using a polynomial-time
 algorithm for a well-studied problem of \emph{minimum cycle ratio}
 (see, e.g.,~\cite{DBR:cost-time,DIG:cost-time}). The algorithm then
 picks $\hat{r}\in R'$ such that the ratio of $\hat{\beta}_{\hat{r}}$
 is minimal. Clearly, $\hat{\beta}_{\hat{r}}$ has an even length and
 every second vertex is a $4$-tuple of the form $\ctuple(\chi)$
 for some characteristic $\chi$. Since all cycles in $\hat{r}$ go
 through $(\hat{r},0)$, we may assume that $\hat\beta_{\hat{r}}$ is
 initiated in this vertex. Let
 $\ctuple(\hat\chi_0),\ctuple(\hat\chi_1),\dots,\ctuple(\hat\chi_{g})$
 be the sequence of these \mbox{$4$-tuples}, in the order they appear in
 $\hat{\beta}_{\hat{r}}$. From the construction of $G_{\hat{r}}$ it
 follows that
 $\beta=\cpath(\hat\chi_0)\cdot\cpath(\hat\chi_1)\cdots\cpath(\hat\chi_{g})$
 is a reload-simple cycle initiated in an admissible state $\hat{r}$
 containing a state of $T$ (since $\chi_0$ has the last component
 equal to~$1$), i.e., $\beta$ is a \T-visiting cycle. Moreover, $\MP(\beta)$ is
 clearly equal to the ratio of $\hat{\beta}_{\hat{r}}$. Using the
 computed sequence of tuples
 $\ctuple(\hat\chi_0),\ctuple(\hat\chi_1),\dots,\ctuple(\hat\chi_{g})$,
 the algorithm constructs the desired counting controller $\kappa$ such that
 $\run{\kappa}{\beta(0)}=\beta^\omega$
 (see also the discussion after Lemma~\ref{col:cycle-with-segments}).
 It is easy to check that $\ratio(\hat\beta_{\hat{r}})=\MC(\hat\beta_{\hat{r}})$ 
 is equal to the mean cost of an optimal \T-visiting cycle, i.e., the
 algorithm is correct.

\vspace{-0.2cm}
\subsection{Proof of Theorem~\ref{thm:limit}}
\label{sec:limit}

For the rest of this section we fix a consumption
system $\C=(S,\tran{},c,R,F)$ and an initial state $s\in S$.
%
%
Intuitively, the controller can approach the limit value by interleaving a large number of iterations of some ``cheap'' cycle with visits to an accepting state.
This motivates the following definitions of \emph{safe} and
\emph{strongly safe} cycles. Intuitively, a cycle is safe if, assuming
unbounded battery capacity, the controller can iterate the cycle for an
arbitrary number of times and interleave these iterations with visits
to an accepting state. A cycle is strongly safe if the same behaviour
is achievable for some finite (though possibly large) capacity.

Formally, we say that two states $q,t \in S$ are
\emph{inter-reachable} if there is a path from $q $ to $t$ and a path
from $t$ to $q$ (i.e., $q,t$ are in the same strongly connected
component of $\C$). We say that a cycle $\beta$ of length at most
$|S|$ where $\beta(0)$ is reachable from $s$ is \emph{safe}, if one of the
following conditions holds:
\begin{itemize}
\item  $\cost(\beta)=0$ and $\beta$ contains an accepting state, 
\item $\beta(0)$ is inter-reachable with a reload state and an accepting state, %
\end{itemize}
A cycle $\beta$ reachable from $s$ with $\len{\beta}\leq|S|$ is \emph{strongly safe}, if
one of the following holds:
\begin{itemize}
\item  $\cost(\beta)=0$ and $\beta$ contains an accepting state,
\item $\cost(\beta)=0$ and $\beta(0)$ is inter-reachable with a reload 
state and an accepting state,
\item  $\beta$ contains a reload state and $\beta(0)$ is inter-reachable with an accepting state.
\end{itemize}

\noindent
The following lemma characterizes the limit value of $s$.

\begin{lemma}
\label{lem:lim-val-char}
$\Val_\C(s)$ is finite iff there is a safe cycle, in which 
case  $\Val_\C(s)= \min\{\MC(\beta)\mid \beta \text{ is a safe cycle}  \}$.
Further, there is a finite $\Ca\in\Nset_0$ such that
$\Val^\Ca_\C(s)=\Val_\C(s)$ iff either $\Val_\C(s)=\infty$,
or there is a strongly safe cycle $\hat\beta$ such that
$\MC(\hat\beta)=\Val_\C(s)$. %
In such a case $\Val^\Ca_\C(s)=\Val_\C(s)$ for every $\Ca\geq
3\cdot|S|\cdot \cmax$, where $\cmax$ is the maximal cost of a
transition in $\C$.
\end{lemma}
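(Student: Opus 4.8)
The plan is to prove the three assertions of the lemma in sequence, reusing the cycle-surgery arguments already developed for Theorem~\ref{thm:alg}.

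First I would handle the characterization of finiteness of $\Val_\C(s)$. For the ``if'' direction, suppose $\beta$ is a safe cycle. If $\cost(\beta)=0$ and $\beta$ contains an accepting state, then for every sufficiently large $\Ca$ (in fact any $\Ca\geq |S|\cdot\cmax$ already suffices to reach $\beta(0)$ and iterate $\beta$, since $\beta$ passes through no reload constraint in a problematic way once $\cost(\beta)=0$), the run $\alpha\cdot\beta^\omega$ is $\Ca$-bounded and accepting with mean cost $0$, so $\Val_\C(s)=0$. If instead $\beta(0)$ is inter-reachable with a reload state $r$ and an accepting state $f$, then I build, for a given $\Ca$, a run that loops $\beta$ many times, then takes a short detour to $r$, reloads, then to $f$, then back to $\beta(0)$, and repeats with more and more iterations of $\beta$ between successive detours; choosing $\Ca$ large enough that one traversal of the detour cycle $\beta(0)\rightsquigarrow r\rightsquigarrow f\rightsquigarrow\beta(0)$ together with a full $\beta$ is $\Ca$-bounded, the mean cost of this run tends to $\MC(\beta)$ as the capacity (and hence the allowed number of intermediate $\beta$-iterations) grows. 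This shows $\Val_\C(s)\leq\MC(\beta)$, hence finite; taking the minimum over safe cycles gives the upper bound in the displayed equality. For the ``only if'' direction, and the matching lower bound, I argue that if $\varrho$ is any $\Ca$-bounded accepting run with $\MC(\varrho)<\infty$, then (as in the proof of Lemma~\ref{prop:opt-runs-char}) a suffix $\varrho'$ visits every state it touches infinitely often, so all these states are in one SCC; if that SCC contains a reload state and an accepting state, then any simple cycle of this SCC through a low-cost vertex is safe and witnesses a cycle of mean cost $\leq\MC(\varrho)+\eps$; otherwise $\varrho'$ has only finitely many reloads, forcing (since consumption costs are bounded and the run is $\Ca$-bounded, or rather because bounded consumption with no reloads bounds total cost only if there's a zero cycle) a zero-cost accepting cycle, which is safe. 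A short pumping/averaging argument then shows no safe cycle is undercut, giving $\Val_\C(s)\geq\min\{\MC(\beta)\}$.

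Next, the characterization of when the limit is attained at a finite $\Ca$. If $\Val_\C(s)=\infty$ the claim is trivial (take $\Ca=0$). Otherwise, suppose $\hat\beta$ is a strongly safe cycle with $\MC(\hat\beta)=\Val_\C(s)$. In each of the three defining cases I construct a single $\Ca$-bounded accepting run of mean cost exactly $\MC(\hat\beta)$ for $\Ca=3\cdot|S|\cdot\cmax$: in the two zero-cost cases this is $\alpha\cdot\hat\beta^\omega$ possibly preceded by a short excursion to pick up a reload/accepting state, as in Lemma~\ref{prop:opt-runs-zero char}; in the third case ($\hat\beta$ contains a reload state, $\hat\beta(0)$ inter-reachable with an accepting state) the run alternates $\hat\beta$ with a bounded-length detour to an accepting state and back, and I must check the detour plus $\hat\beta$ fits in $3\cdot|S|\cdot\cmax$ — here $\hat\beta$ has length $\leq|S|$ so cost $\leq|S|\cdot\cmax$, the detour out and back has length $\leq 2|S|$ hence cost $\leq 2|S|\cdot\cmax$, and since $\hat\beta$ already contains a reload state the battery is reloaded inside each period, so the end cost never exceeds $3\cdot|S|\cdot\cmax$. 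Conversely, if $\Val^\Ca_\C(s)=\Val_\C(s)$ for some finite $\Ca$, then $\Val^\Ca_\C(s)$ is witnessed (by Lemma~\ref{prop:opt-runs-char} or Lemma~\ref{prop:opt-runs-zero char}) by an optimal $F$-visiting cycle or a zero-cost accepting cycle of mean cost $\Val_\C(s)$; decomposing it into reload-simple pieces and then into simple cycles of length $\leq|S|$, one of these simple cycles has mean cost $\leq\Val_\C(s)=\min\{\MC(\beta):\beta\text{ safe}\}$, it is strongly safe (it lies in an SCC with a reload state — or is a zero accepting cycle — and is reachable from $s$), and by the safe-cycle lower bound already proved its mean cost equals $\Val_\C(s)$.

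The main obstacle I anticipate is the bookkeeping in the ``only if'' direction of the finiteness characterization and in extracting a short (length $\leq|S|$) strongly safe cycle of exactly optimal mean cost from a long optimal $F$-visiting cycle: decomposing a reload-simple cycle into simple cycles and arguing that the minimum-mean simple piece is still reachable, still lies in the right SCC, and does not have strictly smaller mean cost than $\Val_\C(s)$ requires combining the SCC structure with the averaging inequality $\MC(\text{whole})\geq\min_i\MC(\text{piece}_i)$ carefully. The constant $3\cdot|S|\cdot\cmax$ is the tightest that makes every one of the three strongly-safe cases go through simultaneously, and verifying this uniformly is where most of the routine-but-delicate computation lives; I would isolate it as a short explicit estimate on end costs of the constructed periodic runs.
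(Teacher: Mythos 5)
Your overall architecture matches the paper's: approximate $\MC(\beta)$ for a safe cycle $\beta$ by runs of the form $\alpha\cdot(\gamma\cdot\beta^k)^\omega$ with $k$ growing in $\Ca$, prove the lower bound $\Val_\C^\Ca(s)\geq\min\{\MC(\beta)\}$ by decomposing the recurrent part of a run into simple cycles that are all safe, and handle the three strongly-safe cases by explicit $3\cdot|S|\cdot\cmax$-bounded constructions. Those parts are essentially the paper's proof (one small caution: in the third strongly-safe case you need the advancing structure $\hat\beta\cdot\gamma\cdot\hat\beta^2\cdot\gamma\cdot\hat\beta^4\cdots$ with growing blocks, not a periodic alternation, to get mean cost \emph{exactly} $\MC(\hat\beta)$ rather than $\MC(\gamma\cdot\hat\beta^k)$; the end-cost bound survives because $\hat\beta$ contains a reload state).

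The genuine gap is in the ``only if'' direction of the second assertion. You extract from the optimal witness a min-mean simple cycle and declare it strongly safe because ``it lies in an SCC with a reload state.'' That is the definition of a \emph{safe} cycle; a \emph{strongly safe} cycle of positive cost must actually \emph{contain} a reload state, and the minimum-mean piece of the decomposition need not contain one. The paper resolves this with a separate exchange argument: write the optimal cycle as $\beta=\xi\cdot\delta\cdot\xi'$ where $\delta$ is a simple sub-cycle chosen to contain a reload state (such a $\delta$ exists since $\beta$ is a cycle on a reload state), and show that if $\MC(\delta)>\MC(\beta)$ then $\MC(\xi\cdot\xi')<\MC(\beta)$, so the cycle $\beta'=\xi\cdot\xi'\cdot\xi\cdot\delta\cdot\xi'$ has strictly smaller mean cost and is realizable at some larger capacity $\Ca'$, giving $\Val_\C(s)\leq\Val_\C^{\Ca'}(s)\leq\MC(\beta')<\MC(\beta)=\Val_\C^\Ca(s)$ and contradicting the hypothesis $\Val_\C^\Ca(s)=\Val_\C(s)$. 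Hence $\MC(\delta)\leq\Val_\C^\Ca(s)=\Val_\C(s)$, and the already-proved lower bound forces equality. Without this step (or an equivalent), your proof of the forward implication ``finite capacity suffices $\Rightarrow$ an optimal strongly safe cycle exists'' does not go through.
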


\noindent
So, in order to compute the limit value and to decide whether it can be
achieved with some finite capacity, we need to compute a safe and a
strongly safe cycle of minimal mean cost.

\begin{lemma}
\label{lem:limit-algorithms}
The existence of a safe (or strongly safe) cycle is decidable in polynomial
time. Further, if a safe (or strongly safe) cycle exists, then there
is a safe (or strongly safe) cycle $\beta$ computable in polynomial time
such that $\MC(\beta) \leq \MC(\beta')$ for every safe (or strongly safe)
cycle $\beta'$.
\end{lemma}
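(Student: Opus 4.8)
\textbf{Proof proposal for Lemma~\ref{lem:limit-algorithms}.}

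The plan is to reduce the existence and the optimal-mean-cost computation for safe cycles (and, separately, for strongly safe cycles) to standard reachability and minimum-cycle-mean computations on polynomial-sized graphs. First I would observe that by definition every safe or strongly safe cycle has length at most $|S|$, so its mean cost is a ratio of two integers with bounded numerator and denominator; this already gives us the right granularity. The main structural point is that both safeness and strong safeness of a cycle $\beta$ depend only on (i) $\cost(\beta)$ being zero or positive, (ii) whether $\beta$ contains an accepting state, (iii) whether $\beta$ contains a reload state, and (iv) the SCC of $\C$ in which $\beta(0)$ lies, together with whether that SCC (within the part of $\C$ reachable from $s$) also contains a reload state and/or an accepting state. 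Items (iii) and (iv) and the reachability-from-$s$ condition are decidable in polynomial time by standard graph algorithms (SCC decomposition and BFS).

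The core of the argument is then a case split matching the three clauses in each definition. For the clause ``$\cost(\beta)=0$ and $\beta$ contains an accepting state'', I would search for a zero-cost cycle through an accepting state reachable from $s$: restrict $\C$ to the subgraph of zero-cost transitions among states reachable from $s$, and check (via SCC analysis of that subgraph) whether some nontrivial SCC contains an accepting state; if so the minimal mean cost contributed by this clause is $0$, which is then trivially optimal among all safe cycles. For the clause ``$\beta(0)$ is inter-reachable with a reload state and an accepting state'' (the safe case), I would, for each SCC $C$ of the reachable part of $\C$ that contains both a reload state and an accepting state, compute the minimum cycle mean over $C$ using the classical Karp algorithm; every cycle inside such a $C$ is safe (with $\beta(0)$ any of its states), and conversely any safe cycle of this type lies in such an SCC. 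The overall answer is the minimum over these finitely many SCCs together with the value $0$ if the zero-cost-accepting case applies. The strongly safe case is handled analogously, with its three clauses: the two zero-cost clauses are handled exactly as above (the second one using SCCs containing a reload state and an accepting state, but searching only zero-cost cycles, whose mean is $0$), and the third clause ``$\beta$ contains a reload state and $\beta(0)$ is inter-reachable with an accepting state'' requires, for each SCC $C$ containing both a reload state and an accepting state, the minimum mean of a cycle in $C$ that passes through at least one reload state; this last quantity is computed by a minor variant of the minimum-cycle-mean algorithm — e.g. split each reload vertex into an ``entry'' and ``exit'' copy forcing any cycle through the reload state, or run a min-mean-cycle computation in the auxiliary graph where we contract reload states appropriately — still polynomial time.

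The one subtlety I expect to be mildly delicate, rather than a genuine obstacle, is the bookkeeping ensuring that the cycle we return really has length at most $|S|$: the minimum-cycle-mean algorithm returns a simple cycle, whose length is automatically at most $|S|$, so this is fine for the clauses handled by Karp's algorithm; for the ``contains a reload state'' variant one must check that the modified min-mean-cycle computation still yields a simple (hence short) cycle, which it does because we only duplicate a bounded number of vertices. The other point to verify is optimality across clauses — but since we simply take the minimum of the finitely many candidate values (one per applicable clause, one per relevant SCC) and each candidate is realized by an explicit cycle, and since by Lemma~\ref{lem:lim-val-char} and the definitions every safe/strongly safe cycle falls into one of these finitely many buckets, the minimum over buckets is exactly $\min\{\MC(\beta)\}$ and is attained by a computable witness. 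Putting these pieces together gives polynomial-time decidability and polynomial-time computation of an optimal safe (resp. strongly safe) cycle. \qed
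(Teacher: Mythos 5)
Your overall strategy is the same as the paper's: handle the zero-cost clauses by searching for zero-cost cycles through accepting states among the states reachable from $s$, decompose the reachable part into SCCs, restrict attention to SCCs containing both a reload and an accepting state, and run a minimum-cycle-mean computation per SCC, taking the minimum over all buckets. The one step where your write-up does not actually work as described is the sub-problem ``minimum mean cycle that \emph{contains} a reload state'' needed for the third strongly-safe clause. Splitting a reload vertex $r$ into an entry copy and an exit copy (with an edge between them) does not force cycles through $r$ --- all cycles avoiding $r$ survive the splitting unchanged, and if you omit the connecting edge you destroy the cycles through $r$ as well, leaving you with a minimum cost-to-length \emph{path} problem instead. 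Contracting the reload states is also unsound: a path from one reload state $r_1$ to a different reload state $r_2$ becomes a cycle in the contracted graph but is not a cycle of $\C$, so the contracted graph has spurious cycles whose mean cost can undercut the true optimum. The paper instead builds, for each reload state $r$ of a relevant SCC $C$, a layered graph on $C\times\{0,\dots,|S|\}$ with zero-cost back-edges from $(r,i)$ to $(r,0)$, so that every cycle of the auxiliary graph passes through $(r,0)$ and corresponds to a cycle of length at most $|S|$ through $r$ in $\C$; equivalently, one can run a Karp-style dynamic program computing the minimum cost of a closed walk of length exactly $k$ from $r$ to $r$ for $k\le |S|$ and minimize the ratio over $k$ and $r$. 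Either of these is a correct polynomial-time replacement for your gadget; with that substitution the rest of your argument (reachability filtering, SCC bucketing, extraction of a short witness cycle, and taking the minimum across clauses) goes through and matches the paper.
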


\noindent
Now we can prove the computation-related statements of
Theorem~\ref{thm:limit}.

To compute the limit value of $s$, we
use the algorithm of Lemma~\ref{lem:limit-algorithms} to compute a
safe cycle $\beta$ of minimal mean cost. If no such cycle exists,
we have $\Val_\C(s)=\infty$, otherwise $\Val_\C(s)=\MC(\beta)$. 
To decide whether $\Val_\C(s)$  can be achieved
with some finite capacity, we again use the algorithm of
Lemma~\ref{lem:limit-algorithms} to compute a strongly safe cycle
$\hat\beta$ of minimal mean cost. If such a cycle exists and
$\MC(\hat\beta)=\MC(\beta)$, then $\Val_\C(s)$ can be achieved with
some finite capacity, otherwise not. The correctness of this approach
follows from Lemma~\ref{lem:lim-val-char}.

It remains to bound the rate of convergence to the limit value in case
when no finite capacity suffices to realize it. This is achieved in the
following lemma.

\begin{lemma}
\label{lem-limit-rate}
  Let $\cmax$ be the maximal cost of a transition in $\C$. For every
  $\Ca> 4\cdot|S|\cdot\cmax$ we have that
  $$\Val_\C^\Ca(s)-\Val_\C(s)\leq
  \frac{3\cdot|S|\cdot\cmax}{\Ca-4\cdot|S|\cdot \cmax}.$$
\end{lemma}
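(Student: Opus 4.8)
\textbf{Proof proposal for Lemma~\ref{lem-limit-rate}.}

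The plan is to take a safe cycle $\beta$ of minimal mean cost (so $\MC(\beta) = \Val_\C(s)$ by Lemma~\ref{lem:lim-val-char}, and $\len{\beta} \le |S|$ by definition) and build, for a given capacity $\Ca > 4\cdot|S|\cdot\cmax$, a $\Ca$-bounded accepting run from $s$ whose mean cost is close to $\MC(\beta)$. The run will interleave long blocks of iterations of $\beta$ with short ``detour'' paths that visit an accepting state and pass through a reload state, so that the battery can be refilled and the acceptance condition met. Since $\beta$ is safe, either $\cost(\beta)=0$ (in which case $\beta$ itself already contains an accepting state and $\alpha\cdot\beta^\omega$ is $\Ca$-bounded for any $\Ca \ge |S|\cdot\cmax$, giving $\Val_\C^\Ca(s) \le \MC(\beta) = \Val_\C(s)$ and the bound is trivial), or $\beta(0)$ is inter-reachable with a reload state $r$ and an accepting state $f$. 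In the latter case I would fix short paths: $\pi_1$ from $\beta(0)$ to $r$, $\pi_2$ from $r$ through $f$ back to $\beta(0)$ (or some combination realizing "reach $r$, then reach $f$, then return"), each choosable of length $\le 2|S|$ and hence of total cost $\le 2|S|\cdot\cmax$. Also fix a $\Ca$-bounded path $\alpha$ from $s$ to $\beta(0)$ — this exists because $\beta(0)$ is reachable from $s$ and one can route through reload states; its contribution vanishes in the mean.

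The core construction: the run has the shape $\alpha \cdot (\pi_1 \cdot \beta^{k} \cdot \pi_2)^\omega$ where $k$ is chosen as large as the capacity allows. To see $\Ca$-boundedness, note the only stretch without a reload state is a suffix of $\pi_2$, then $\beta^k$, then a prefix of $\pi_1$ up to the next reload — wait, I need to be careful about where reload states sit. Let me instead route so that each detour $\pi_1\cdot(\text{visit }f)\cdot\pi_2$ is arranged to begin and end at $r$ (go $\beta(0)\to r$, absorb it into the period), i.e. use period $\theta = (\text{path }r \to f \to \beta(0)) \cdot \beta^k \cdot (\text{path }\beta(0)\to r)$, all segments of length $\le 2|S|$ except $\beta^k$. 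Then between consecutive visits to $r$ the consumed energy is at most $4|S|\cdot\cmax + k\cdot\cost(\beta)$; requiring this $\le \Ca$ lets us take $k = \lfloor (\Ca - 4|S|\cdot\cmax)/\cost(\beta)\rfloor$ when $\cost(\beta)>0$, and this is $\ge 1$ precisely because $\Ca > 4|S|\cdot\cmax$ and $\cost(\beta) \le |S|\cdot\cmax$. (If $\cost(\beta)=0$ we are in the trivial case already handled.) The run visits $f$ infinitely often, so it is accepting, and it is $\Ca$-bounded by construction.

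Now estimate the mean cost of the periodic part: one period costs at most $k\cdot\cost(\beta) + 4|S|\cdot\cmax$ and has length at least $k\cdot\len{\beta} \ge k$ (and at most $k\cdot\len{\beta} + 4|S|$). So
\[
\Val_\C^\Ca(s) \;\le\; \frac{k\cdot\cost(\beta) + 4|S|\cdot\cmax}{k\cdot\len{\beta}}
\;=\; \MC(\beta) + \frac{4|S|\cdot\cmax}{k\cdot\len{\beta}}
\;\le\; \Val_\C(s) + \frac{4|S|\cdot\cmax}{k}.
\]
Plugging $k \ge (\Ca - 4|S|\cdot\cmax)/\cost(\beta) - 1 \ge (\Ca - 4|S|\cdot\cmax)/(|S|\cdot\cmax) - 1$ and simplifying should yield the stated bound $\Val_\C^\Ca(s) - \Val_\C(s) \le \frac{3\cdot|S|\cdot\cmax}{\Ca - 4\cdot|S|\cdot\cmax}$; the constants $3$ versus $4$ leave exactly the slack needed to absorb the $-1$ in $k$ and the $\ge k$ (rather than $=k$) estimate on the period length. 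The main obstacle, and where care is needed, is the bookkeeping that makes every maximal reload-free infix of the constructed run have cost $\le \Ca$ — in particular arranging the detour paths so that $\beta^k$ is flanked only by short reload-free stretches and that a reload state is actually hit once per period; once the geometry of the run is pinned down correctly, the arithmetic is routine. A secondary point to verify is that the minimal-mean safe cycle $\beta$ can indeed be assumed to have $\beta(0)$ inter-reachable with both a reload state and an accepting state whenever $\cost(\beta)>0$, which is immediate from the definition of safe cycle.
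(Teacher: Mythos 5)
Your construction is the same as the paper's: take a safe cycle $\beta$ of minimal mean cost (so $\MC(\beta)=\Val_\C(s)$ and $\len{\beta}\le|S|$), dispose of the case $\cost(\beta)=0$ with $\beta$ accepting directly, and otherwise play $\alpha\cdot(\gamma\cdot\beta^k)^\omega$ where $\gamma$ is a short detour through an accepting state and a reload state and $k$ is as large as $\Ca$-boundedness permits. The bookkeeping you flag as the ``main obstacle'' is resolved exactly as you suggest: the paper takes $\gamma$ to be the concatenation of three shortest paths $\beta(0)\to f\to r\to\beta(0)$, so $\len{\gamma}\le 3|S|$, $\cost(\gamma)\le 3|S|\cdot\cmax$, and every reload-free stretch of the periodic part costs at most $3|S|\cdot\cmax+k\cdot\cost(\beta)$.

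The genuine gap is the final arithmetic, which you defer with ``simplifying should yield the stated bound'': with your constants it does not, and not merely because of the $-1$ in the floor. First, your detour bound $4|S|\cdot\cmax$ is too loose (your own three legs have total length $\le 3|S|$, hence cost $\le 3|S|\cdot\cmax$); with $k=\lfloor(\Ca-4|S|\cmax)/\cost(\beta)\rfloor$ you can even get $k=0$ when $\Ca=4|S|\cmax+1$ and $\cost(\beta)=|S|\cmax$, whereas the paper's $k=\lfloor(\Ca-3|S|\cmax)/\cost(\beta)\rfloor\ge 1$ follows from $\Ca>4|S|\cmax\ge 3|S|\cmax+\cost(\beta)$. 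Second, and more seriously, replacing $\cost(\gamma)/(k\cdot\len{\beta})$ by $\cost(\gamma)/k$ and then lower-bounding $k$ via $\cost(\beta)\le|S|\cmax$ introduces an extra multiplicative factor of $|S|\cdot\cmax$: your chain yields roughly $4|S|^2\cmax^2/(\Ca-5|S|\cmax)$, which is not bounded by $3|S|\cmax/(\Ca-4|S|\cmax)$. The quantity to keep together is $k\cdot\cost(\beta)\ge\Ca-3|S|\cmax-\cost(\beta)$, giving $\cost(\gamma)/(k\cdot\len{\beta})\le\cost(\gamma)\cdot\MC(\beta)/(\Ca-3|S|\cmax-\cost(\beta))$. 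Be warned that even this leaves a residual factor of $\MC(\beta)$ that the paper's displayed chain silently discards (it is harmless only when $\MC(\beta)\le 1$), so this is precisely the step where ``the arithmetic is routine'' is not a safe claim: write it out in full.
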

%
%
%
%
%
%
%
%
%
%
%
%
%
%
%
%
%
%
%
%
%
%
%
%
%
%
%
%

%
%
%
%

\section{Future work}

We have shown that an optimal controller for a given consumption system 
always exists and can be efficiently computed. We have also exactly 
classified the structural complexity of optimal controllers and analyzed the
limit values achievable by larger and larger battery capacity.

The concept of $\Ca$-bounded mean-payoff is natural and generic, and 
we believe it deserves a deeper study. Since mean-payoff has been 
widely studied (and applied) in the context of Markov decision processes,
a natural question is whether our results can be extended to
MDPs. Some of our methods are surely applicable, but the 
question appears challenging.

%
%
%
%

%
%
%
%

\bibliography{str-long,concur}

\begin{thebibliography}{10}

\bibitem{FST-TCS2010}
{\em {Proceedings of {FST\&TCS 2010}}}, volume~8 of {\em Leibniz International
  Proceedings in Informatics}. Schloss Dagstuhl--Leibniz-Zentrum f{\"{u}}r
  Informatik, 2010.

\bibitem{Icalp2010-II}
{\em {Proceedings of ICALP 2010, Part II}}, volume 6199 of {\em Lecture Notes
  in Computer Science}. Springer, 2010.

\bibitem{BKSV:Solvency-games}
N.~Berger, N.~Kapur, L.J.{} Schulman, and V.~Vazirani.
\newblock {Solvency Games}.
\newblock In {\em {Proceedings of {FST\&TCS 2008}}}, volume~2 of {\em Leibniz
  International Proceedings in Informatics}, pages 61--72. Schloss
  Dagstuhl--Leibniz-Zentrum f{\"{u}}r Informatik, 2008.

\bibitem{BFLMS:weighted-automata-inf-runs}
P.~Bouyer, U.~Fahrenberg, K.~Larsen, N.~Markey, and J.~Srba.
\newblock {Infinite Runs in Weighted Timed Automata with Energy Constraints}.
\newblock In {\em {Proceedings of {FORMATS 2008}}}, volume 5215 of {\em Lecture
  Notes in Computer Science}, pages 33--47. Springer, 2008.

\bibitem{BBE:OC-games}
T.~Br{\'a}zdil, V.~Bro\v{z}ek, and K.~Etessami.
\newblock {One-Counter Stochastic Games}.
\newblock In {\em {Proceedings of {FST\&TCS 2010}}\/} \cite{FST-TCS2010}, pages
  108--119.

\bibitem{BBEKW:OC-MDP}
T.~Br{\'a}zdil, V.~Bro\v{z}ek, K.~Etessami, A.~Ku\v{c}era, and D.~Wojtczak.
\newblock {One-Counter {Markov} Decision Processes}.
\newblock In {\em {Proceedings of SODA 2010}}, pages 863--874. SIAM, 2010.

\bibitem{BCKN:consumption-games}
T.~Br{\'a}zdil, K.~Chatterjee, A.~Ku\v{c}era, and P.~Novotn{\'y}.
\newblock {Efficient Controller Synthesis for Consumption Games with Multiple
  Resource Types}.
\newblock In {\em {Proceedings of CAV 2012}}, volume 7358 of {\em Lecture Notes
  in Computer Science}, pages 23--38. Springer, 2012.

\bibitem{BJK:eVASS-games}
T.~Br{\'a}zdil, P.~Jan\v{c}ar, and A.~Ku\v{c}era.
\newblock {Reachability Games on Extended Vector Addition Systems with States}.
\newblock In {\em {Proceedings of ICALP 2010, Part II}\/} \cite{Icalp2010-II},
  pages 478--489.

\bibitem{BKNW:OC-MDP-term-time}
T.~Br{\'a}zdil, A.~Ku\v{c}era, P.~Novotn{\'y}, and D.~Wojtczak.
\newblock {Minimizing Expected Termination Time in One-Counter {Markov}
  Decision Processes}.
\newblock In {\em {Proceedings of ICALP 2012, Part II}}, volume 7392 of {\em
  Lecture Notes in Computer Science}, pages 141--152. Springer, 2012.

\bibitem{CHD:energy-games}
K.~Chatterjee and L.~Doyen.
\newblock {Energy Parity Games}.
\newblock In {\em {Proceedings of ICALP 2010, Part II}\/} \cite{Icalp2010-II},
  pages 599--610.

\bibitem{CHD:energy-MDPs}
K.~Chatterjee and L.~Doyen.
\newblock {Energy and Mean-Payoff Parity {Markov} Decision Processes}.
\newblock In {\em {Proceedings of MFCS 2011}}, volume 6907 of {\em Lecture
  Notes in Computer Science}, pages 206--218. Springer, 2011.

\bibitem{CHDHR:energy-mean-payoff}
K.~Chatterjee, L.~Doyen, T.~Henzinger, and J.-F.{} Raskin.
\newblock {Generalized Mean-payoff and Energy Games}.
\newblock In {\em {Proceedings of {FST\&TCS 2010}}\/} \cite{FST-TCS2010}, pages
  505--516.

\bibitem{CHKN:energy-games-polynomial}
K.~Chatterjee, M.~Henzinger, S.~Krinninger, and D.~Nanongkai.
\newblock {Polynomial-Time Algorithms for Energy Games with Special Weight
  Structures}.
\newblock In {\em {Proceedings of ESA 2012}}, volume 7501 of {\em Lecture Notes
  in Computer Science}, pages 301--312. Springer, 2012.

\bibitem{CHJ:MP-parity-games}
K.~Chatterjee, T.~Henzinger, and M.~Jurdzi{\'n}ski.
\newblock {Mean-Payoff Parity Games}.
\newblock In {\em {Proceedings of LICS 2005}}, pages 178--187. IEEE Computer
  Society Press, 2005.

\bibitem{DBR:cost-time}
B.~Dantzig, W.~Blattner, and M.~R. Rao.
\newblock {Finding a cycle in a graph with minimum cost to times ratio with
  applications to a ship routing problem}.
\newblock In P.~Rosenstiehl, editor, {\em {Theory of Graphs}}, pages 77--84.
  Gordon and Breach, 1967.

\bibitem{DIG:cost-time}
A.~Dasdan, S.S. Irani, and R.K. Gupta.
\newblock {Efficient algorithms for optimum cycle mean and optimum cost to time
  ratio problems}.
\newblock In {\em {Design Automation Conference, 1999. Proceedings. 36th}},
  pages 37--42, 1999.

\bibitem{FJLS:multi-energy-games}
U.~Fahrenberg, L.~Juhl, K.~Larsen, and J.~Srba.
\newblock {Energy Games in Multiweighted Automata}.
\newblock In {\em {Proceedings of the 8th International Colloquium on
  Theoretical Aspects of Computing {(ICTAC'11)}}}, volume 6916 of {\em Lecture
  Notes in Computer Science}, pages 95--115. Springer, 2011.

\bibitem{Kucera:multicounter-games}
A.~Ku\v{c}era.
\newblock {Playing Games with Counter Automata}.
\newblock In {\em {Reachability Problems}}, volume 7550 of {\em Lecture Notes
  in Computer Science}, pages 29--41. Springer, 2012.

\end{thebibliography}

\newpage
\appendix

\noindent
\begin{center}
\huge\bf Technical Appendix
\end{center}

\section{Non-trivial Behaviour of Optimal Controllers}
\label{app-example}

Consider the consumption system $\C$ on Figure~\ref{fig-nontriv}, where $R=\{s\}$ and $F=S$.

\begin{figure}[h]
\begin{center}
\begin{tikzpicture}[x=2cm,y=2cm,font=\small]
\node (t) at (-1,-1) [ran] {$q_1$};
\node (v) at (-3,0) [ran] {$q_3$};
\node (w) at (-1,1) [ran] {$q_5$};
\node (s) at (0,1) [ran] {$s$};
\node (u) at (0,0) [ran, very thick] {$u$};
\node (r) at (1,0) [ran] {$r$};
\node (q2) at (-2,-1) [ran] {$q_2$};
\node (q4) at (-2,1) [ran] {$q_4$};
\node (nt) at (0,-1) [ran] {$t$};
\draw [tran] (u.300) -- node[auto] {$349$} (nt.60);
\draw [tran] (nt.120) -- node[auto] {$0$} (u.240);
\draw  (nt) edge[tran,loop, in=300, out=240, looseness=5] node[below] {$1$} (nt);
\draw [tran] (q2.135) -- node[auto] {$0$} (v.315);
\draw [tran] (v.45) -- node[auto] {$0$} (q4.225);
\draw [tran] (q4.0) -- node[auto] {$0$} (w.180);
\draw [tran] (t.180) -- node[auto] {$0$} (q2.0);
\draw [tran] (w.335) -- node[auto,swap] {$0$} (u.135);
\draw [tran] (s.240) -- node[left, near start] {$50$} (u.120);
\draw [tran] (u.60) -- node[right, near end] {$50$} (s.300);
\draw [tran] (u.235) -- node[auto,swap] {$60$} (t.45);
\draw [tran] (r.135) -- node[above] {$0$} (u.45);
\draw [tran] (u.315) -- node[below] {$22$} (r.225);
\end{tikzpicture}
\end{center}
\caption{A consumption system with a non-trivial behaviour of an optimal controller.}
\label{fig-nontriv}
\end{figure}
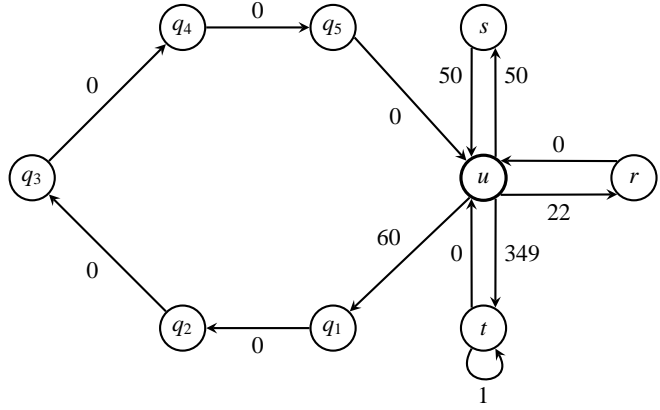

Denote by $\delta_1$ the cycle $uq_1q_2q_3q_4q_5u$, and by $\delta_2$ the cycle $uru$. Let $s$ be the initial state and let $\Ca=450$. 
Clearly, a controller that visits $t$ cannot be optimal, because of the enormous mean cost of the cycle $utu$. Moreover it does not make sense to iterate the cycle $sus$, since its mean cost is much larger than the mean cost of both $\delta_1$ and $\delta_2$. Thus, an optimal controller goes from $s$ to $u$, then iterates $\delta_1$ for $A\in \{0,1,\dots,5\}$ times, then possibly iterates $\delta_2$ for $B \in \{0,1,\dots,15\}$ times, and then returns to $s$. An easy computation shows that the optimal behaviour is achieved for $A=5$ and $B=2$, and the resulting mean cost is equal to $37/3$. This shows that the optimal controller generally has to iterate more than one simple cycle between two visits of a reload state (the controller from example on Figure~1 iterated only 1 simple cycle before returning to the reload state). Also, note that the cycle $tt$, which has the minimal mean cost among all simple cycles in $\C$, is not traversed by the optimal controller at all.

\begin{figure}[t]
\begin{center}
\begin{tikzpicture}[x=2.2cm,y=0cm,font=\small]
\node (t) at (0,0) [ran] {$t$};
\node (s) at (1,0) [ran] {$s$};
\node (u) at (2,0) [ran, very thick] {$u$};
\draw [tran] (t.45) -- node[above] {$1$} (s.135);
\draw [tran] (s) -- node[above] {$0$} (u);
\draw [tran] (s.225) -- node[below] {$0$} (t.315);
\draw (u) edge[tran, loop, looseness=5, out = 45, in=-45]   node[right] {$0$} (u);
\end{tikzpicture}
\end{center}
\caption{Limit value is not equal to the B\"uchi mean-payoff value. Here $R=\{u\}$ and $F=\{t\}$.}
\label{fig-limit-nobuchi}
\end{figure}
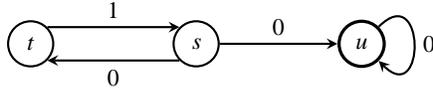

Now let us again consider the example on Figure~\ref{fig-exp-mem}. Note that for any capacity $\Ca\geq 10$ we have $\Val_{\C}^{\Ca}(s)=\frac{\Ca}{2+2\cdot(\Ca-10)}=\frac{1}{2}\cdot\frac{\Ca}{\Ca-9}$. It is then clear that $\Val_\C(s)=1/2$ and that this limit value cannot be achieved for any finite capacity. 

Finally, consider the consumption system $\C$ on Figure \ref{fig-limit-nobuchi}. For every capacity $\Ca$ we have $\Val_{\C}^{\Ca}(s)=\infty$, since every $\Ca$-bounded path $\C$ must have an infinite suffix $u^\omega$ and thus it cannot be accepting. Thus, the limit value $\Val_{\C}(s)$ is also infinite. However, if we treat the system as a one-player mean-payoff B\"uchi game (see, e.g., \cite{CHJ:MP-parity-games}), then the optimal value in $s$ is $1/2$. 

\section{Proofs for Auxiliary Algorithms in Section~\ref{sec:main-thm-proof}}
\label{app-main}

In this section we provide the proofs for auxiliary algorithms from Section~\ref{sec:main-thm-proof}, i.e., the proof of Lemmas~\ref{lem:infinite-value},\ref{lem-cap-bounded-path},\ref{lem:algorithm-admissible} and~\ref{lem-cycle-compute}. The problems solved by these algorithms ($\Ca$-reachability, existence of an acceptance witness, etc.) are variants of standard graph-theoretic problems. Our choice of algorithms is motivated by our intention to achieve as straightforward correctness proofs and proofs of polynomial running time as possible. It is not hard to see that the complexity of these algorithms (which does not dominate the overall complexity of the algorithm from Theorem~\ref{thm:alg}) can be improved.

Note that the lemmas are not proved in the order in which they appear in the main text. We first prove Lemma~\ref{lem-cap-bounded-path}, then Lemma~\ref{lem:algorithm-admissible}, then Lemma~\ref{lem:infinite-value} and finally Lemma~\ref{lem-cycle-compute}. This is because some algorithms use as a sub-procedure an algorithm which, in the main text, appears, later than the algorithm in which this sub-procedure is used. We chose to mention the algorithms in the main text in a different order to make the flow of ideas in the main text more natural.

\subsection{A Proof of Lemma~\ref{lem-cap-bounded-path}}

We denote by $\ec(\alpha)$ the end cost of a finite path $\alpha$. First we prove the following lemma.

\begin{lemma}
\label{lem:cap-bounded-length}
If a state $t$ is $\Ca$-reachable from $s$, then there is a $\Ca$-bounded path $\alpha$ of length at most $|S|^2$ such that $\ec(\alpha)=\min\{\ec(\alpha')\mid \alpha' \text{ is a $\Ca$-bounded path from $s$ to $t$}\}$.
\end{lemma}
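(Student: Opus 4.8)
The plan is to establish Lemma~\ref{lem:cap-bounded-length} by a shortcut/exchange argument on a minimal-end-cost witness, using the fact that the end cost of a path depends only on the behaviour since the last reload state, together with a counting of ``non-reload stretches.'' First I would fix a $\Ca$-bounded path $\alpha'$ from $s$ to $t$ achieving the minimal end cost $e := \min\{\ec(\alpha'') \mid \alpha'' \text{ a $\Ca$-bounded path from $s$ to $t$}\}$, which exists since $t$ is $\Ca$-reachable. Write $\alpha'$ as a concatenation of \emph{stretches} separated by visits to reload states: $\alpha' = \eta_0 \cdot \eta_1 \cdots \eta_k$, where each $\eta_i$ is a maximal subpath whose internal states avoid $R$ (more precisely, break $\alpha'$ at every occurrence of a reload state). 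The suffix stretch $\eta_k$ is exactly the part of $\alpha'$ whose cost equals $\ec(\alpha')$.

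The key observation is that shortening $\alpha'$ within a single stretch can only decrease end cost or leave it unchanged, and removing a whole loop between two successive reload visits never increases the end cost of the resulting path (since costs are non-negative, any deletion weakly decreases total cost of every suffix stretch, and in particular of the final one). So I would argue in two steps. First, within the final stretch $\eta_k$: if some state repeats in $\eta_k$, excise the intervening cycle; this yields a $\Ca$-bounded path (the end cost only drops, so it stays $\le\Ca$, and earlier stretches are untouched) with end cost $\le e$, hence $=e$ by minimality. Iterating, we may assume $\eta_k$ is simple, so $\len{\eta_k} \le |S|$. Second, for the prefix $\eta_0\cdots\eta_{k-1}$: we do not care about its cost, only about reaching the start of $\eta_k$ in a $\Ca$-bounded way and not introducing fresh reload visits that would change the decomposition. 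Observe that if a reload state $r$ occurs twice among the ``break points,'' the segment of $\alpha'$ strictly between those two occurrences is a cycle on $r$ that can be removed: $\Ca$-boundedness of the shortened path is immediate because each remaining stretch is a (sub)stretch of an original $\Ca$-bounded stretch, and the final stretch $\eta_k$ is preserved verbatim. Hence we may assume all break-point reload states are distinct, so $k \le |S|$. Finally, between two consecutive break points each stretch can be assumed simple on the non-reload states (same excision argument, costs non-negative, $\Ca$-boundedness preserved), so $\len{\eta_i} \le |S|$ for each $i < k$ as well. Combining, $\len{\alpha'} = \sum_{i=0}^{k}\len{\eta_i} \le (k+1)\cdot|S| \le (|S|+1)\cdot|S| \le |S|^2$ after adjusting the trivial case $|S|=1$, giving the claimed bound (a clean bound $|S|^2$ follows once one notes the last stretch can be taken simple \emph{and} starting at a reload state or at $s$, so the $k+1$ count of distinct reload break points is at most $|S|$).

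The main obstacle I anticipate is bookkeeping the interaction between the two excision operations: deleting a cycle inside a stretch is clearly harmless, but deleting a cycle that \emph{spans} several reload visits merges stretches and one must check that the merged/truncated object is still a genuine $\Ca$-bounded path to $t$ with end cost $\le e$. The clean way to handle this is to do all the reload-level excisions first (reducing the sequence of break-point reload states to a simple one, only ever removing infixes that are cycles on a reload state, so no stretch's cost ever goes up and $\ec$ can only decrease), then do the intra-stretch excisions, and only afterwards invoke minimality of $e$ to conclude that $\ec$ actually stayed equal to $e$ throughout. One should also treat separately the degenerate situation where $\alpha'$ visits no reload state at all (then $k=0$, $\alpha' = \eta_0$, and the single-stretch simplification gives $\len{\alpha'}\le|S|\le|S|^2$) and the case $t \in R$ (then the final stretch is trivial). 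With these cases dispatched, the length bound $|S|^2$ follows, and Lemma~\ref{lem-cap-bounded-path} itself is then immediate: $\Ca$-reachability can be decided, and a witnessing path computed, by a breadth-first search over the polynomial-size state space $S \times \{0,1,\dots,\Ca\}$ — wait, that is exponential in $\size{\Ca}$; instead one searches over configurations $(q, \ec)$ but records for each state only the minimal achievable end cost, which is a standard Bellman–Ford-style fixpoint over $|S|$ values computable in polynomial time, and Lemma~\ref{lem:cap-bounded-length} guarantees $|S|^2$ rounds suffice.
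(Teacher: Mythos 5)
Your proposal is correct and is essentially the paper's own argument: the paper also takes a witness of minimal end cost (additionally of minimal length) and excises either a reload-free cycle inside a stretch or a cycle on a repeated reload state, using non-negativity of costs to see that $\Ca$-boundedness is preserved and the end cost cannot increase (hence, by minimality, stays equal), and then counts stretches and their lengths to get the $|S|^2$ bound. Your staged version of the excisions and the final count (at most $|S|$ stretches, since $s$ together with the distinct reload break points gives $k+1$ distinct states, each stretch simple hence of length less than $|S|$) fills in exactly the bookkeeping the paper leaves implicit.
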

\begin{proof}
Let $t$ be $\Ca$-reachable from $s$. 
Since the end costs are natural numbers, the value $\minec=\min\{\ec(\alpha')\mid \alpha' \text{ is a $\Ca$-bounded path from $s$ to $t$}\}$ exists. Let $\alpha$ be a $\Ca$-bounded path of minimal length among the $\Ca$-bounded paths from $s$ to $t$ that have $\ec(\alpha)=\minec$. Assume, for the sake of contradiction, that $\len{\alpha}>|S|^2$. %
Two cases may arise: either at most $|R|$ reload states appear on $\alpha$ and then $\alpha=\gamma\cdot\delta\cdot\gamma',$ where either $\delta$ is a cycle not containing a reload state, or $\delta(0)=\last(\delta)\in R$.  In both cases clearly, $\gamma\cdot\gamma'$ is a $\Ca$-bounded path from $s$ to $t$ with $\ec(\gamma\cdot\gamma')=\ec(\alpha)$, a contradiction with the choice of $\alpha$. The second case is that there are more than $|R|$ occurrences of a reload state on $\alpha$. Then $\alpha=\gamma\cdot\delta\cdot\gamma'$, where $\delta(0)=\last(\delta)\in R$. As above, we get a contradiction with the choice of $\alpha$.
\qed
\end{proof}

\noindent
We now prove a slightly more general variant of Lemma~\ref{lem-cap-bounded-path}.

\begin{lemma}
\label{lem:cap-bounded-extended}
 There is a polynomial-time algorithm, which for a given state $s$, given capacity $\Ca$ and every state $t\in S$ decides, whether $t$ is $\Ca$-reachable from $s$. Moreover, for every $t$ that is $\Ca$-reachable from $s$ the algorithm computes a $\Ca$-bounded path $\alpha$ of length at most $|S|^2$ such that $\ec(\alpha)=\min\{\ec(\alpha')\mid \alpha' \text{ is a $\Ca$-bounded path from $s$ to $t$}\}$.
\end{lemma}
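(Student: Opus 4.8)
The plan is to prove Lemma~\ref{lem:cap-bounded-extended} by a dynamic-programming / fixed-point computation over pairs $(t,e)$, where $t \in S$ is a state and $e \in \{0,1,\dots,\Ca\}$ is a possible end cost with which $t$ can be reached from $s$ by a $\Ca$-bounded path. The key quantity to compute is, for every $t \in S$, the value $\minec(t) = \min\{\ec(\alpha') \mid \alpha' \text{ is a } \Ca\text{-bounded path from } s \text{ to } t\}$, with the convention $\minec(t)=\infty$ if $t$ is not $\Ca$-reachable. Note that $\minec(s) = 0$ always (the trivial path), and for any $\Ca$-reachable reload state $t \in R$ we have $\minec(t) = 0$ as well. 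For non-reload states the value can be larger, but is always at most $\Ca$, so all finite values lie in $\{0,\dots,\Ca\}$.

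First I would set up the relaxation rule: if $t$ is reachable with end cost $e$ (i.e.\ there is a $\Ca$-bounded path from $s$ to $t$ whose end cost is $e$) and $t \tran{c} t'$ is a transition, then $t'$ is reachable with end cost $e' := 0$ if $t' \in R$, and $e' := e + c$ if $t' \notin R$ and $e+c \le \Ca$; if $t' \notin R$ and $e+c > \Ca$, the transition is forbidden from this configuration. This gives a monotone update on the vector $(\minec(t))_{t\in S} \in \{0,\dots,\Ca,\infty\}^{|S|}$: initialize $\minec(s) \leftarrow 0$ and all others to $\infty$, then repeatedly apply the relaxation. Each successful relaxation strictly decreases some coordinate of a vector whose coordinates live in a chain of height $\Ca+2$, so the iteration terminates; but to get a genuinely \emph{polynomial}-time algorithm (polynomial in $\size\C$ and $\size\Ca$, with $\Ca$ in binary), I must not iterate $\Theta(\Ca)$ times. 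Instead I would run a Bellman--Ford-style algorithm: $|S|$ rounds each sweeping over all transitions, but with one twist — within the sweep, treat reload states specially so that reaching a reload state resets the end cost to $0$ immediately, and observe (via Lemma~\ref{lem:cap-bounded-length}) that an optimal witness has length at most $|S|^2$, hence the value stabilizes after at most $|S|^2$ rounds, which is polynomial. The correctness of the round bound is exactly what Lemma~\ref{lem:cap-bounded-length} buys us: any minimal-end-cost $\Ca$-bounded path can be taken of length $\le |S|^2$, so $|S|^2$ rounds of relaxation suffice to discover it.

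To also \emph{compute} the witnessing path $\alpha$ of length at most $|S|^2$, I would store, alongside each $\minec(t)$, a predecessor pointer recording which transition last updated it and from which state; tracing these pointers back to $s$ reconstructs a $\Ca$-bounded path realizing the minimum. A small care point: the path reconstructed from predecessor pointers after $|S|^2$ rounds of a Bellman--Ford sweep is automatically of length $\le |S|^2$, and it is $\Ca$-bounded by construction since every relaxation step respects the $\Ca$ constraint and resets at reload states. Lemma~\ref{lem-cap-bounded-path} then follows immediately: $t$ is $\Ca$-reachable iff $\minec(t) < \infty$, and the computed $\alpha$ is the desired example path.

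The main obstacle I anticipate is making the polynomial-time claim airtight despite $\Ca$ being encoded in binary: naively the end-cost values form a range of exponential size, and a naive value-iteration could take exponentially many steps before stabilizing. The resolution hinges entirely on Lemma~\ref{lem:cap-bounded-length} (bounded witness length) to cap the number of Bellman--Ford rounds at $|S|^2$ rather than $\Ca$; each round is a single pass over the $\le |S|$ states, $\le |S|^2$ transitions, doing arithmetic on integers of $\calO(\log \Ca)$ bits, so the total running time is polynomial in $\size\C$ and $\size\Ca$. The other mild subtlety is correctly handling the reset-at-reload-states behaviour inside the relaxation so that the invariant ``$\minec(t)$ is the true minimal end cost over all $\Ca$-bounded $s$-to-$t$ paths'' is maintained; I would prove this invariant by induction on the number of rounds, using that any $\Ca$-bounded path decomposes at its reload states and that the end cost only depends on the suffix after the last reload.
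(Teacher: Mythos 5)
Your proposal is essentially the paper's own proof: a dynamic program on minimal end costs with capped addition (reset to $0$ at reload states), run for $|S|^2$ rounds justified by Lemma~\ref{lem:cap-bounded-length}, with predecessor pointers for path reconstruction. The only nuance is that to guarantee the reconstructed path has length at most $|S|^2$ you should use synchronous (length-indexed) updates, i.e.\ relax round $i$ only from the round-$(i{-}1)$ values as the paper does with $\minec^i$, rather than letting updates cascade within a sweep.
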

\begin{proof}
Denote $\minec=\min\{\ec(\alpha')\mid \alpha' \text{ is a $\Ca$-bounded path from $s$ to $t$}\}$.
Moreover, for any $t\in S$ and any $i\in \Nset_0$ we denote $$\minec^i(t)=\min\{\ec(\alpha)\mid \alpha\text{ is a $\Ca$-bounded path from $s$ to $t$ of length $i$}\}.$$ From Lemma~\ref{lem:cap-bounded-length} it follows that $\minec(t)=\min_{0\leq i \leq |S|^2}\minec^i(t)$.

Now consider an operation $\plusca$ on $\Nset_0 \cup\{\infty\}$ such that for any $a\plusca b = a+b$ if $a+b\leq \Ca$ and $\infty$ otherwise (we use a standard convention that $\infty+x=x+\infty=\infty$ for any $x\in \Nset_0 \cup\{\infty\}$). Clearly $\minec^0(t)$ is equal to 0 if $t=s$ and equal to $\infty$ otherwise. For $i>0$ a straightforward induction reveals that 
\[
\minec^i(t) = \begin{dcases}
\min_{q\tran{a} t}\,(\minec^{i-1}(q) \plusca a ) & \text{if } t \not\in R \\
0 & \text{if } t \in R \text{ and } \min_{q\tran{a} t}\, (\minec^{i-1}(q) \plusca a ) <\infty \\
\infty & \text{otherwise }.
\end{dcases}
\]
Using these equations we can iteratively compute $\minec^0(q),\minec^1(q),\dots,\minec^{|S|^2}(q)$ for all states $q$ in polynomial time. Now by Lemma~\ref{lem:cap-bounded-length} we have for any  state $t$ that $\minec(t)=\min_{0\leq i \leq |S|^2}\minec^i(t)$, and clearly $t$ is $\Ca$-reachable from $s$ iff $\minec(t)<\infty$. Moreover, let $j_t\leq |S|^2$ be such that $\minec^{j_t}(t)=\minec(t)<\infty$. Then from the knowledge of $\minec^0,\minec^1,\dots,\minec^j$ we can construct a finite $\Ca$-bounded path  $\alpha=q_0 q_1 \dots q_{j_t}$ with from $s$ to $t$ by putting $q_{j_t}=t$ and for every $j<j_t$ defining $q_j$ the state that caused the $\minec^{j+1}(q_{j+1})$ to be set to its final value. I.e., if $q_{j+1}\not\in R$, then $q_j$ is such that $q_j \tran{a} q_{j+1}$ and $\minec^j(q_j)\plusca a = \minec^{j+1}(q_{j+1})$, otherwise $q_j$ is such that $q_j \tran{a} q_{j+1}$ and $\minec^j(q_j)\plusca a <\infty$. The correctness of this approach is immediate.
\qed
\end{proof}

\subsection{A Proof of Lemma~\ref{lem:algorithm-admissible}}

Before we prove Lemma~\ref{lem:algorithm-admissible}, we prove the following simple lemma.

\begin{lemma}
\label{lem:cycle-of-zero-cost}
For a given state $s$ it is decidable in polynomial time whether there is a cycle $\delta$ of zero cost containing $s$, and if such a $\delta$ exists, we can compute in polynomial time a simple cycle with this property. Moreover, for a given set of states $T\subseteq S$ it is decidable in polynomial time, whether there is a cycle of zero cost containing $s$ and a state from $T$, and if such a cycle exists, we can compute such a cycle of length at most $2|S|$ in polynomial time.
\end{lemma}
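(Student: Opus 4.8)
The plan is to prove Lemma~\ref{lem:cycle-of-zero-cost} by a reduction to reachability in an auxiliary graph obtained by deleting all transitions of positive cost. Throughout, let $\C_0 = (S, \tran{}_0, c, R, F)$ be the subgraph of $\C$ obtained by keeping only those transitions $s \tran{} t$ with $c(s \tran{} t) = 0$; note $\C_0$ is computable in linear time. The key observation is that a cycle $\delta$ containing $s$ has zero total cost if and only if $\delta$ is entirely a path in $\C_0$, since all costs are non-negative.

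\textbf{First part (zero-cost cycle through $s$).} A zero-cost cycle through $s$ exists iff $s$ lies on some cycle in $\C_0$, which holds iff $s$ is reachable from some successor of $s$ in $\C_0$, equivalently iff there is a non-trivial strongly connected component of $\C_0$ containing $s$ (or a self-loop at $s$ in $\C_0$). This is decidable in polynomial time by standard SCC decomposition, or simply by a reachability check: compute the set $\reach_0(s)$ of states reachable from $s$ in $\C_0$ and test whether $s \tran{}_0 t$ for some $t \in \reach_0(s)$ with $t$ able to reach $s$ in $\C_0$. If such a $\delta$ exists, a \emph{simple} one is extracted by taking a shortest path in $\C_0$ from some such $t$ back to $s$, prepending the edge $s \tran{}_0 t$, and then, if the resulting closed walk is not simple, repeatedly short-cutting repeated vertices; a shortest closed walk through $s$ in $\C_0$ is already simple and has length at most $|S|$. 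This is all polynomial-time.

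\textbf{Second part (zero-cost cycle through $s$ and a state of $T$).} Here I would guess the successor of $s$ used to begin the cycle and the ``meeting order'': a zero-cost cycle through $s$ and some $q \in T$ exists iff there is $q \in T$ such that $q$ is reachable from $s$ in $\C_0$ and $s$ is reachable from $q$ in $\C_0$ (and this common reachability is witnessed by a closed walk of positive length, i.e.\ $s$ is genuinely on a $\C_0$-cycle or $q \ne s$ with the two paths non-trivial; the degenerate case $q = s$ reduces to the first part). Concretely: for each $q \in T$, check $q \in \reach_0(s)$ and $s \in \reach_0(q)$; if both hold for some $q$, form the closed walk consisting of a shortest $\C_0$-path from $s$ to $q$ followed by a shortest $\C_0$-path from $q$ to $s$. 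Each of these two paths has length at most $|S|$ (shortest paths are simple), so the concatenation is a cycle through $s$ and $q$ of length at most $2|S|$ and of zero cost. All steps --- the reachability computations in $\C_0$, iterating over $q \in T$, and shortest-path extraction --- run in polynomial time.

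\textbf{Main obstacle.} There is no serious obstacle: the only point requiring a little care is the length bound $2|S|$ in the second part and ensuring the produced object is genuinely a cycle (positive length, returns to $s$), rather than an accidentally trivial path; this is handled by using shortest paths (hence simple, length $\le |S|$ each) and treating the $q = s$ case via the first part of the lemma. The result is a routine graph-reachability argument, and I expect the author's proof to proceed along essentially these lines.
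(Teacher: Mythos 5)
Your proof is correct and follows essentially the same route as the paper's: restrict to the subgraph of zero-cost transitions and reduce both questions to reachability checks (from a successor of $s$ back to $s$, resp.\ from $s$ to some $q\in T$ and back), with shortest/simple paths giving the simplicity and the $2|S|$ length bound. The only cosmetic difference is that the paper forces the first leg to start with an edge out of $s$ instead of treating the $q=s$ degenerate case separately, which is equivalent to your handling.
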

\begin{proof}
Clearly if there is a cycle of zero cost on $s$, there is also a simple cycle of zero cost on $s$. Such a cycle can be find using a simple modification of e.g., the breadth-first search algorithm -- for every $t$ such that $s\tran{0}t$ we try to find a path from $t$ to s while ignoring the transitions of positive cost. Similarly, if there is a cycle of zero cost on $s$ that contains a state from $T$, then the shortest such cycle has length at most $2|S|$ (we need to get from $s$ to a state $t\in T$ and back via transitions of zero cost, each of these two parts requiring at most $|S|$ transitions). Again, such a cycle can be find using a suitable search algorithm: first, we compute a set $T'\subseteq T$ of states that are reachable via transitions of zero cost from a state $q$ s.t. $s\tran{0}q$ , and then, for every $t\in T'$ we try to find (possibly empty) path from $t $ to $s$. Clearly, both tasks can be implemented using a simple graph search algorithm. 
\qed
\end{proof}

\noindent
The following lemma will be also useful.
\begin{lemma}
\label{lem:interreaching-cycle}
Let $r\in R$ be a reload state and $T\subseteq S$ a set of states. It is decidable in polynomial time whether there is a $\Ca$-bounded cycle $\delta$ that is initiated in $r$ and that contains a state from $T$. If the answer is yes, one can compute (in polynomial time) such a cycle $\delta$ of length at most $3|S|^2$.
\end{lemma}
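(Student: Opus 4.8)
The plan is to reduce the problem to two instances of (a mild generalisation of) $\Ca$-bounded reachability and stitch the answers together at a cut vertex. Suppose a $\Ca$-bounded cycle $\delta$ initiated in $r$ and visiting $T$ exists, and let $q$ be the first occurrence of a state of $T$ on $\delta$, so that $\delta=\delta_1\cdot\delta_2$ with $\delta_1$ a $\Ca$-bounded path from $r$ to $q$ and $\delta_2$ a $\Ca$-bounded path from $q$ to $r$. The crucial observation is that, since all transition costs are non-negative, the suffix $\delta_2$ can only become ``easier to continue'' if fewer resources have been consumed at $q$; hence we may replace $\delta_1$ by \emph{any} $\Ca$-bounded path from $r$ to $q$ whose end cost is minimal. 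Writing $\minec(q)$ for the minimal end cost of a $\Ca$-bounded path from $r$ to $q$, the existence of a suitable cycle is therefore equivalent to: there is $q\in T$ which is $\Ca$-reachable from $r$ and from which $r$ is reachable by a path that stays $\Ca$-bounded when we start at $q$ with $\minec(q)$ units already consumed (and, in the degenerate case $q=r$, this returning path has length at least $1$).

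Both ingredients are computable in polynomial time. For the first, I would invoke the algorithm of Lemma~\ref{lem:cap-bounded-extended} with source $r$: it decides $\Ca$-reachability of every $q$ and returns a witnessing path $\alpha_q$ of length at most $|S|^2$ together with the value $\minec(q)$. For the second, I would run essentially the same dynamic program as in the proof of Lemma~\ref{lem:cap-bounded-extended}, but with source $q$ and with the initial end cost at $q$ set to $\minec(q)$ instead of $0$; using the truncating addition $\plusca$, this DP computes in polynomial time a path $\delta_2$ from $q$ to $r$ of length at most $|S|^2$ that is $\Ca$-bounded under this initial charge, whenever such a path exists. The algorithm tries every $q\in T$ and outputs $\alpha_q\cdot\delta_2$ for the first $q$ that succeeds; if none succeeds, no such cycle exists. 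The ``only if'' direction of correctness is precisely the monotonicity observation applied to an arbitrary witness cycle (note $\ec(\delta_1)\geq\minec(q)$, so the original suffix is in particular $\Ca$-bounded under the smaller charge $\minec(q)$, hence the DP finds some valid suffix); the ``if'' direction is immediate, since $\alpha_q$ is $\Ca$-bounded and ends at $q$ with end cost $\minec(q)$, and appending $\delta_2$ preserves $\Ca$-boundedness by the choice of $\delta_2$, while the length-$\geq 1$ condition guarantees that $\alpha_q\cdot\delta_2$ is a genuine cycle.

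It remains to note the length bound: the produced cycle has length $\len{\alpha_q}+\len{\delta_2}\le 2|S|^2\le 3|S|^2$. The bounds $\len{\alpha_q}\le|S|^2$ and $\len{\delta_2}\le|S|^2$ follow from the same cycle-removal argument used in Lemma~\ref{lem:cap-bounded-length}: in a shortest path of minimal end cost, excise either a cycle containing no reload state or a cycle both of whose endpoints are a fixed reload state; because costs are non-negative, such excisions do not raise the end cost of any prefix, so iterating drives the length below $|S|^2$.

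The main obstacle is conceptual rather than computational: because $\Ca$ is encoded in binary, we cannot afford to track the exact resource level or to ``guess the charge at $q$'', so the obvious product-construction approach is too expensive. Routing everything through the \emph{minimal}-end-cost path at the cut vertex $q$ is exactly what lets a single polynomial-time reachability computation do the job; the only mildly delicate points are propagating a non-zero initial charge correctly through the DP (handled by $\plusca$, noting that $\minec(q)=0$ automatically when $q\in R$) and the degenerate case $q=r$, where the returning path must have positive length.
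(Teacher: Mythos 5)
Your proof is correct, and it rests on the same core insight as the paper's: cut the witness cycle at a state of $T$, replace the prefix by a $\Ca$-bounded path of \emph{minimal end cost} (computable via Lemma~\ref{lem:cap-bounded-extended}), and use monotonicity of end costs under non-negative transition costs to argue that the suffix still completes. The difference is in how the suffix is handled. The paper cuts the cycle twice --- at a state $t\in T$ and then at the \emph{next reload state} $q\in R$ --- so that the charge resets to zero at the second cut; this lets it treat Lemma~\ref{lem:cap-bounded-extended} as a black box, calling it three times per pair $(t,q)\in T\times R$, with the middle call run at the reduced capacity $\Ca-\minec_r(t)$. You instead cut only once and open up the dynamic program, re-running it from $q$ with the initial end cost seeded to $\minec(q)$ via $\plusca$. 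Both are polynomial and both yield the $3|S|^2$ length bound (yours in fact gives $2|S|^2$, since you have two pieces of length at most $|S|^2$ rather than three). Your variant requires verifying that the cycle-excision argument of Lemma~\ref{lem:cap-bounded-length} survives a non-zero initial charge --- which you correctly note it does, since excising a reload-free cycle or a cycle anchored at a repeated reload state never increases the end cost of any prefix --- whereas the paper's variant avoids touching the DP at the price of an extra quantification over $q\in R$. You also handle the one genuinely delicate corner case (the returning path must have positive length when $q=r$) explicitly, which the paper glosses over.
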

\begin{proof}
Recall that we denote by $\ec(\alpha)$ the end cost of a finite path $\alpha$. Denote by $\minec_r(t)$ the value $\min\{\ec(\alpha)\mid \alpha \text{ is a $\Ca$-bounded path from $r$ to $t$ } \}$. 
We claim that a $\Ca$-bounded $\delta$ initiated in $r$ and containing a state from $T$ exists if and only if there is a $t\in T$ and $q\in R$ (possibly $q=r$) such that:
\begin{itemize}
\item There is a $\Ca$-bounded path $\gamma_1$ from $r$ to $t$, and
\item there is a $(\Ca-\minec_r(t))$-bounded path $\gamma_2$ from $t$ to $q$, and
\item there is a $\Ca$-bounded path $\gamma_3$ from $q$ to $r$. 
\end{itemize}
The ``if'' direction can be proved as follows: if there are the aforementioned paths, then there is in particular a $\Ca$-bounded path $\gamma_1$ from $r$ to $t$ such that $\ec(\gamma_1)=\minec_r(t)$ (see Lemma~\ref{lem:cap-bounded-extended}). Then $\delta=\gamma_1\cdot\gamma_2\cdot\gamma_3$ is the required $\Ca$-bounded cycle containing a state from $T$.

Now consider the ``only if'' direction in the equivalence. Let $i\in \Nset_0$ be the smallest number such that $\delta(i)\in T$ and $j>i$ be the smallest number such that $\delta(j)\in R$ (such $i,j$ must exist, since $\delta$ is a cycle). Put $\gamma_1 = \delta_{\leq i}$, $\gamma_2 = \delta(i)\dots\delta(j)$ and $\gamma_3=\delta(j)\dots \delta(\len{\delta})$. Clearly $\gamma_3$ and $\gamma_1$ are $\Ca$-bounded and $\cost(\gamma_2)\leq \Ca-\ec(\gamma_1) \leq \Ca - \minec_r(\delta(i))$, which proves the ``only if'' direction.

So in order to decide whether a desired cycle $\delta$ exists (and compute it if it does) it suffices to make three calls of the algorithm from Lemma~\ref{lem:cap-bounded-extended} for every pair of states $t \in T$, $q\in R$.  For every such pair we first use that algorithm to compute a $\Ca$ bounded path $\gamma_1$ from $r$ to $t$ of minimal end cost (and thus also compute $\minec_r(t)$). Then we use the algorithm to find  a $(\Ca-\minec_r(t))$-bounded path $\gamma_2$ from $t$ to $q$, and a $\Ca$-bounded path $\gamma_3$ from $q$ to $r$. If we find all these paths, we return $\gamma_1\cdot\gamma_2\cdot\gamma_3$ as the desired cycle $\delta$ (from Lemma~\ref{lem:cap-bounded-extended} it follows that $\len{\delta}\leq 3|S|^2$). If some of these paths does not exist, we move on to the next pair. If the algorithm fails for all pairs, the desired cycle $\delta$ does not exist. The correctness of the algorithm and its polynomial complexity are immediate.
\qed
\end{proof}

\begin{reflemma}{lem:algorithm-admissible}
The problem whether a given $q \in S$ is admissible is decidable
in polynomial time. Further, if $q$ is admissible, then 
there are finite paths $\alpha,\gamma$ computable in polynomial time
such that $\alpha \cdot \gamma^\omega$ is a $\Ca$-bounded run initiated 
in $s$ and $\gamma$ is an admissibility witness for $q$ of length
at most $6|S|^2$.
\end{reflemma}
\begin{proof}
First we prove that a state $q\in S$ is admissible if and only if $q$ is $\Ca$-reachable from the initial state $s$ and one of the following conditions holds.
\begin{enumerate}
\item There is a cycle $\delta$ of zero cost that is initiated in $q$ and that contains an accepting state. In this case, $\delta$ is an admissibility witness for $q$.
\item There is a reload state $r\in R$ and $\Ca$-bounded cycles $\theta$, $\delta$, both initiated in $r$, such that $\delta$ contains an accepting state and $\theta=\theta_1\cdot\theta_2$, where $\theta_1(0)=\last(\theta_2)=r$ and $\last(\theta_1)=\theta_2(0)=q$. In such a case $\theta_2\cdot\delta\cdot\theta_1$ is an admissibility witness for $q$.
\end{enumerate}
The ``if'' direction is immediate, so let us consider the ``only if'' direction. Suppose that $q$ is admissible, then by definition there is a $\Ca$-bounded run initiated in $s$ of the form $\alpha\cdot\beta^\omega$, where $\beta$ is a cycle initiated in $q$ which contains an accepting state. In particular $q$ is $\Ca$ reachable from $s$. Now if $\cost(\beta)=0$, then the case 1. above holds. So suppose that $\cost(\beta)\neq 0$. Then $\beta$ must contain not only an accepting state, but also a reload state. Let $i\in \Nset_0$ be such that $\beta(i)\in R$, respectively. Then the paths $\theta_1 = \beta(i)\beta(i+1)\dots\beta(\len{\beta})$, $\theta_2 = \beta_{\leq i}$, and $\delta=\beta(i)\beta(i+1)\dots\beta(\len{\beta})\cdot\beta_{\leq_i}$ have the properties stated in case 2.

So to test whether $q$ is admissible, we have to test whether $q$ is $\Ca$-reachable from $s$ and whether 1. or 2. holds. To test the $\Ca$-reachability we use the polynomial algorithm of Lemma~\ref{lem:cap-bounded-extended}, which also finds the required $\Ca$-bounded path from $s$ to $q$. To test whether 1. holds, we use the polynomial algorithm from Lemma~\ref{lem:cycle-of-zero-cost}. If this algorithm finds a cycle of zero cost initiated in $q$ and containing an accepting state (by Lemma~\ref{lem:cycle-of-zero-cost}, the cycle returned by the algorithm has length at most $2|S|$), we can immediately output it as an admissibility witness for $q$. To test whether 2. holds, we use $|R|$ times the polynomial algorithm of Lemma~\ref{lem:interreaching-cycle}: For every state $r\in R$ we test whether there  are $\Ca$-bounded cycles $\theta,\delta$ initiated in $r$ such that $\theta$ contains $q$ and $\delta$ contains a state from $F$. If the algorithm finds such cycles for some $r\in R$ (by Lemma~\ref{lem:interreaching-cycle} each of them will have length at most $3|S|^2$), we can use them to easily construct an admissibility witness for $q$ of length at most $6|S|^2$ as indicated in 2. The correctness of the algorithm and its polynomial complexity are immediate.
\qed
\end{proof}

\subsection{A Proof of Lemma~\ref{lem:infinite-value}}

\begin{reflemma}{lem:infinite-value}
Let $t \in S$. The problem whether $\Val_\C^\Ca(t)=\infty$ is
  decidable in polynomial time.
\end{reflemma}
\begin{proof}
Let $t\in S$ be an arbitrary. In the following we treat $t$ as the initial state of the system. In particular, the notion of admissibility is adapted to this choice of initial state: a state $q$ is admissible if there is a $\Ca$-bounded path of the form $\alpha\cdot\gamma^\omega$ with $\alpha(0)=t$, $\gamma(0)=\last(\alpha)=q$, and $\gamma$ containing an accepting state.

First note that $\Val_{\C}^{\Ca}(t)<\infty$ if and only if there is at least one admissible state $q$ 
The ``if'' direction is immediate, so let us consider the ``only if'' direction. If $\Val_{\C}^\Ca(t)<\infty$, then there is a $\Ca$-bounded accepting run $\rho$ initiated in $t$. We consider two cases. Either there are only finitely many transitions of a positive cost on $\rho$. Then there is a simple cycle $\delta$ of zero cost containing an accepting state (since some accepting state has infinitely many occurrences on $\rho$) initiated in some state $\delta(0)$ that is $\Ca$-reachable from $t$. Clearly $\alpha\cdot\delta^\omega$, where $\alpha$ is a $\Ca$-bounded path from $t$ to $q$, is a $\Ca$-bounded accepting run, so $\delta(0)$ is admissible. The second case is that there are infinitely many transitions of positive cost on $\rho$, in which case $\rho$ contains infinitely many occurrences of both a reload state and of an accepting state. Let $r$ be a reload state appearing infinitely often on $\rho$. Then there is a $\Ca$-bounded cycle $\delta$ initiated in $r$ and containing an accepting state. For any $\Ca$-bounded path $\alpha$ from $t$ to $r$ (at least one exists due to the existence of $\rho$) the run $\alpha\cdot\delta^\omega$ is a $\Ca$-bounded accepting run, showing that $r$ is admissible.

So to decide whether $\Val_\C^\Ca(t)=\infty$, it suffices to use the polynomial-time algorithm of Lemma~\ref{lem:cap-bounded-extended} to compute the set of states that are $\Ca$-reachable from $t$, and for every such state decide, whether it is admissible, using the polynomial-time algorithm from Lemma~\ref{lem:algorithm-admissible}. The correctness and the polynomial running time of this procedure are immediate.
\qed
\end{proof}

\subsection{A Proof of Lemma~\ref{lem-cycle-compute}}

\begin{reflemma}{lem-cycle-compute}
The existence of a zero-cost cycle is decidable in polynomial  
  time, and an example of a zero-cost cycle $\beta$ (if it exists)
  is computable in polynomial time. The same holds for zero-cost cycles
  containing an accepting state.
\end{reflemma}
\begin{proof}
Note that a zero-cost cycle, or zero-cost cycle containing an accepting state, is simply a simple cycle of zero cost initiated in an admissible state, or a simple cycle of zero cost containing an accepting state that is initiated in an admissible state, respectively. 

So to decide whether there is a zero-cost cycle, it suffices to compute the set $A\subseteq S$ of all admissible states using the polynomial-time algorithm of Lemma~\ref{lem:algorithm-admissible}, and then for every $q\in A$ try to find a simple cycle of zero cost initiated in $q$ using the polynomial-time algorithm from Lemma~\ref{lem:cycle-of-zero-cost}. If we find such a cycle we can output it as $\beta$, otherwise we conclude that there is no such cycle.

To decide whether there is a zero-cost cycle containing an accepting state, it suffices, for every $q\in A$, to use the polynomial-time algorithm of Lemma~\ref{lem:cycle-of-zero-cost} to find a cycle of zero cost initiated in $q$ containing an accepting state. Note that if the algorithm finds such a cycle $\delta$ for some $q\in A$, this cycle does not have to be simple. However, if it is not simple, then there is a simple cycle $\theta$ of zero cost such that $\delta=\gamma\cdot\theta\cdot\gamma'$ for some finite paths $\gamma,\gamma'$ and $\theta$ contains an accepting state. Moreover, $\delta(0)$ is an admissible state, since $\delta\cdot\gamma'\cdot\gamma$ is an admissibility witness for $\delta(0)$. So $\theta$ is a zero-cost cycle and we can return it as the desired cycle $\beta$ (note that $\theta$ can be easily computed once $\delta$ is computed). If the algorithm of  Lemma~\ref{lem:cycle-of-zero-cost} fails to find a cycle of zero cost with an accepting state for every $q\in A$, we conclude that there is no such cycle.
\qed
\end{proof}

\section{Auxiliary Results}
\label{app:algorithms}

This section contains some auxiliary algorithms that are not mentioned in the main text and that will be useful in later proofs.

\begin{lemma}
\label{lem:length-algorithm}
Let $\C=(S,\tran{},c,R,F)$ be a consumption system. There is an algorithm  $\minpath(s_1,s_2,m,\avoid)$ which for a given pair of states $s_1,s_2\in S$, given number $m\in \Nset_0$ and a given set $\avoid\subseteq S$ %
 decides, whether there is a path $\alpha$ satisfying the following conditions:
 \begin{itemize}
 \item $\alpha(0)=s_1$,  $\last{(\alpha)}=s_2$, $\len{\alpha}=m$, and
 \item for all $0<i<\len{\alpha}$ it holds $\alpha(i)\not\in\avoid$.
 \end{itemize}
 If there is such a path, the algorithm computes a path $\alpha$ of minimal cost among all paths satisfying the above conditions. The algorithm runs in time polynomial in $\size{\C}$ and $m$. 
\end{lemma}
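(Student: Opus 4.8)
The plan is to solve the stated fixed-length reachability problem by a Bellman--Ford-style dynamic program over path length. For every $0\le i\le m$ and every $t\in S$, define $C^i(t)$ to be the minimal cost of a path $\alpha$ with $\alpha(0)=s_1$, $\last(\alpha)=t$, $\len{\alpha}=i$, and $\alpha(j)\notin\avoid$ for all $0<j<i$, with $C^i(t)=\infty$ if no such path exists. Then a path as required by the lemma exists iff $C^m(s_2)<\infty$, and $C^m(s_2)$ is its minimal cost, so it suffices to compute this table together with witnessing predecessors.

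First I would set up and justify the recurrence. The base cases are $C^0(s_1)=0$ and $C^0(t)=\infty$ for $t\neq s_1$, and --- since a path of length $1$ has no internal vertices --- $C^1(t)=\cost(s_1\tran{}t)$ when $s_1\tran{}t$ and $C^1(t)=\infty$ otherwise. For $i\ge 2$ one observes that any path witnessing $C^i(t)$ splits uniquely into a length-$(i{-}1)$ prefix from $s_1$ to some $q$ with $q\tran{}t$; here $q$ sits at index $i{-}1$ and is therefore an internal vertex of the length-$i$ path, so $q\notin\avoid$, and the internal vertices of the full path are exactly those of the prefix together with $q$. Hence the prefix is precisely a path counted by $C^{i-1}(q)$, and conversely any such prefix extended by the edge $q\tran{}t$ is a valid length-$i$ path to $t$. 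This yields, for $i\ge 2$,
\[
C^i(t)=\min\{\,C^{i-1}(q)+\cost(q\tran{}t)\;\mid\; q\tran{}t,\ q\notin\avoid\,\},
\]
with the convention $\min\emptyset=\infty$; a routine induction on $i$ shows that the values computed from these equations coincide with the $C^i(t)$ defined above.

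The algorithm then computes the rows $C^0,C^1,\dots,C^m$ in this order using the recurrence, and for each entry with a finite value stores a predecessor state achieving the minimum. If $C^m(s_2)=\infty$ it reports that no path exists; otherwise it reconstructs a minimal-cost path by tracing predecessors from $(m,s_2)$ back to $(0,s_1)$. The complexity analysis is immediate: there are $(m{+}1)\cdot|S|$ table entries, each row is computed in time $O(|{\tran{}}|)$, and every number manipulated is bounded by $m\cdot\cmax$, hence of encoding size $O(\log m+\size{\C})$; so the overall running time is polynomial in $\size{\C}$ and $m$.

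The one delicate point --- and the step I expect to require the most care in the write-up --- is the asymmetric treatment of endpoints versus internal vertices: $s_1$ and $s_2$ are allowed to lie in $\avoid$, but no strictly internal vertex may. This is exactly why $i=1$ is handled as a separate base case (so that the edge $s_1\tran{}s_2$ remains available even when $s_1$ or $s_2$ is in $\avoid$), and why, in the step $i\ge 2$, the constraint $q\notin\avoid$ is imposed on the newly-internalized vertex $q$ but no special exemption for $s_1$ is introduced (a later revisit of $s_1$ as an internal vertex must still be forbidden if $s_1\in\avoid$). The case $s_1=s_2$, in which $\alpha$ is a cycle, needs no extra handling and is subsumed by the same recurrence.
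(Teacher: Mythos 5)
Your proof is correct and is essentially the paper's argument in different clothing: the paper builds the layered graph $(S\setminus\avoid)\times\{1,\dots,m-1\}\cup\{(s_1,0),(s_2,m)\}$ and runs a shortest-path algorithm for DAGs, which is exactly your length-indexed dynamic program $C^i(t)$ with the endpoints exempted from the $\avoid$ constraint. The recurrence, the handling of internal versus endpoint vertices, and the complexity bound all match the paper's construction.
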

\begin{proof}
The algorithm constructs a labelled graph $G=(V,\bbtran{},L,\ell)$, where $L\subset \Nset_0$, is defined as follows:
\begin{itemize}
\item $V= (S \setminus \avoid)\times\{1,2,\dots,m-1\} \cup \{(s_1,0),(s_2,m) \}$.
\item There is an edge $((s,i),(s',j))$ in $G$ if and only if $j=i+1$ and $s\tran{a}s'$ is a transition in $\C$. In such a case $\ell((s,i),(s',j)) = a$. 
\end{itemize}
Then the algorithm finds a path of minimal cost from $(s_1,0)$ to $(s_2,m)$ (or decides that such a path does not exist) using, e.g., the algorithm for computing shortest paths in directed acyclic graphs. The procedure then returns the corresponding path in $\C$ (it suffices to discard the second components from the computed path in $G$). The correctness of the procedure and its complexity analysis are straightforward.
\end{proof}

\begin{lemma}
\label{lem:length-reach-algorithm}
Let $\C=(S,\tran{},c,R,F)$ be a consumption system. There is an algorithm  $\minpathr(s_1,s_2,m,\avoid,\reach)$ which for a given pair of states $s_1,s_2\in S$, given number $m\in \Nset_0$ and given sets $\reach,\avoid\subseteq S$ decides, whether there is a path $\alpha$ satisfying the following conditions:
\begin{itemize}
\item $\alpha(0)=s_1$,  $\last{(\alpha)}=s_2$, $\len{\alpha}=m$, and
\item for all $0<i<\len{\alpha}$ it holds $\alpha(i)\not\in\avoid$, and
\item there is $0\leq j\leq\len{\alpha}$ such that $\alpha(j)\in\reach$.
\end{itemize}
If there is such a path, the algorithm computes a path $\alpha$ of minimal cost among all paths satisfying the above conditions. The algorithm runs in time polynomial in $\size{\C}$ and $m$. 
\end{lemma}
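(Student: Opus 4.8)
The plan is to refine the layered-graph construction used for Lemma~\ref{lem:length-algorithm}: I enrich each vertex of that graph with one extra bit recording whether a state of $\reach$ has been visited so far, and then look for a minimum-cost path into a layer-$m$ copy of $s_2$ whose bit is set.

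Concretely, I would build a labelled DAG $G=(V,\bbtran{},L,\ell)$ with $L\subset\Nset_0$ whose vertices are triples $(s,i,f)$ with $s\in S$, $i\in\{0,\dots,m\}$ a layer index, and $f\in\{0,1\}$ the ``$\reach$ already seen'' flag. To keep the $\avoid$ restriction on interior vertices, I take $V = \bigl((S\setminus\avoid)\times\{1,\dots,m{-}1\}\times\{0,1\}\bigr)\cup\{(s_1,0,b_1)\}\cup\bigl(\{s_2\}\times\{m\}\times\{0,1\}\bigr)$, where $b_1=1$ if $s_1\in\reach$ and $b_1=0$ otherwise (when $m=0$ we just test whether $s_1=s_2\in\reach$). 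There is an edge from $(s,i,f)$ to $(s',i{+}1,f')$, both in $V$, exactly when $s\tran{a}s'$ in $\C$ and $f'=f$ unless $s'\in\reach$, in which case $f'=1$; its label is $a$. The algorithm then runs the standard DAG shortest-path procedure to compute a minimum-cost path from $(s_1,0,b_1)$ to $(s_2,m,1)$, reporting failure if none exists, and returns the path of $\C$ obtained by deleting the second and third coordinates of each vertex.

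For correctness I would show that deleting coordinates is a cost-preserving bijection between paths of $G$ from $(s_1,0,b_1)$ to $(s_2,m,1)$ and paths $\alpha$ of $\C$ meeting all three requirements: the layer index forces $\alpha(0)=s_1$, $\last(\alpha)=s_2$ and $\len{\alpha}=m$; only the two endpoint layers are exempt from the membership restriction, which yields $\alpha(i)\notin\avoid$ for $0<i<m$; and a trivial induction on $i$ shows that the flag at layer $i$ equals $1$ iff some $\alpha(j)$ with $j\le i$ lies in $\reach$, so arrival at a flag-$1$ vertex in layer $m$ is equivalent to $\alpha(j)\in\reach$ for some $0\le j\le m$. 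Edge labels are exactly transition costs, so minimality transfers. Since $|V|=\calO(|S|\cdot m)$ and the number of edges is $\calO(|S|^2\cdot m)$, and DAG shortest paths take time linear in the graph, the whole procedure runs in time polynomial in $\size{\C}$ and $m$.

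There is no genuine obstacle here; the only points requiring care are the boundary conventions---$s_1$ and $s_2$ are allowed to lie in $\avoid$ and/or $\reach$, which is why $b_1$ is fixed in advance and why only the first and last layers are exempt from the $\avoid$ filter---and the monotonicity of the flag update, which is what makes ``$\reach$ visited at some point'' equivalent to ``flag $=1$ in the final layer''.
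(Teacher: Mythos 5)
Your construction is correct, but it differs from the paper's in how the ``must visit $\reach$'' constraint is enforced. The paper keeps the plain layered DAG of Lemma~\ref{lem:length-algorithm} and instead \emph{enumerates}: after dispatching the easy case $s_1\in\reach$ or $s_2\in\reach$ to $\minpath$, it builds, for every $q\in\reach$ and every position $0<i<m$, a graph $G_{q,i}$ whose layer $i$ contains only the vertex $(q,i)$, so that every $(s_1,0)$-to-$(s_2,m)$ path is forced through $q$ at step $i$; it then takes the cheapest answer over all pairs $(q,i)$. You instead take the product of the layered DAG with a two-state ``seen $\reach$'' automaton and run a single shortest-path computation to the flag-$1$ copy of $(s_2,m)$. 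Both arguments are sound and both boundary issues (endpoints exempt from $\avoid$, endpoints possibly in $\reach$, the degenerate case $m=0$) are handled correctly in your version via $b_1$ and the separate $m=0$ test. What each buys: your product graph has $\calO(|S|\cdot m)$ vertices and needs one DAG shortest-path run, versus the paper's $\calO(|\reach|\cdot m)$ runs on graphs of the same size, so yours is asymptotically cheaper and the correctness proof reduces to one monotone-flag induction; the paper's version avoids introducing the flag and reuses $\minpath$/the layered graph entirely as a black box, at the cost of the case split and the outer enumeration. Since the lemma only claims polynomial time, the difference is immaterial to the result.
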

\begin{proof}
If $s_1\in \reach$ or $s_2\in\reach$, then we just call the algorithm $\minpath(s_1,s_2,m,\avoid)$ from Lemma~\ref{lem:length-algorithm}.

Otherwise for every $q\in \reach$ and every $0< i < m$ the algorithm constructs a labelled graph $G_{q,i}=(V,\bbtran{},L,\ell)$, where $L\subset \Nset_0$, is defined as follows: 
\begin{itemize}
\item $V= (S \setminus \avoid)\times\{1,2,\dots i-1,i+1,\dots,m-1\} \cup \{(s_1,0),(s_2,m),(q,i) \}$.
\item There is an edge $((s,i),(s',j))$ in $G$ if and only if $j=i+1$ and $s\tran{a}s'$ is a transition in $\C$. In such a case $\ell((s,i),(s',j)) = a$. 
\end{itemize}
Then, for every $q\in \reach$ and every $0<i<m$ the algorithm finds a path $\alpha_{q,i}$ of minimal cost from $(s_1,0)$ to $(s_2,m)$ in $G_{q,i}$ (or decides that such a path does not exist) using, again the algorithm for computing shortest paths in directed acyclic graphs. 
If $\alpha_{q,i}$ exists for at least for one pair $q,i$, the algorithm returns $\alpha_{q,i}$ of minimal cost (the minimum is taken among all $q\in \reach$, $0<i<m$), otherwise the path satisfying the required conditions does not exist. Again, the correctness of the algorithm and its complexity are straightforward, since every path $\alpha$ satisfying the required conditions induces, in a natural way, a corresponding path $\alpha'$ of the same cost in $G_{q,j}$, where $q,j$ are such that $q=\alpha(j)\in \reach$. Conversely, every $\alpha'$ in some $G_{q,i}$ induces a path $\alpha$ of the same cost in $\C$ that satisfies the required conditions.
\qed
\end{proof}

\section{Proofs of Section~\ref{sec:nobuchi}}

\subsection{A Proof of Lemma~\ref{lem:optimal-segments-char}}

\begin{reflemma}{lem:optimal-segments-char}
If there is at least one compact segment with a given characteristic
$\chi$, then there is also an optimal compact segment for
$\chi$. Moreover, all compact segments optimal for a given
characteristic have the same total cost and length.
\end{reflemma}
\begin{proof}
Fix a characteristic $\chi=(r,q,t,m,n,b)$. 
If there is at least one compact segment
$\eta=\xi\cdot\theta^j\cdot\xi'$ having characteristic $\chi$, there
is also at least one segment of a characteristic $(r,q,t,m,0,b)$
(namely $\xi\cdot\xi'$); and (provided that $n>0$) at least one cycle
of length $n$ initiated in $q$ which is either a segment or does not
contain any reload state (namely $\theta$).  So there also is a
segment $\gamma\cdot\gamma'$ and (provided that $n>0$) a cycle
$\delta$ satisfying the above conditions whose costs are minimal among
all segments and cycles that satisfy these conditions,
respectively. Let $k$ be either $1$ (if $n=0$) or the maximal number
such that $\gamma\cdot\delta^k\cdot\gamma'$ is a $\Ca$-bounded path
(if $n>0$ -- then such a $k$ must exist, because
$\cost(\gamma\cdot\delta\cdot\gamma')\leq \cost(\eta)$). Clearly
$\gamma\cdot\delta^k\cdot\gamma'$ is a compact segment optimal for
$\chi$.

For the second part, let $\eta=\xi\cdot\theta^j\cdot\xi'$ and
$\eta'=\gamma\cdot\delta^k\cdot\gamma'$ be two segments optimal for
the same characteristic $\chi=(r,q,t,m,n,b)$. If $n=0$, then clearly
$\len{\eta}=\len{\eta'}=m$ and from the optimality of both segments we
get the equality of their costs. Otherwise, by definition of optimal
segments we have $\cost{(\gamma\cdot\gamma')}=\cost{(\xi\cdot\xi')}$
and $\cost{(\theta)}=\cost{(\delta)}$. To prove the lemma it suffices
to show that $j=k$. Suppose that, e.g., $j<k$, the other case is
symmetrical. Then $\xi\cdot\theta^{j+1}\cdot\xi'$ is a $\Ca$-bounded
path (since its cost is at most the cost of $\eta'$), a contradiction
with the fact that $\eta$ has a
characteristic. %
\qed
\end{proof}

\subsection{A Proof of Lemma~\ref{lemma:segment-candidates}}
\label{app:segment-candidates}

\begin{reflemma}{lemma:segment-candidates}
  There is an optimal \T-visiting cycle $\beta$ whose every segment
  is a compact segment optimal for some characteristic.
\end{reflemma}
\begin{proof}
  We say that a segment is \emph{bad} if it is not a compact segment
  optimal for some characteristic $\chi$. Given an optimal
  \T-visiting cycle $\beta$ containing $g>0$ bad segments, we show
  how to construct an optimal \T-visiting cycle $\beta'$ containing
  $g-1$ bad segments. Combined with the existence of at least one
  optimal \T-visiting cycle (which follows from
  Lemma~\ref{lem:optimal-cycles-ex}), this proves the lemma.

  So let $\beta$ be an optimal \T-visiting cycle and $\eta$ its bad
  segment (i.e., $\beta=\xi\cdot \eta\cdot \xi' $ for some finite
  paths $\xi,\xi'$). We denote $t=\eta(0)$ and $r=\last(\eta)$.  In
  the following we call every segment initiated in $t$ and ending in
  $r$ an $r$-$t$-segment. We also say that two paths are
  \T-equivalent, if both of them contain a state from \T{} or none of
  them does. We will construct an \rts $\eta'$ such that $\eta'$ is
  not bad, $\eta'$ is \T-equivalent ot $\eta$ and
  $\MC(\beta)=\MC(\beta')$, where $\beta'=\xi \cdot \eta'
  \cdot\xi'$. Then clearly $\beta'$ is an optimal \T-visiting cycle
  having at most $g-1$ bad segments.
 
  The construction proceeds in two steps. First we construct a compact
  \rts $\hat\eta$ of cost and length equal to $\cost(\eta)$ and
  $\len{\eta}$, respectively. Then we construct an \rts $\eta'$ with
  $\MC(\xi\cdot\eta'\cdot\xi')=\MC(\xi\cdot\hat\eta\cdot\xi')=\MC(\beta)$
  such that $\eta'$ is a compact segment with a characteristic, and we
  show that $\eta'$ must be optimal for all of its
  characteristics. During the construction we ensure that $\hat\eta$
  and $\eta'$ are \T-equivalent to
  $\eta$. %

  Note that from the optimality of $\beta$ it follows that every \rts
  of length equal to $\eta$ which is \T-equivalent to $\eta$ must have
  a cost greater or equal to $c(\eta)$. We will often use this fact in
  the proof.

  \textbf{Constructing $\hat\eta$ from $\eta$:} We employ a technique
  similar to the technique of decomposition into simple cycles. An
  \emph{$\eta$-decomposition } is a sequence
  $\dc=\alpha_0,\delta_0,k_0,\alpha_1,\delta_1,k_1,\dots,\alpha_{h-1},\delta_{h-1},k_{h-1},\alpha_h$
  such that
\begin{itemize}
\item For every $i$ the $\alpha_i$ is a finite path, $\delta_i$ is a
  simple cycle, and $k_i$ is a positive integer, and
\item $\alpha_0\delta_0^{k_0}\alpha_1\delta_1^{k_1}\cdots
  \alpha_{h-1}\delta_{h-1}^{k_{h-1}}\alpha_h$ is an \rts that is
  \T-equivalent to $\eta$ and whose cost and length are equal to
  $\cost(\eta)$ and $\len{\eta}$, respectively.
\end{itemize} 
A \emph{rank} of such an $\eta$-decomposition is the vector of natural
numbers $$\rank(\dc)\quad=\quad\Big(\sum_{i=0}^h
\len{\alpha_i},\,h,\,|\{i\mid k_i > |S|\}|\,\Big).$$

Now let
$\dc=\alpha_0,\delta_0,k_0,\alpha_1,\delta_1,k_1,\dots,\alpha_{h-1},\delta_{h-1},k_{h-1},\alpha_h$
be an $\eta$-decomposition with rank minimal w.r.t. the lexicographic
ordering (such an $\eta$-decomposition exists, since ranks are vectors
of natural numbers), and let
$\hat\eta=\alpha_0\delta_0^{k_0}\alpha_1\delta_1^{k_1}\cdots
\alpha_{h-1}\delta_{h-1}^{k_{h-1}}\alpha_h$ be the corresponding \rts
\T-equivalent to $\eta$, whose length and cost are equal to
$\len{\eta}$ and $\cost(\eta)$, respectively. We claim that the
following holds: for every $0\leq i \leq h$ we have $\len{\alpha_i} <
|S|$, $h\leq 2\cdot|S|$ and $|\{i\mid k_i > |S|\}| \leq 1$.
 From this it immediately follows that $\hat\eta=\gamma\cdot\delta^k\cdot \gamma'$, where $\delta$ is a cycle of length at most $|S|$, 
 and $\gamma,\gamma'$ are such that $\len{\gamma\cdot\gamma'}\leq 4|S|^3$. In particular, $\hat\eta$ is a compact segment (the fact that the paths $\gamma,\gamma'$ are shorter than required for the compactness will be used in the second part of the proof). 

 First let us assume, for the sake of contradiction, that for some $0
 \leq i \leq h$ it holds $\len{\alpha_i} \geq |S|$. Then $\alpha_i =
 \alpha'\delta'\alpha''$ for some simple cycle $\delta'$ of positive
 length and some (possibly empty) finite paths
 $\alpha',\alpha''$. Then the sequence
 $\dc'=\alpha_0,\delta_0,k_0,\dots,\delta_{i-1},{k_{i-1}},\alpha',\delta',1,\alpha'',\delta_{i},{k_{i}},\dots\alpha_h$
 is an $\eta$-decomposition such that
 $\rank(\dc')[1]=\rank(\dc)[1]-\len{\delta'}<\rank(\dc)[1]$, a
 contradiction with the choice of $\dc$.

 Now let us assume that $h>2\cdot|S|$. Then there are $0\leq i<j<h$
 such that $\len{\delta_{i}} = \len{\delta_{j}}$ and $\delta_i$ is
 \T-equivalent to $\delta_j$. It must be the case that
 $\cost(\delta_{i})=\cost(\delta_{j})$, otherwise, if
 e.g. $\cost(\delta_{i})<\cost(\delta_{j})$, then
 $\alpha_0\delta_0^{k_0}\cdots
 \delta_{i-1}^{k_{i-1}}\alpha_{i}\alpha_{i + 1}\cdots
 \alpha_{j}\delta_{j}^{k_{j}+k_{i}}\alpha_{j + 1}\cdots \alpha_h$
 would be an \rts \T-equivalent to $\theta$ whose length equals
 $\len{\eta}$ and whose cost is smaller than $\cost(\eta)$, a
 contradiction with the optimality of $\beta$. So the sequence $\dc'=
 \alpha_0,\dots,\delta_{i-1},k_{i-1},(\alpha_{i}\cdot\alpha_{i +
   1}),\delta_{i + 1},\dots,\alpha_{j},\delta_{j},k_{j} +
 k_{i},\alpha_{j + 1},\dots,\alpha_h$ is an $\eta$-decomposition with
 $\rank(dc')[1]=\rank(dc)[1]$ and $\rank(dc')[2]=\rank(dc)[2]-1$, a
 contradiction with the choice of $\dc$.

 Finally, let us assume that there are $0 \leq i<j< h$ such that $k_i
 > |S|$ and $k_j
 >|S|$. %
We distinguish three cases:  $\MC(\delta_i)>\MC(\delta_j)$, $\MC(\delta_i)<\MC(\delta_j)$ and $\MC(\delta_i)=\MC(\delta_j)$.

First assume that $\MC(\delta_i)>\MC(\delta_j)$ and let $a$ be the
greatest natural number such that $|S|\geq k_i -a\cdot
\len{\delta_j}\geq 1$ (clearly $a\geq 1$). We
have $$\MC(\delta_i)-\MC(\delta_j) =
\frac{\cost(\delta_i)\cdot\len{\delta_j} -
  \cost(\delta_j)\cdot\len{\delta_i}
}{{\len{\delta_i}\cdot{\len{\delta_j}}}} >
0%
, $$ from which it follows that $\cost(\delta_i)\cdot\len{\delta_j} -
\cost(\delta_j)\cdot\len{\delta_i}>0$. Now consider the path
$\pi=\alpha_0\delta_0^{k_0}\cdots \alpha_{i}\delta_i^{k_i -
  a\cdot\len{\delta_j}}\alpha_{i + 1}\cdots
\alpha_{j}\delta_{j}^{k_{j}+a\cdot\len{\delta_i}}\alpha_{j + 1}\cdots
\alpha_h$. Clearly $\len{\pi}=\len{\hat\eta}=\len{\eta}$ and $\pi$ is
an \rts \T-equivalent to
$\eta$. Moreover, $$\cost(\pi)=\cost(\hat\eta)-a\cdot(\cost(\delta_i)\cdot\len{\delta_j}
- \cost(\delta_j)\cdot\len{\delta_i})<\cost(\hat\eta)=\cost(\eta),$$ a
contradiction with the optimality of $\beta$.

The case $\MC(\delta_i)<\MC(\delta_j)$ is handled symmetrically, so it
remains to consider the case $\MC(\delta_i)=\MC(\delta_j)$. In this
case we clearly have $\cost(\delta_i)\cdot\len{\delta_j} -
\cost(\delta_j)\cdot\len{\delta_i}=0$ and thus the aforementioned path
$\pi$ is an \rts \T-equivalent to $\eta$ such that not only
$\len{\pi}=\len{\eta}$, but also $\cost(\pi)=\cost(\eta)$. It follows
that the corresponding sequence $\dc'=\alpha_0,\delta_0,{k_0},\cdots
\alpha_{i},\delta_i,{k_i - a\cdot\len{\delta_j}},\alpha_{i +
  1},\cdots,
\alpha_{j},\delta_{j},{k_{j}+a\cdot\len{\delta_i}},\alpha_{j +
  1},\cdots, \alpha_h$ is an $\eta$-decomposition such that
$\rank(\dc')[1]=\rank(\dc)[1]$, $\rank(\dc')[2]=\rank(\dc)[2]$ and
$\rank(\dc')[3]<\rank(\dc)[3]$, a contradiction with the choice of
$\dc$.

\textbf{Constructing $\eta'$ from $\hat\eta$:} We now have an \rts
$\hat\eta$ such that $\hat{\eta}$ is \T-equivalent to $\eta$,
$\len{\hat\eta}=\len{\eta}$ and $\cost(\hat\eta)=\cost(\eta)$ (and
thus also $\MC(\hat\eta)=\MC(\eta)$), and moreover
$\hat\eta=\gamma\cdot\delta^k\cdot\gamma'$ for some finite paths
$\gamma,\gamma'$ of combined length at most $4|S|^3$ and $\delta$
either a single vertex or a simple cycle. The compact segment
$\hat\eta$ may not have a characteristic for three reasons:
\begin{itemize} \item $c(\delta)=0$; 
\item $\delta$ contains a state from $T$ and $\gamma\cdot\gamma'$ not; 
\item $\delta$ is a cycle and $\gamma\cdot\delta^{k+1}\cdot\gamma'$ is also a $\Ca$-bounded path. 
\end{itemize}
The first two cases actually cannot happen. Indeed, if $c(\delta)=0$,
then $\delta$ is a zero-cost cycle (recall that every state on a
\T-visiting cycle is admissible, and $\delta(0)$ lies on a
\T-visiting cycle $\xi\cdot\hat\eta\cdot\xi'$), a contradiction with
the assumptions of Proposition~\ref{prop:algorithm}. In the second
case clearly $k\geq 1$ and
$\hat\eta=\gamma\cdot\delta\cdot\delta^{k-1}\cdot\gamma'$ is a compact
segment of a characteristic $(r,\delta(0),t,m,\len{\delta},1)$, where
$m=\len{\gamma\cdot\delta\cdot\gamma'}\leq 4|S|^3 +|S| \leq 5|S|^3$, a
contradiction with $\eta'$ not having a characteristic.

Now suppose that the first two cases do not occur and the third does. 
Then we have $k\geq 2$, since otherwise we would have
$\len{\hat\eta}\leq 4|S|^3 + |S|\leq 5|S|^3$ and ${\hat\eta}$ would
have a characteristic $(r,r,t,\len{\hat\eta},0,0)$. Now let $z\geq 1$
be the maximal number such that
$\eta'=\gamma\cdot\delta^{z}\cdot\gamma' $ is a $\Ca$-bounded
path. Clearly, $\eta'$ is a compact \rts \T-equivalent to $\hat\eta$
with a characteristic. We need to show that
$\MC(\xi\cdot\eta'\cdot\xi')=\MC(\xi\cdot\hat\eta\cdot\xi')$. From the
optimality of $\beta$ it follows that it suffices to show
$\MC(\xi\cdot\eta'\cdot\xi')\leq\MC(\xi\cdot\hat\eta\cdot\xi')$.

Assume, for the sake of contradiction, that $\MC(\xi\cdot\eta'\cdot\xi')>\MC(\xi\cdot\hat\eta\cdot\xi')$. Denote $C=\cost(\xi\cdot\gamma\cdot\gamma'\cdot\xi')$ and $D=\len{\xi\cdot\gamma\cdot\gamma'\cdot\xi'}$. Clearly
\begin{equation}
\label{eq:mc-iterace}
\MC(\xi\cdot\eta'\cdot\xi') = \frac{C+z\cdot \cost(\delta)}{D+z\cdot\len{\delta}} > \frac{C+k\cdot \cost(\delta)}{D+k\cdot\len{\delta}} = \MC(\xi\cdot\hat\eta\cdot\xi').
\end{equation}
Denote $z'\geq 1$ the number such that $z=k+z'$. From~\eqref{eq:mc-iterace} we gradually get
\begin{align*}
k\cdot C\cdot\len{\delta} + z\cdot D\cdot\cost(\delta) &>   k\cdot D\cdot\cost(\delta) +z\cdot C\cdot\len{\delta} \\
z'\cdot D\cdot\cost(\delta) &> z'\cdot C\cdot\len{\delta}\\
{}{D}\cdot\cost(\delta) &> {C}\cdot{\len{\delta}}\\
{(k-1)}\cdot{D}\cdot\cost(\delta) &> (k-1)\cdot{C}\cdot{\len{\delta}} \quad\text{(since $k\geq 2$)}\\
C \cdot\len{\delta} + k\cdot D\cdot\cost(\delta) &>   k\cdot C\cdot \len{\delta} +D\cdot\cost(\delta)\\
\frac{C+k\cdot \cost(\delta)}{D+k\cdot\len{\delta}} &> \frac{C+ \cost(\delta)}{D+\len{\delta}}.
\end{align*}
But then $(\xi\cdot\gamma\cdot\delta\cdot\gamma'\cdot\xi')$ is a \T-visiting cycle with $\MC(\xi\cdot\gamma\cdot\delta\cdot\gamma'\cdot\xi') = (C+\cost(\delta))/(D+\len{\delta}) < \MC(\xi\cdot\hat\eta\cdot\xi')=\MC(\beta)$, a contradiction with the optimality of $\beta$.

Now let $\chi=(r,q,t,m,n,b)$ be any characteristic of $\eta'$. We show that $\eta'$ is optimal for this characteristic. Assume, for the sake of contradiction, that it is not optimal for $\chi$. Two cases may happen:
\begin{itemize}
\item
There is an \rts $\alpha_0\cdot \alpha_1 $ of a %
characteristic $(r,q,t,m,0,b)$ such that $\cost(\alpha_0\cdot \alpha_1 )< \cost(\gamma\cdot\gamma')$. Then $\xi\cdot \alpha_0\cdot \delta^z\cdot \alpha_1 \cdot\xi'$ is a \T-visiting cycle with $\MC(\xi\cdot \alpha_0\cdot \delta^z\cdot \alpha_1 \cdot\xi')< \MC(\xi\cdot\eta'\cdot\xi')=\MC(\beta)$, a contradiction with the optimality of $\beta$.
\item
There is a cycle $\theta$ of length $n$ initiated in $q$ such that $\theta$ is either a segment or does not contain any reload state and $\cost(\theta)<\cost(\delta)$. Then $\xi\cdot \gamma\cdot \theta^z\cdot \gamma' \cdot\xi$ is again a \T-visiting cycle whose mean cost is smaller than $\MC(\beta)$, a contradiction.
\end{itemize}
\qed
\end{proof}

\subsection{A Proof of Lemma~\ref{lem:segments-algorithm}}
\label{sec:lem-segments-algorithm}

\begin{reflemma}{lem:segments-algorithm}
There is an algorithm which decides, for a given characteristic $\chi$, whether the set of all compact segments that have a characteristic $\chi$ is non-empty, and if the answer is \emph{yes}, it computes a tuple $(\gamma,\gamma',\delta,k)$ such that $\gamma\cdot\delta^k\cdot\gamma'$ is a compact segment optimal for $\chi$. The algorithm runs in polynomial time.
\end{reflemma}
\begin{proof}
Let $\chi=(r,q,t,m,n,b)$ be the input characteristic. From the definition of an optimal compact segment for $\chi$ it follows that we have to compute the following:
\begin{itemize}
\item If $n>0$ we have to compute a cycle $\delta^*$ of minimal cost among all cycles $\delta$ satisfying the following: $\len{\delta}=n$, $\delta(0)=q$, and $\delta$ is either a segment (if $q\in R$), or $\delta$ does not contain any reload state (if $q\not\in R$). In both cases we can use the algorithm of Lemma~\ref{lem:length-algorithm}, namely return the result of $\minpath(q,q,n,R)$ as the desired cycle $\delta^*$. If $\minpath(q,q,n,R)$ answers that the required path does not exist, we can immediately say that no compact segment has $\chi$ as its characteristic, i.e., we return ``no''.
\item If $b=0$, we have to compute a %
compact segment  
$\alpha^*$ of minimal cost among all 
compact segments that have a characteristic $(r,q,t,m,0,0)$.
Path $\alpha^*$ can be computed using the algorithm $\minpathr$ of Lemma~\ref{lem:length-reach-algorithm}, namely by making a call $\minpathr(r,t,m,R,\{q\})$. If the result of this call is a non-existence of the required path, we again immediately return ``no''.%
\item If $b=1$, we have to compute a %
compact segment  
$\alpha^*$ of minimal cost among all compact segments that have a characteristic $(r,q,t,m,0,1)$.
If $r$, $q$ or $t\in T$, we can proceed as in the previous case. Otherwise the path $\alpha^*$ can be computed as follows. For every $0\leq m' \leq m$ we compute these paths:
\begin{itemize}
\item $\alpha_{m',1}$ by calling $\minpathr(r,q,m',R,T)$,
\item $\alpha_{m-m',1}$ by calling $\minpath(q,t,m-m',R)$,
\item $\alpha_{m',2}$ by calling $\minpath(r,q,m',R)$,
\item $\alpha_{m-m',2}$ by calling $\minpathr(q,t,m-m',R,T)$.
\end{itemize}
If for all such $m'$ and all $i\in\{1,2\}$ one of the paths $\alpha_{m',i}$, $\alpha_{m-m',i}$ does not exist, we immediately return ``no''.%
Then we select $0\leq m'\leq m$ and $i\in\{1,2\}$ that minimizes $\cost(\alpha_{m',i}\cdot\alpha_{m-m',i})$ and we return $\alpha^*:=\alpha_{m',i}\cdot\alpha_{m-m',i}$. The correctness of this is clear, since every compact segment $\gamma\cdot\gamma'$ of a characteristic $(r,q,t,m,0,1)$ and of minimal cost has the property that  $\gamma$ or $\gamma'$ satisfies the conditions stated in Lemma~\ref{lem:length-reach-algorithm} (and it is of minimal cost among all paths satisfying these conditions), and both $\gamma$ and $\gamma'$ satisfy the conditions stated in Lemma~\ref{lem:length-algorithm}, with one of them having minimal cost among all paths satisfying this condition.
\end{itemize}
Now having the paths $\alpha^*$ (and $\delta^*$ if $n>0$) we write $\alpha^* = \gamma\cdot\gamma'$, where $\last(\gamma)=q$, and check whether $\cost(\alpha^*)\leq \Ca$ or $\cost(\gamma\cdot \delta*\cdot\gamma')\leq \Ca$, depending on whether $n=0$ or not. If this check fails, we return ``no.'' Otherwise, if $n=0$ we set $\delta^* = q$ and $k=1$, else we set $k=\lfloor(\Ca-\cost(\alpha^*))/\cost(\delta^*) \rfloor$ (this number exists since $\cost(\delta*)>0$ -- otherwise there would be a zero-cost cycle on an admissible state $q$, a contradiction with the assumptions of Proposition~\ref{prop:algorithm}). Clearly $\gamma\cdot(\delta^*)^k\gamma'$ is a compact segment optimal for $\chi$, so we return the tuple $(\gamma,\gamma',\delta^*,k)$ as the desired result.
\qed
\end{proof}

\subsection{A Proof of Lemma~\ref{col:cycle-with-segments}}

\begin{reflemma}{col:cycle-with-segments}
There is an optimal \T-visiting cycle $\beta$ such that every
segment of $\beta$ is of the form $\cpath(\chi)$ for some
characteristic $\chi$.
\end{reflemma}
\begin{proof}
  First note that if a compact segment $\eta$ contains a state from
  $T$ and at the same time it is optimal for some characteristic
  $\chi=(r,q,t,m,n,0)$, then $\eta$ also has a characteristic
  $\chi'=(r,q,t,m,n,1)$ (recall the last condition from the definition
  of a characteristic). Since every compact segment with a
  characteristic $\chi'$ also has a characteristic $\chi$, $\eta$ is
  optimal also for $\chi'$.

  Now among all optimal \T-visiting cycles whose all segments are
  compact and optimal for some characteristic (at least one exists due
  to Lemma~\ref{lemma:segment-candidates}) let $\beta$ be the one
  minimizing the number of segments that are not of the form
  $\cpath(\chi)$ for some $\chi$. Suppose that $\beta$ contains such a
  segment $\eta$ and denote $\chi$ a characteristic of $\eta$ for
  which $\eta$ is optimal. As mentioned above, if $\eta$ contains a
  state from $T$, then we may assume that the last component of $\chi$
  is 1. Write $\beta= \xi \cdot\eta \cdot\xi'$. By
  Lemma~\ref{lem:optimal-segments-char} we have
  $\len{\eta}=\len{\cpath(\chi)}$ and
  $\cost(\eta)=\cost(\cpath(\chi))$ and thus $\MC(\beta)=\MC(\beta')$
  where $\beta'= \xi \cdot\cpath(\chi) \cdot\xi'$.  Note that if
  $\eta$ contains a state from $T$ then, by our assumption, the last
  component of $\chi$ is 1, so $\cpath(\chi)$ also contains a state
  from
  $T$. %
  Thus, $\beta'$ is an optimal \T-visiting cycle containing smaller
  number of undesirable segments than $\beta$, a contradiction with
  the choice of $\beta$.
\qed
\end{proof}

\section{Proofs of Section~\ref{sec:limit}}

\subsection{A Proof of Lemma~\ref{lem:lim-val-char}}

\begin{reflemma}{lem:lim-val-char}
$\Val_\C(s)$ is finite iff there is a safe cycle, in which 
case  $\Val_\C(s)= \min\{\MC(\beta)\mid \beta \text{ is a safe cycle}  \}$.
Further, there is a finite $\Ca\in\Nset_0$ such that
$\Val^\Ca_\C(s)=\Val_\C(s)$ iff either $\Val_\C(s)=\infty$,
or there is a strongly safe cycle $\hat\beta$ such that
$\MC(\hat\beta)=\Val_\C(s)$. %
In such a case $\Val^\Ca_\C(s)=\Val_\C(s)$ for every $\Ca\geq
3\cdot|S|\cdot \cmax$, where $\cmax$ is the maximal cost of a
transition in $\C$.
\end{reflemma}
\begin{proof}

  Denote $\minsafe=\min\{\MC(\beta)\mid \beta \text{ is a safe cycle}
  \}$. We say that a simple cycle $\delta$ is a \emph{simple
    sub-cycle} of $\beta$ if $\beta=\xi\cdot\delta\cdot\xi'$ for some
  $\xi,\xi'$. We say that a set of simple cycles $D$ is a
  \emph{decomposition of $\beta$} (into simple cycles) if for every
  two distinct $\delta,\delta'\in D$ it holds that $\delta'$ is a
  simple sub-cycle of $\xi\cdot\xi'$, where
  $\beta=\xi\cdot\delta\cdot\xi'$. Note that for any decomposition $D$
  of $\beta$ it holds $\MC(\beta)\geq \min_{\delta\in D}\MC(\delta)$.

  Suppose that $\Val_\C(s)<\infty$. We show that a safe cycle exists
  and moreover, for any capacity $\Ca$ we have $\Val_\C^\Ca(s)\geq
  \minsafe$, from which it follows that $\Val_\C(s)\geq \minsafe$. If
  $\Val_\C^\Ca(s)=\infty$, the inequality is trivial. Otherwise the
  inequality follows from the first part of the following claim (its
  second part will be used later in the proof):
\begin{claim}
  If $\Val_\C^\Ca(s)<\infty$, then there is a safe cycle $\delta$ such
  that $\Val_\C^\Ca(s)\geq\MC(\delta)$. Moreover, if
  $\Val_\C^\Ca(s)=\Val_\C(s)$, then $\delta$ is strongly
  safe. %
\end{claim}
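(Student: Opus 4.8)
I would distil a short, cheap cycle out of a run witnessing the value, and, under the extra hypothesis, argue that one cycle occurring in a suitable decomposition of that run is strongly safe. Fix a $\Ca$-bounded accepting run $\varrho$ initiated in $s$ with $\Val_\C^\Ca(\varrho)=\Val_\C^\Ca(s)=:v$ (an optimal run exists by Theorem~\ref{thm:alg}), and let $\varrho'$ be a suffix of $\varrho$ all of whose states recur. Those states then form a single strongly connected component $K$ of $\C$ that contains an accepting state and consists of states reachable from $s$. Decomposing closed-walk prefixes $\varrho'_{\le n}$ graph-theoretically into simple cycles, and using that the least mean cost of a cycle in such a decomposition is at most $\MC(\varrho'_{\le n})$, that $\limsup_n\MC(\varrho'_{\le n})=v$, and that $\C$ has only finitely many simple cycles, I obtain a simple cycle $\delta$ with all states in $K$ (so $\len{\delta}\le|S|$ and $\delta(0)$ is reachable from $s$) and $\MC(\delta)\le v$. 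If $K$ contains a reload state, then $\delta(0)$ is inter-reachable with a reload state and an accepting state, so $\delta$ is safe. If $K$ contains no reload state, then $\Ca$-boundedness forces the end cost to stabilise along $\varrho'$, hence $\varrho'$ has only finitely many positive-cost transitions and $v=0$; a zero-cost simple cycle through an accepting state can be extracted from a late all-zero window, and this cycle is safe. This proves the first assertion.

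For the ``moreover'' part assume in addition $v=\Val_\C(s)$. If $v=0$, the cycle $\delta$ produced above is already strongly safe (a zero-cost cycle through an accepting state, or, if $K$ has a reload state, a zero-cost cycle whose base is inter-reachable in $K$ with a reload and an accepting state). So assume $v>0$; then $K$ must contain a reload state. Any cycle of length at most $|S|$ with base in $K$ that has cost $0$ or that contains a reload state is strongly safe, so it suffices to produce such a cycle of mean cost at most $v$. I argue by contradiction, assuming that no zero-cost simple cycle has its base in $K$ and that every reload-containing simple cycle with base in $K$ has mean cost $>v$; since there are finitely many simple cycles, the latter are in fact $\ge v+\eps_0$ for a fixed $\eps_0>0$. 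A pumping argument shows that every reload-free simple cycle $\delta_0$ with base in $K$ has $\MC(\delta_0)\ge v$: otherwise, picking a path $\alpha$ from $s$ to $\delta_0(0)$ and a cycle $\eta$ inside $K$ through an accepting state and a reload state, the runs $\alpha\cdot(\delta_0^{\,k}\cdot\eta)^\omega$ are accepting and $\Ca'$-bounded for $\Ca':=\cost(\alpha)+k\cdot\cost(\delta_0)+\cost(\eta)$, and their mean cost tends to $\MC(\delta_0)<v$, contradicting $\Val_\C(s)=v$. Consequently every simple cycle with base in $K$ has mean cost $\ge v$. Now a closed-walk prefix $\varrho'_{\le n}$ (after restarting $\varrho'$ at a recurring reload state) decomposes into $\Ca$-bounded reload-simple cycles, each of length at most $C_1:=|R|\cdot|S|\cdot(\Ca+1)$ because the absence of zero-cost simple cycles in $K$ limits each reload-free piece to at most $\Ca$ positive-cost simple cycles; there are therefore at least $n/C_1$ of them, each contributing a reload-containing simple cycle to the decomposition of $\varrho'_{\le n}$. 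Hence $\cost(\varrho'_{\le n})\ge v\cdot n+\eps_0\cdot(n/C_1)$, so $\MC(\varrho'_{\le n})\ge v+\eps_0/C_1$ for all large such $n$, contradicting $\limsup_n\MC(\varrho'_{\le n})=v$. Thus some reload-containing simple cycle with base in $K$ has mean cost at most $v$, and it is the required strongly safe cycle.

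The main obstacle is this last counting argument: one must organise the decomposition of the witnessing run so as to simultaneously control how often reload states are visited and the lengths of the reload-free pieces, the latter being bounded by combining $\Ca$-boundedness with the standing assumption that the relevant part of $\C$ has no zero-cost cycle. By comparison, the pumping step is routine, provided one routes the auxiliary cycle $\eta$ through a reload state so that the constructed runs remain within a finite capacity.
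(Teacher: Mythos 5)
Your proof is correct, but it takes a genuinely different route from the paper's. The paper does not start from an arbitrary optimal run: it invokes the structure already established for Theorem~\ref{thm:alg} (Lemmas~\ref{prop:opt-runs-zero char} and~\ref{prop:opt-runs-char}), which yields an optimal run of the explicit shape $\alpha\cdot\beta\cdot\gamma\cdot\beta^2\cdot\gamma\cdot\beta^4\cdots$ with $\MC(\beta)=\Val_\C^\Ca(s)$ and $\beta$ either of zero cost or rooted at a reload state; a safe cycle is then read off by decomposing $\beta$ into simple cycles, and the ``moreover'' part is settled by a local exchange: picking $\delta$ in the decomposition that contains a reload state and writing $\beta=\xi\cdot\delta\cdot\xi'$, if $\MC(\delta)>\MC(\beta)$ then $\beta'=\xi\cdot\xi'\cdot\xi\cdot\delta\cdot\xi'$ has strictly smaller mean cost and is realizable with a larger capacity, contradicting $\Val_\C^\Ca(s)=\Val_\C(s)$. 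You instead work with an arbitrary optimal run and replace the exchange by a pumping argument (forcing reload-free simple cycles among the recurring states up to mean cost $\geq v$) plus a density count: reload-containing simple cycles must appear at least once per $C_1$ steps of the run, so if they all had mean cost $\geq v+\eps_0$ the running means would stay above $v+\eps_0/C_1$ along infinitely many closed-walk prefixes, contradicting $\limsup_n \MC(\varrho'_{\leq n})=v$. Both arguments are sound; the paper's is shorter because the heavy lifting was already done in Section~\ref{sec:main-thm-proof}, while yours is more self-contained and independent of the precise shape of optimal runs, at the price of the quantitative bookkeeping (the bound on the distance between consecutive reload visits, which combines $\Ca$-boundedness with the assumed absence of zero-cost simple cycles among the recurring states). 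Two cosmetic points: the recurring states of $\varrho'$ form a strongly connected subgraph contained in an SCC of $\C$, not necessarily an SCC itself (inter-reachability is all you use, so this is harmless), and the constant should be $C_1=|R|\cdot(|S|+1)\cdot(\Ca+1)$ to match the length bound on reload-free stretches; neither affects correctness.
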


Indeed, from Lemma~\ref{prop:opt-runs-zero char}
and~\ref{prop:opt-runs-char} it follows that there is a path $\alpha$
and cycles $\beta,\gamma$ such that $\gamma$ contains an accepting
state and either $\cost(\beta)=0$ or $\beta(0)\in R$, and for a run
$\rho=\alpha\cdot\beta\cdot\gamma\cdot\beta^2\cdot\gamma\cdot\beta^4\cdots$
it holds $\MC(\beta)=\Val_\C^\Ca(\rho)=\Val_\C^\Ca(s)$. Note that
$\delta$ and $\gamma$ must be in the same strongly connected component
$C$ of $\C$. In particular, all states of %
$\beta$ are inter-reachable with an accepting state.

Now we consider two cases. Either $\cost(\beta\cdot\gamma)=0$, in which case $\beta\cdot\gamma$ has a sub-path $\delta$ which is a simple cycle of zero cost containing an accepting state -- i.e., $\delta$ is a strongly safe cycle and $\Val_\C^\Ca(s)=0=\MC(\delta)=\minsafe$.  

If $\cost(\beta\cdot\gamma)\neq 0$,
we distinguish two sub-cases. Either $\cost(\beta)=0$. Then $\cost(\gamma)\neq 0$ and it follows that $\beta\cdot\gamma$ contains a reload state. Then every simple sub-cycle $\delta$ of $\beta$ has $\delta(0)$ inter-reachable not only with an accepting state (as shown above), but also with a reload state and hence it is a strongly safe cycle (of zero cost). Here again $\Val_\C^\Ca(s)=0=\MC(\delta)=\minsafe$. 

The second sub-case is $\cost(\beta)\neq 0$. Then $\beta(0)\in R$ and thus all states on $\beta$ are inter-reachable with a reload state. Let $D$ be a decomposition of $\beta$ into simple cycles and $\delta\in D$ be the simple cycle of minimal mean cost. Clearly $\delta$ is a safe cycle and $\MC(\beta)\geq\MC(\delta)$, which finishes the proof of the first part of the claim. Now pick such a simple $\delta\in D$ which contains a reload state contained in $\beta$. Clearly this $\delta$ is strongly safe. We claim that either $\MC(\delta)\leq \Val_\C^\Ca(s)$ or $\Val_\C^\Ca(s)<\Val_\C(s)$, which finishes the proof of the second part of the claim. So suppose that $\MC(\delta)>\Val_\C^\Ca(s)=\MC(\beta)$ and write $\beta=\xi\cdot\delta\cdot\xi'$. Clearly $\MC(\xi\cdot\xi')<\MC(\delta)$ and thus for the cycle $\beta'=\xi\cdot\xi'\cdot \xi \cdot\delta\cdot\xi'$ it holds $\MC(\beta')<\MC(\beta)$. Moreover, the run $\rho'=\alpha\cdot\beta'\cdot\gamma\cdot{\beta'}^2\cdot\gamma\cdot{\beta'}^4\cdots$ is accepting and $\Ca'$-bounded for $\Ca' = \max\{\Ca,\cost{(\beta'\cdot\gamma)}\}$. Also note that $\Val_\C^{\Ca'}(\rho)=\MC(\beta')$ (this can be established via a straightforward computation identical to the one from the proof of Proposition~\ref{prop:opt-runs-char}). Thus, $\Val_{\C}^\Ca(s) =\MC(\beta)>\MC(\beta')\geq \Val_{\C}^{\Ca'}(s)\geq\Val_\C(s)$.

\vspace{2.2mm}
Conversely, suppose that there is a safe cycle and let $\beta$ be the one of minimal mean cost. We show that for every $1\geq \eps>0$ and every capacity $\Ca\geq \lceil(6|S|^2\cmax^2)/\eps\rceil$ there is a run $\rho_\eps$ such that $\Val_\C^\Ca(\rho_\eps)\leq \MC(\beta)+\eps$. From this it immediately follows that $\Val_\C(s)$ is finite, and in combination with the previous paragraph we get $\Val_\C(s)=\minsafe$. 

Let $\alpha$ be a shortest (w.r.t. the number of transitions) path from $s$ to $\beta(0)$. Note that $\cost(\alpha)\leq |S|\cdot\cmax$.
If $\cost(\beta)=0$ and $\beta$ contains an accepting state, then we can take $\rho_\eps:=\alpha\cdot\beta^\omega$, since for every $\Ca\geq|S|\cmax$ this is a $\Ca$-bounded accepting run with $\Val_\C^\Ca(\rho_\eps)=\MC(\beta)$. Otherwise let $\gamma_1$ be a shortest path from $\beta(0)$ to some accepting state $f$, $\gamma_2$ a shortest path from $f$ to some reload state $r$, and $\gamma_3$ a shortest path from $r$ to $\beta_0$, and put $\gamma=\gamma_1\cdot\gamma_2\cdot\gamma_3$. Note that $\cost(\gamma)\leq 3|S|\cmax$. Set $k=\lceil (3|S|\cmax)/\eps\rceil$. It easily follows that $\rho_\eps:=\alpha\cdot(\gamma\cdot\beta^k)^\omega$ is a $\Ca$-bounded accepting path for any $\Ca\geq \lceil(6|S|^2\cmax^2)/\eps\rceil$, since the consumption between two visits of the reload state $r$ on $\gamma$ is bounded by $\cost(\gamma)+k\cdot\cost(\beta)\leq 3|S|\cmax + 3|S|^2\cmax^2/\eps$. Let us compute $\MC(\rho_\eps)$. We have
\begin{align*}
\MC(\rho_\eps) = \MC(\gamma\cdot\beta^k) = \frac{\cost(\gamma) + k\cdot\cost(\beta)}{\len{\gamma} + k\cdot\len{\beta}} \leq \frac{\cost(\gamma)}{k\cdot\len{\beta}} + \MC(\beta).
\end{align*}
Now $\cost(\gamma)/(k\cdot\len{\beta})\leq \cost(\gamma)/k\leq 3|S|\cdot\cmax/k\leq \eps$ as required. 

Now suppose that there is a finite capacity $\Ca$ such that $\Val_{\C}^\Ca(s)=\Val_\C(s)$ ($=\minsafe$, as shown above). From the above claim it immediately follows that there is a strongly safe cycle $\delta$ such that $\MC(\delta)=\minsafe$. 

Conversely, suppose that $\hat\beta$ is a strongly safe cycle with $\MC(\hat\beta)=\Val_\C(s)$. Let $\alpha$ be again a shortest path from $s$ to $\beta(0)$. If $\cost(\hat\beta)=0$ and $\hat\beta$ contains an accepting state, then we again take $\rho=\alpha\cdot\hat{\beta}^{\omega}$ -- this is clearly an $(|S|\cdot\cmax)$-bounded run and $\Val_{\C}^{\Ca}(\rho)=0=\Val_\C(s)$ for any $\Ca \geq |S|\cdot\cmax$. Now suppose that $\cost(\hat\beta)=0$ and $\hat\beta(0)$ is inter-reachable with accepting state $f$ and with an accepting state $r$. Then there exists a cycle $\gamma$ initiated in $\hat\beta(0)$ that contains both $f$ and $r$. Then $\rho:=\alpha\cdot\hat\beta\cdot\gamma\cdot\hat{\beta}^2\cdot\gamma\cdot\hat{\beta}^4$ is a $(3|S|\cmax)$-bounded accepting run with $\Val_\C^{3|S|\cmax}=\MC(\hat\beta)=0=\Val_\C(s)$.

Finally, suppose that $\hat\beta$ contains a reload state and $\hat\beta(0)$ is inter-reachable with an accepting state. Let $\gamma$ be the cycle of minimal length among those initiated in $\hat\beta(0)$ that contain an accepting state. Then $\rho:=\alpha\cdot\hat\beta\cdot\gamma\cdot\hat{\beta}^2\cdot\gamma\cdot\hat{\beta}^4$ is again a $(3|S|\cmax)$-bounded accepting run, where the boundedness now comes from the fact that $\hat\beta$ contains a reload state and thus the end cost of $\hat{\beta}^{k}$ is at most the end cost of $\hat\beta$, which is at most $|S|\cdot\cmax$. Clearly, $\Val_{\C}^{3|S|\cmax}(\rho)=\MC(\beta)=\Val_{\C}(s).$

\qed 
\end{proof}

\subsection{A Proof of Lemma~\ref{lem:limit-algorithms}}

\begin{reflemma}{lem:limit-algorithms}
The existence of a safe (or strongly safe) cycle is decidable in polynomial
time. Further, if a safe (or strongly safe) cycle exists, then there
is a safe (or strongly safe) cycle $\beta$ computable in polynomial time
such that $\MC(\beta) \leq \MC(\beta')$ for every safe (or strongly safe)
cycle $\beta'$.
\end{reflemma}
\begin{proof}
To find a safe cycle $\beta$ of minimal mean cost (or to decide that no such cycle exists), we proceed as follows. First we compute the set $A$ of all states reachable from $s$. Then for every state $f \in F\cap A$ we use the polynomial-time algorithm of Lemma~\ref{lem:cycle-of-zero-cost} to find a cycle of zero cost initiated in $f$. If we find such a cycle $\delta$ for some $f\in F\cap A$, then clearly $\delta$ is a safe cycle of minimal (i.e., zero) mean cost. If we do not find such a cycle, we decompose $\C$ into its strongly connected components (SCCs), using, e.g., the Tarjan's algorithm. For every SCC $C$ that is reachable from $s$ we check, whether $C$ contains both a reload state and an accepting state. If no such component exists, we conclude that there is no safe cycle. Otherwise, for every SCC $C$ that contains both a reload state and an accepting state we compute a  cycle $\delta_\C$ of length at most $|S|$ of minimal mean cost in $C$, using standard polynomial time-algorithm for finding a cycle of the minimal mean cost (see, e.g.,~\cite{DBR:cost-time,DIG:cost-time}). \footnote{Note that given any cycle $\theta$ of minimal mean cost, we can easily extract from $\theta$ a simple cycle $\delta$ such that $\MC(\delta)=\MC(\theta)$. } Then clearly the cycle $\delta_{C^*}$ such that $\MC(\delta_{C^*})=\min \{\MC(\delta_C)\mid C \text{ is a SCC  containing a reload and an accepting state}\}$ is a safe cycle of minimal cost among all safe cycles in $\C$.

For strongly safe cycles we proceed in a similar way. First we check whether there is a cycle of zero cost containing a reachable accepting state using the same approach as in the previous paragraph. If we find such a cycle, than it is a strongly safe cycle of minimal mean cost. Otherwise, we again decompose $\C$ into SCCs and identify those SCCs reachable from $s$ that contain both a reload state and an accepting state. Let $X$ be the set of all such SCCs. If $X=\emptyset$, we immediately get that no strongly safe cycles exist. Otherwise for every SCC $C\in X$ we check, whether there is a cycle of zero cost in $C$, using again the algorithm from Lemma~\ref{lem:cycle-of-zero-cost} (we can also use the aforementioned algorithms for finding a cycle of minimal mean cost). If such a cycle exists for some $C\in X$, it is clearly a strongly safe cycle of minimal mean cost. Otherwise, we have to find a cycle $\delta$ of length at most $|S|$  in some $C\in X$ such that $\delta$ contains a reload state (and we of course need to find a cycle of minimal mean cost among all such cycles). To this end, for every $C\in X$ and every reload state $r\in C\cap R$ we construct a labelled graph $G_r=(V,\bbtran{},L,\ell)$, where $L\subset\Nset_0$ defined as follows:
\begin{itemize}
\item $V = C\times \{0,\dots,|S|\}$,
\item there is an edge $(q,i)\bbtran{a}(q',j)$ in $G_r$, here $0\leq i,j \leq |S|$, whenever $i=j+1$ and $q\tran{a}q'$ is a transition in $\C$,
\item for every $1 \leq i \leq |S|$ there is an edge $(r,i)\bbtran{0}(r,0)$,
\item there are no other edges.
\end{itemize}
Note that $G_r$ does not have to contain a cycle if $C$ does not contain a cycle, which may happen if $C$ contains a single state without a self-loop. If this is the case for all $C\in X$, we get that there are no strongly safe cycles. 
Otherwise note that every cycle in some $G_r$ contains the state $(r,0)$. Moreover, there is a natural many-to-one correspondence between the simple cycles in $G_r$ and cycles of length at most $|S|$ that are initiated in $r$ in $\C$, and this correspondence preserves the mean cost of the cycles. So in order to compute a cycle $\delta$ of minimal mean cost among all cycles that are initiated in a reload state of some $C\in X$, it suffices to compute, for every $C\in X$ and every $r \in C \cap R$ a simple cycle $\delta_r$ of minimal mean cost in $G_r$, using the standard algorithms mentioned above. If we then select $r^*$ such that $\MC(\delta_{r^*})=\min\{\MC(\delta_r) \mid C \in X, r\in C \cap R\}$, then from $\delta_{r^*}$ we can easily compute the corresponding cycle $\delta_{r^*}'$ in $\C$ that is a strongly safe cycle of minimal mean cost among all strongly safe cycles.

The correctness of the algorithm and its polynomial running time are immediate.
\qed
\end{proof}

\subsection{A Proof of Lemma~\ref{lem-limit-rate}}

\begin{reflemma}{lem-limit-rate}
 Let $\cmax$ be the maximal cost of a transition in $\C$. For every
   $\Ca> 4\cdot|S|\cdot\cmax$ we have that
   $$\Val_\C^\Ca(s)-\Val_\C(s)\leq
   \frac{3\cdot|S|\cdot\cmax}{\Ca-4\cdot|S|\cdot \cmax}.$$
\end{reflemma}
\begin{proof}
  The proof employs techniques very similar to those used in the proof
  of Lemma~\ref{lem:lim-val-char}.  If $\Val_\C(s)=\infty$, then the
  lemma is immediate. Otherwise by Lemma~\ref{lem:lim-val-char} there
  is a safe cycle $\beta$ such that $\MC(\beta)=\Val_{\C}(s)$. Let
  $\alpha$ be the path from $s$ to $\beta(0)$ of minimal length. If
  $\cost(\beta)=0$ and $\beta$ contains an accepting state, then for
  every $\Ca\geq |S|\cdot \cmax$ we have
  $\Val_\C^\Ca(s)=\Val_\C^{\Ca}(\alpha\cdot\beta^\omega)=\MC(\beta)=0$,
  and the lemma holds. It remains to consider the case when $\beta(0)$
  is inter-reachable with both a reload state and an accepting
  state. Then let $\gamma_1$ be a shortest (w.r.t. the number of
  transitions) path from $\beta(0)$ to some accepting state $f$,
  $\gamma_2$ a shortest path from $f$ to some reload state $r$, and
  $\gamma_3$ a shortest path from $r$ to $\beta_0$, and put
  $\gamma=\gamma_1\cdot\gamma_2\cdot\gamma_3$. Finally, put $k =
  \lfloor (\Ca-3\cdot|S|\cdot\cmax)/(\cost(\beta)) \rfloor$. Note that
  $k\geq 1$ since $\Ca\geq 4|S|\cmax$. Then
  $\rho:=\alpha\cdot(\gamma\cdot\beta^k)^\omega$ is a $\Ca$-bounded
  accepting run, since the consumption between two visits of the
  reload state $r$ on $\gamma$ is bounded by $3|S|\cmax + k\cdot
  \cost{(\beta)}\leq 3|S|\cmax + \Ca - 3|S|\cmax = \Ca$. Moreover,
\[
\Val_\C^\Ca(\rho)= \MC(\gamma\cdot\beta^\omega) = \frac{\cost(\gamma) + k\cdot\cost(\beta)}{\len{\gamma} + k\cdot\len{\beta}} \leq \frac{\cost(\gamma)}{k\cdot\len{\beta}} + \MC(\beta).
\]
Now 
$$\frac{\cost(\gamma)}{k\cdot\len{\gamma}}\leq \frac{\cost(\gamma)}{\left(\frac{\Ca-3|S|\cmax}{\cost(\beta)}-1\right)\cdot\cost(\beta)}\leq \frac{3|S|\cmax}{\Ca-3|S|\cmax -\cost(\beta)}\leq\frac{3|S|\cmax}{\Ca-4|S|\cmax} $$ as required.
\qed
\end{proof}

\end{document}